\newcommand{\LoadPackagesNow}{}
\newcommand{\LoadPackageLater}[2][]{%
   \g@addto@macro{\LoadPackagesNow}{%
      \usepackage[#1]{#2}%
   }%
}
\g@addto@macro\bfseries{\boldmath}
\numberwithin{equation}{section}
\definecolor{pdfurlcolor}{rgb}{0,0,0.6}
\definecolor{pdffilecolor}{rgb}{0.7,0,0}
\definecolor{pdflinkcolor}{rgb}{0,0,0.6}
\definecolor{pdfcitecolor}{rgb}{0,0,0.6}
\newcommand{\ifargdef}[3][{}]{\ifthenelse{\equal{#2}{}}{#1}{#3}}
\newenvironment{highlight}{\vspace{.5\baselineskip}\begin{addmargin}[2em]{2em}\itshape}{\end{addmargin}\vspace{.5\baselineskip}}
\newenvironment{properties}[2][2em]
{\begin{enumerate}[label={\textsc{(#2\arabic*)}},leftmargin=#1]}
{\end{enumerate}} 
\newenvironment{rmklist}
{\begin{enumerate}[label={(\arabic*)},itemindent=2em,leftmargin=0em]}
{\end{enumerate}}
\newenvironment{thmlist}
{\begin{enumerate}[label={(\arabic*)}]}
{\end{enumerate}}
\newenvironment{deflist}
{\begin{enumerate}[label={(\arabic*)}]}
{\end{enumerate}}
\renewcommand{\qedsymbol}{$_\blacksquare$}
\providecommand{\qedhere}{\hfill\qedsymbol}
\newtheoremstyle{claim}
	{\topsep}{\topsep}%
	{\itshape}
	{}
	{}
	{}
	{.5em}
	{{\bfseries\boldmath\thmname{#1} \thmnumber{#2}} \thmnote{(#3)}}
\newtheoremstyle{definition}
	{\topsep}{\topsep}%
	{}
	{}
	{}
	{}
	{.5em}
	{\textbf{\thmname{#1} \thmnumber{#2}} \thmnote{(#3)}}
\newtheoremstyle{algorithm}
	{\topsep}{\topsep}%
	{}
	{}
	{\bfseries\boldmath}
	{}
	{.5em}
	{\thmname{#1} \thmnumber{#2} \thmnote{(#3)}}
\declaretheorem[style=claim,numberwithin=section]{theorem}
\declaretheorem[style=claim,sibling=theorem]{lemma}
\declaretheorem[style=claim,sibling=theorem]{proposition}
\declaretheorem[style=definition,sibling=theorem]{definition}
\declaretheorem[style=definition,sibling=theorem]{model}
\declaretheorem[style=definition,sibling=theorem,qed=$\Diamond$]{remark}
\declaretheorem[style=definition,sibling=theorem,qed=$\Diamond$]{example}
\newcommand{\opleft}[1]{\mathopen{}\left#1}
\newcommand{\opright}[1]{\right#1\mathclose{}}
\newcommandx{\braces}[4]{%
\ifstrequal{#3}{normal}{#1#4#2}{%
\ifstrequal{#3}{auto}{\left#1#4\right#2}{%
\ifstrequal{#3}{opauto}{\opleft#1#4\opright#2}{%
#3#1#4#3#2}}}%
}
\newcommandx{\opannot}[3][3=\downarrow]{\stackrel{\mathclap{\substack{#1 \\ #3 \vspace{2pt}}}}{#2}}
\newcommandx{\lineannot}[3][3=\rightarrow]{\mathllap{\boxed{\text{\textsmaller{#1}}} #3} #2}
\newcommandx{\multilineannot}[4][4=\rightarrow]{\mathllap{\boxed{\parbox{#1}{\RaggedRight\textsmaller{#2}}} #4} #3}
\newcommand{\N}{\mathbb{N}} 
\newcommand{\R}{\mathbb{R}} 
\newcommand{\suchthat}[1][normal]{\ifstrequal{#1}{normal}{\mid}{#1|}} 
\newcommand{\cardinality}[1]{\abs{#1}} 
\newcommand{\intersec}{\cap} 
\newcommand{\dist}[2]{\operatorname{dist}(#1, #2)} 
\newcommandx{\intvcl}[3][1=normal]{\braces{[}{]}{#1}{#2, #3}} 
\newcommandx{\intvop}[3][1=normal]{\braces{(}{)}{#1}{#2, #3}} 
\newcommandx{\intvclop}[3][1=normal]{\braces{[}{)}{#1}{#2, #3}} 
\newcommandx{\intvopcl}[3][1=normal]{\braces{(}{]}{#1}{#2, #3}} 
\DeclareMathOperator*{\argmin}{argmin} 
\DeclareMathOperator{\sign}{sign}
\newcommandx{\abs}[2][1=normal]{\braces{\lvert}{\rvert}{#1}{#2}} 
\newcommandx{\ceil}[2][1=normal]{\braces{\lceil}{\rceil}{#1}{#2}} 
\newcommandx{\floor}[2][1=normal]{\braces{\lfloor}{\rfloor}{#1}{#2}} 
\newcommandx{\round}[2][1=normal]{\braces{[}{]}{#1}{#2}} 
\newcommandx{\der}[1]{D^{#1}} 
\newcommandx{\gradient}{\nabla} 
\newcommandx{\partder}[4][1={},4={}]{\frac{\partial^{#4} #2}{\partial #3^{#4}}\ifargdef{#1}{\Big|_{#1}}} 
\newcommandx{\integ}[4][1={},2={}]{\int_{#1}^{#2} #3 \, #4} 
\newcommandx{\asympffaster}[2][1=normal]{o\braces{(}{)}{#1}{#2}} 
\newcommandx{\asympfaster}[2][1=normal]{O\braces{(}{)}{#1}{#2}} 
\newcommandx{\asympeq}[2][1=normal]{\Theta\braces{(}{)}{#1}{#2}} 
\newcommandx{\asympsslower}[2][1=normal]{\omega\braces{(}{)}{#1}{#2}} 
\newcommandx{\asympslower}[2][1=normal]{\Omega\braces{(}{)}{#1}{#2}} 
\DeclareMathOperator{\Id}{Id} 
\newcommandx{\norm}[2][1=normal]{\braces{\|}{\|}{#1}{#2}} 
\renewcommandx{\sp}[3][1=normal]{\braces{\langle}{\rangle}{#1}{#2, #3}} 
\newcommandx{\End}[2][2={}]{\mathcal{L}\opleft( #1 \ifargdef{#2}{, #2} \opright)} 
\newcommand{\orthcompl}[1]{{#1}^\perp} 
\DeclareMathOperator{\spann}{\operatorname{span}} 
\newcommand{\T}{\mathsf{T}} 
\renewcommand{\vec}[1]{\boldsymbol{#1}} 
\newcommandx{\measure}[2][1=normal]{\operatorname{vol}\braces{(}{)}{#1}{#2}} 
\DeclareMathOperator{\supp}{supp} 
\newcommandx{\Leb}[3][1={},3=normal]{L^{#2}\ifargdef{#1}{\braces{(}{)}{#3}{#1}}{}} 
\newcommandx{\Lebnorm}[4][1=normal,3={2},4={}]{\norm[#1]{#2}_{#3}} 
\renewcommandx{\l}[3][1={},3=normal]{\ell^{#2}\ifargdef{#1}{\braces{(}{)}{#3}{#1}}} 
\newcommandx{\lnorm}[4][1=normal,3={2},4={}]{\norm[#1]{#2}_{#3}} 
\newcommandx{\Smooth}[4][1={},3={},4=normal]{C_{#3}^{#2}\ifargdef{#1}{\braces{(}{)}{#4}{#1}}} 
\newcommandx{\Schwartz}[2][1={},2=normal]{\mathscr{S}\ifargdef{#1}{\braces{(}{)}{#2}{#1}}} 
\newcommandx{\Schwartzpoly}[2][1=normal]{\braces{\langle}{\rangle}{#1}{\abs[#1]{#2}} } 
\newcommandx{\Tempdistr}[2][1={},2=normal]{\mathscr{S}'\ifargdef{#1}{\braces{(}{)}{#2}{#1}}} 
\newcommandx{\distrinp}[3][1=normal]{\braces{\langle}{\rangle}{#1}{#2, #3}} 
\newcommandx{\ft}[3][1=default,2=auto]{
\ifstrequal{#1}{default}{\widehat{#3}}{
\ifstrequal{#1}{long}{{\braces{(}{)}{#2}{#3}}^{\wedge}}{}}} 
\newcommandx{\ift}[3][1=default,2=auto]{
\ifstrequal{#1}{default}{\check{#3}}{
\ifstrequal{#1}{long}{{\braces{(}{)}{#2}{#3}}^{\vee}}{}}} 
\renewcommand{\vec}[1]{\bm{#1}}
\newcommand{\hybW}{\vec{W}}
\newcommand{\hybw}{\vec{w}}
\newcommand{\U}{\vec{U}}
\newcommand{\uv}{\vec{u}}
\newcommand{\vv}{\vec{v}}
\renewcommand{\a}{\vec{a}}
\newcommand{\ahyb}{\tilde{\vec{a}}}
\newcommand{\Ahyb}{\tilde{\vec{A}}}
\newcommand{\adir}{\bar{\vec{a}}}
\newcommand{\alift}{\vec{A}}
\newcommand{\agen}{\vec{a}}
\newcommand{\x}{\vec{x}}
\newcommand{\xgen}{\vec{x}}
\newcommand{\X}{\vec{X}}
\newcommand{\grtr}{\vec{x}_0}
\newcommand{\grtrgen}{\x^\natural}
\newcommand{\hybgrtr}{\tilde{\vec{\x}}_0}
\newcommand{\suppx}{\mathcal{S}}
\newcommandx{\solu}[1][1={}]{\ifargdef[\vec{\hat{x}}]{#1}{\hat{#1}}}
\newcommand{\sset}{K}
\newcommand{\ssetalt}{L}
\newcommand{\h}{\vec{h}}
\newcommand{\w}{\vec{w}}
\newcommand{\W}{\vec{W}}
\newcommand{\hgen}{\vec{h}}
\newcommand{\dict}{\vec{D}}
\newcommand{\dictatom}{\vec{d}}
\newcommand{\proj}{\vec{P}}
\newcommand{\loss}{\mathcal{L}}
\newcommand{\score}{\mathcal{S}}
\newcommand{\lossemp}[1][{}]{\bar{\mathcal{L}}_{#1}}
\newcommand{\multiplterm}[2]{\mathcal{M}_{#1}(#2)}
\newcommand{\quadrterm}[2]{\mathcal{Q}_{#1}(#2)}
\newcommand{\fobs}{f}
\newcommand{\scalfac}{\mu}
\newcommand{\hybscalfac}{\tilde\mu}
\newcommand{\scalvec}{\vec{\mu}}
\newcommand{\hybscalvec}{\tilde{\vec{\mu}}}
\newcommand{\modeldev}{\sigma}
\newcommand{\modelcovar}{\rho}
\newcommand{\orthmodelbias}{\vec{\rho}}
\newcommand{\Err}{\operatorname{Err}}
\newcommand{\vnull}{\vec{0}}
\newcommand{\I}[1]{\vec{I}_{#1}}
\newcommandx{\prob}[2][1={},2=normal]{\mathbb{P}\ifargdef{#1}{\braces{[}{]}{#2}{#1}}}
\newcommandx{\mean}[2][1={},2=normal]{\mathbb{E}\ifargdef{#1}{\braces{[}{]}{#2}{#1}}}
\newcommandx{\var}[2][1={},2=normal]{\mathbb{V}\ifargdef{#1}{\braces{[}{]}{#2}{#1}}}
\newcommandx{\normsubg}[3][1=normal,3={2}]{\norm[#1]{#2}_{\psi_2}} 
\newcommand{\subgparam}{\kappa}
\newcommand{\distributed}{\sim}
\newcommand{\Normdistr}[2]{\mathcal{N}(#1, #2)}
\newcommand{\gaussian}{\vec{g}}
\newcommand{\gaussianuniv}{g}
\newcommand{\Covmatr}{\vec{\Sigma}}
\newcommand{\pospart}[1]{\left[#1\right]_+}
\newcommandx{\ball}[2][1={},2={}]{B_{#1}^{#2}}
\DeclareMathOperator{\convhull}{conv}
\renewcommand{\S}{S}
\newcommand{\meanwidth}[2][{}]{w_{#1}(#2)}
\newcommand{\cone}[2]{\mathcal{C}(#1,#2)}
\newcommandx{\opnorm}[3][1=normal,3={2}]{\norm[#1]{#2}_{\operatorname{op}}}
\newcommand{\revision}[1]{#1}
\begin{document}

\renewcommand*{\thefootnote}{\fnsymbol{footnote}}
\pagestyle{scrheadings}

\begin{center}
	\bfseries\larger[3]{Recovering Structured Data From \\ Superimposed Non-Linear Measurements}
\end{center}

\vspace{1\baselineskip}

\begin{addmargin}[2em]{2em}
\begin{center}
\noindent{\normalsize\bfseries{Martin Genzel\footnote{Technische Universit\"at Berlin, Department of Mathematics, 10623 Berlin, Germany, E-Mail: \href{mailto:genzel@math.tu-berlin.de}{\texttt{genzel@math.tu-berlin.de}}} \qquad Peter Jung\footnote{Technische Universit\"at Berlin, Communications and Information Theory Group, 10587 Berlin, Germany, E-Mail: \href{mailto:peter.jung@tu-berlin.de}{\texttt{peter.jung@tu-berlin.de}}}}}
\end{center}

%

\vspace{\baselineskip}

\vspace{1\baselineskip}
{\smaller
\noindent\textbf{Abstract.} 
This work deals with the problem of distributed data acquisition under non-linear communication constraints.
More specifically, we consider a model setup where $M$ \emph{distributed nodes} take individual measurements of an unknown \emph{structured source vector} $\grtr \in \R^n$, communicating their readings simultaneously to a \emph{central receiver}.
Since this procedure involves collisions and is usually imperfect, the receiver measures a \emph{superposition of non-linearly distorted signals}. 
In a first step, we will show that an $s$-sparse vector $\grtr$ can be successfully recovered from $\asympfaster{s \cdot\log(2n/s)}$ of such superimposed measurements, using a traditional Lasso estimator that does not rely on any knowledge about the non-linear corruptions.
This \emph{direct} method however fails to work for several ``uncalibrated'' system configurations. These \emph{blind reconstruction tasks} can be easily handled with the \mbox{$\l{1,2}$-Group-Lasso}, but coming along with an increased sampling rate of $\asympfaster{s\cdot \max\{M, \log(2n/s) \}}$ observations --- in fact, the purpose of this \emph{lifting} strategy is to extend a certain class of \emph{bilinear inverse problems} to \emph{non-linear} acquisition.
Our two algorithmic approaches are a special instance of a more abstract framework which includes sub-Gaussian measurement designs as well as general (convex) structural constraints. These results are of independent interest for various recovery and learning tasks, as they apply to arbitrary non-linear observation models.
Finally, to illustrate the practical scope of our theoretical findings, an application to \emph{wireless sensor networks} is discussed, which actually serves as the prototypical example of our methodology.

\vspace{.5\baselineskip}
\noindent\textbf{Key words.}
Distributed and non-linearly distorted measurements, compressed sensing, structured and blind recovery, Group-Lasso, wireless sensor networks

\vspace{.5\baselineskip}
\noindent Parts of this work have been presented in \cite{GJ17:sampta} at the 12th International Conference on ``Sampling Theory and Applications'' (SampTA 2017) and at the conference on ``Signal Processing with Adaptive Sparse Structured Representations'' (SPARS 2017).

The current version of the manuscript has been accepted to \emph{IEEE Transactions in Information Theory}, Digital Object Identifier 10.1109/TIT.2019.2932426

\textcopyright\ 2019 IEEE. Personal use of this material is permitted. Permission from IEEE must be obtained for all other uses, in any current or future media, including reprinting/republishing this material for advertising or promotional purposes, creating new collective works, for resale or redistribution to servers or lists, or reuse of any copyrighted
component of this work in other works.

}
\end{addmargin}
\newcommand{\shortauthor}{Genzel \& Jung: Recovery From Superimposed Non-Linear Measurements}

\renewcommand*{\thefootnote}{\arabic{footnote}}
\setcounter{footnote}{0}

\thispagestyle{plain}


\section{Introduction}
\label{sec:intro}

Initiated by the seminal works in \emph{compressed sensing} \cite{Candes2005,Donoho2006a,candes:stablesignalrecovery}, the considerable research on inverse data mining during the last decade has fundamentally changed our viewpoint on how to exploit structure in reconstruction problems. 
A broad variety of analytical results has shown that the sampling and storage complexity of many recovery methods can be dramatically reduced if the structure of the unknown data is explicitly taken into account.
The key objective of these approaches is to retrieve an unknown \emph{source vector} $\grtr \in \R^n$ from a collection of \emph{linear and non-adaptive measurements}
\begin{equation} \label{eq:intro:linmeas}
	\{\sp{\a_i}{\grtr}\}_{i=1}^m,
\end{equation}
where the \emph{measurement vectors} $\{\a_i\}_{i=1}^m \subset \R^n$ are supposed to be known.
While this problem is generally ill-posed as long as $m < n$, reliable estimates of $\grtr$ are often possible with $m \ll n$ if $\grtr$ carries some additional structure.
Perhaps, the most popular structural constraint is \emph{sparsity}, i.e., only a very few entries of $\grtr$ are non-zero.
For this situation, there exist numerous convex and greedy recovery algorithms, which do enjoy both provable performance guarantees and efficient implementations.
Moreover, it has turned out that many available principles can be naturally extended to concepts beyond sparse vectors, e.g., group- and tree-sparsity,
low-rankness, compressibility, or atomic representations. This universality in fact makes compressed sensing highly attractive to many practical challenges and particularly explains its great success.

While a large portion of research still focuses on the traditional linear setup of \eqref{eq:intro:linmeas},
many real-world applications come along with further restrictions on the measurement process.
This has led to two important lines of research in compressed sensing which have received considerable attention in the last years: 
\begin{deflist}
\item
	\emph{Non-linear acquisition schemes.}
	The assumption of perfect linear measurements is quite restrictive for many sensing devices used in practice, even if additive noise is permitted.
	In fact, it might be more realistic to consider a \emph{single-index model} of the form
	\begin{equation}\label{eq:intro:singleindex}
		\{\fobs(\sp{\a_i}{\grtr})\}_{i=1}^m.
	\end{equation}
	For instance, an analog-to-digital conversion may lead to quantized measurements. In its most extreme case, this corresponds to \emph{$1$-bit compressed sensing} \cite{boufounos2008onebit}, where $\fobs(v) = \sign(v)$ returns binary outputs.
	Another very important example is (sparse) \emph{phase retrieval}, where just the magnitudes of the measurements are given, i.e., $\fobs(v) = \abs{v}$.
	Apart from that, the function $\fobs$ is oftentimes only partially known, which could be due to uncertainties in the hardware configuration.
\item
	\emph{Distributed observations.}
	In many realistic applications, one cannot expect that each individual measurement of \eqref{eq:intro:singleindex} can be accessed at any time and any precision. The additional costs of storing and communicating observations often forces engineers to make use of \emph{distributed data acquisition architectures}.
	Mathematically, we may assume that $M$ distributed \emph{sensing nodes} are taking individual (non-linearly distorted) measurements of $\grtr$ \emph{in parallel}, meanwhile contributing to the overall measurement process in a certain way:
	\begin{equation}\label{eq:intro:distrmeas}
		\Big\{F \Big( \fobs_1(\sp{\a_i^1}{\grtr}), \dots, \fobs_M(\sp{\a_i^M}{\grtr}) \Big) \Big\}_{i=1}^m.
	\end{equation}
	Here, the ``fusion function'' $F \colon \R^M \to \R$ specifies how each single node contributes to the $i$-th measurement step. Adapting the notation from above, $\fobs_j \colon \R \to \R$ and $\a_i^j \in \R^n$ represent the (possibly unknown) non-linearity and $i$-th measurement vector of the $j$-th node, respectively.
	
	In contrast, the communication of all observations to a central entity in a \emph{sequential} manner would require $m \cdot M$ measurement steps in total, particularly coming along with an additional scheduling and control overhead. Hence, the restrictions imposed by distributed architectures may drastically change our way of thinking about many sensing tasks.
\end{deflist}

One of the prototypical applications of such distributed observation schemes are \emph{wireless sensor networks}, which form a state-of-the-art approach to many types of environmental monitoring problems. In this setup, each of the $M$ nodes corresponds to an autonomous \emph{sensor unit}, acquiring
$m$ individual measurements of a (structured) source $\grtr\in \R^n$. For example, one could think of measurements of a temperature field at different locations whereby the
global spatial fluctuation is specified by the vector $\grtr$. 
All devices transmit simultaneously to a \emph{central receiver},
leading to \emph{additive} collisions. 
Since this process is imperfect, mainly caused
by low-quality sensors and the wireless channel, the receiver eventually
measures a \emph{superposition} of corrupted signals; see Figure~\ref{fig:sensornetwork} for an illustration (more details of this specific application are presented in Section~\ref{sec:practice}).

\begin{figure}
  \centering
  \includegraphics[width=.6\linewidth]{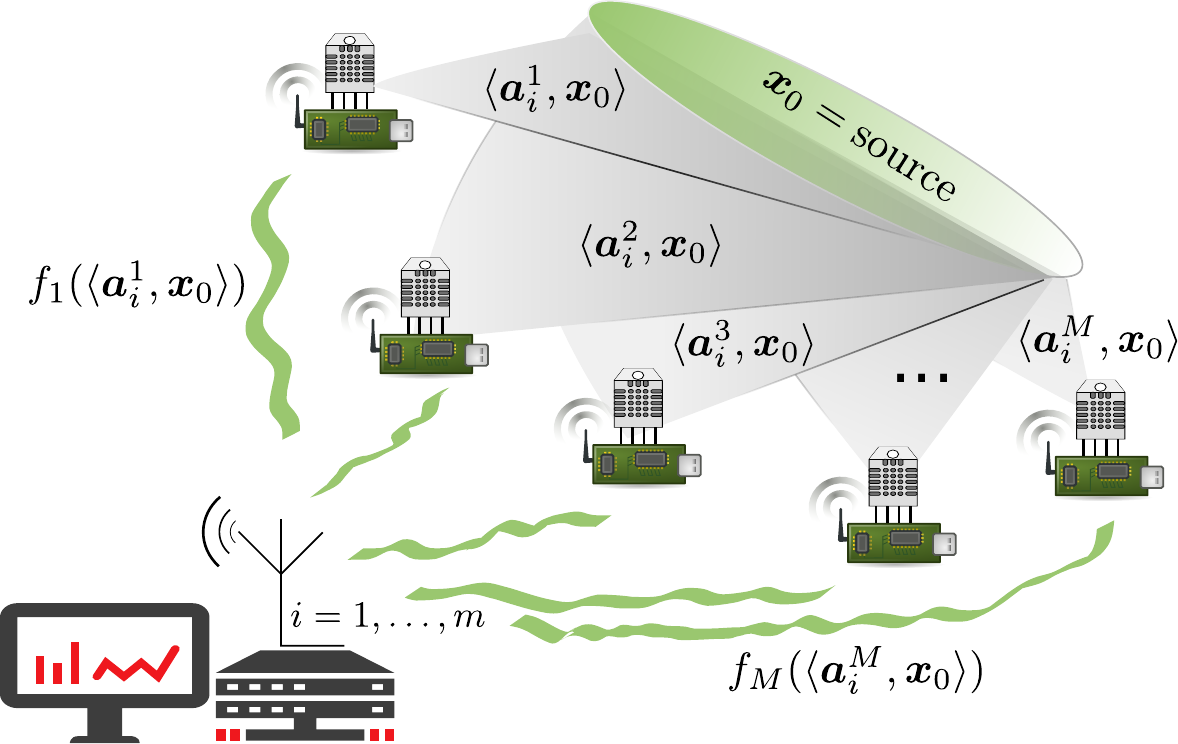}
  \caption{A schematic sensor network. Each wireless sensor node $j = 1,\dots, M$ acquires $i = 1,\dots,m$ individual linear measurements of a source vector $\grtr \in \R^n$ using different ``viewpoints'' $\a_i^j \in \R^n$. These measurements are simultaneously transmitted to a central receiver for recovery. Thereby, the sensor readings $\{\sp{\a_i^j}{\grtr}\}_{i=1}^m$ are affected by unknown non-linear distortions $\fobs_j\colon \R \to \R$, typically caused by hardware imperfections and the wireless channel. The fusion center finally receives the superposition (sum) of these autonomous measurements.
  }
  \label{fig:sensornetwork}
\end{figure}

The superposition principle of wireless sensor networks indicates that computing the sum of signals is a quite natural choice of the fusion function in \eqref{eq:intro:distrmeas}, i.e.,
\begin{equation}
	F(v_1, \dots, v_M) = v_1 + \dots + v_M.
\end{equation}
In fact, this important example motivates the distributed observation model that we focus on in this work, namely \emph{superimposed, non-linearly distorted measurements}:
\begin{equation} \label{eq:intro:measurements} 
	y_i = \sum_{j = 1}^M \fobs_j (\sp{\a_i^j}{\grtr}) + e_i, \qquad i = 1, \dots, m,
\end{equation}
where $e_i \in \R$ is additive \emph{noise}, which may also disturb the acquisition process.

Our ultimate goal is now to efficiently recover the (structured) source vector $\grtr \in \R^n$ from a given
measurement ensemble $\{(\{ \a_i^j \}_{j = 1}^M, y_i)\}_{i = 1}^m$. We
would like to emphasize that the exact behavior of the non-linear
distortions $\fobs_j$ is typically unknown, so that our reconstruction
methods will not require explicit knowledge of them.  These
challenges give rise to the following problem issues that we would like to address
in this paper:
\begin{properties}[3em]{Q}
\item\label{question:intro:samplecomplexity} How many (superimposed) measurements $m$
  are required for highly accurate and stable recovery?  In how far
  can we exploit the underlying structure of the source $\grtr$, such
  as sparsity?
\item\label{question:intro:nonlinear} What is the impact of the
  unknown non-linearities? Under which conditions on
  $\fobs_1, \dots, \fobs_M$ can we expect a similar
  performance as in the linear case?
\item\label{question:intro:nodecount} What role is played by the total node
  count $M$? Is it always beneficial to increase the node count or is there a trade-off which we have to take into account?
\end{properties}
We point out that these questions are highly relevant for designing wireless sensor networks, as the number of used nodes is often driven by a compromise between costs and quality.

\subsection{Algorithmic Approaches}
\label{subsec:intro:algo}

The basic idea behind our methodology is based on the classical approach of ``linearizing'' the underlying measurement process, thereby treating the model mismatch as noise.
\revision{In this context, the traditional Lasso \cite{tibshirani1996lasso} has recently proven very useful for high-dimensional estimation problems with non-linear observations \cite{plan2015lasso,thrampoulidis2015lasso}, though the underlying ideas date back to a much older work by Brillinger \cite{Brillinger1982}.}
As we will see next, such a strategy works out for the superimposed measurement scheme of \eqref{eq:intro:measurements}.
For the sake of clarity, the following two methods are based on $\l{1}$-constraints, tailored to the popular case of \emph{sparse} source models.
The generalization to more sophisticated structural assumptions will be elaborated in Section~\ref{sec:extensions}.

\subsubsection*{The Direct Method}

In the linear setting, i.e., $\fobs_j = \Id$, the model of \eqref{eq:intro:measurements} would degenerate to
$y_i = \sp{\adir_i}{\grtr} + e_i$ where $\adir_i \coloneqq \sum_{j = 1}^M \a_i^j$, $i = 1, \dots, m$, are \emph{superimposed measurement vectors}.
While Lasso-type estimators have proven to perform very robustly in such a simple situation, we may hope that this approach even succeeds in the non-linear case.
This naive idea of ``mimicking the linear counterpart'' is precisely what our first method is doing:

\vspace{.5\baselineskip}
\begin{algorithm}[H]
  \caption{Direct Method}
  \label{algo:direct}
  \Input{Measurement ensemble $\{(\{ \a_i^j \}_{j = 1}^M, y_i)\}_{i = 1}^m$, sparsity parameter $R > 0$}
  \Output{Estimated source vector $\solu \in \R^n$}
  \BlankLine
  \SetKwBlock{Compute}{Compute}{}
  \Compute{\BlankLine
    \nl Compute superimposed measurement vectors:
    \begin{equation}
    	\adir_i = \sum_{j = 1}^M \a_i^j, \quad i = 1, \dots, m.
    \end{equation}
    \BlankLine
    \nl Solve the Lasso and pick any minimizer:
    \begin{equation}
      \label{eq:direct:lasso}\tag{$P_R^\text{Dir}$}
      \solu = \argmin_{\x \in \R^n} \tfrac{1}{2m} \sum_{i = 1}^m (y_i - \sp{\adir_i}{\x})^2 \quad \text{s.t. $\lnorm{\x}[1] \leq R$.}
    \end{equation}
  }
\end{algorithm}
\vspace{.5\baselineskip}

The only tuning parameter that needs to be (adaptively) chosen is $R > 0$, controlling the level of sparsity of the minimizer.  
Remarkably, Algorithm~\ref{algo:direct} does neither explicitly
depend on the non-linearities $\fobs_j$ nor on the number of sensors
$M$.\footnote{Even though the non-linearities $\fobs_j$ could be known, incorporating them directly into \eqref{eq:direct:lasso} would typically lead to a challenging \emph{non-convex} problem.} 
The individual measurement vectors $\a_i^j$ do not have to be known to the optimization program \eqref{eq:direct:lasso}, implying that the overall computational costs of \eqref{eq:direct:lasso} will not increase as $M$ grows. 
From a practical perspective, the specific identities of the nodes are not relevant to this method, which has important consequences for network planning and maintenance issues (cf. Section~\ref{sec:practice}).

\subsubsection*{The Lifting Method}

Due to limited knowledge of the non-linear distortions in \eqref{eq:intro:measurements}, there are some important situations where \eqref{eq:direct:lasso} fails to work.
For example, if the wireless channel of a sensor network changes rapidly, the non-linearities $\fobs_j$ may involve unknown sign-changes $\sp{\a_i^j}{\grtr} \mapsto - \sp{\a_i^j}{\grtr}$. Such a scenario is problematic for Algorithm~\ref{algo:direct}, since a simple superposition $\adir_i = \sum_{j = 1}^M \a_i^j$ does not account for possible sign-flips.
As a way out, one may rather try fit each node of \eqref{eq:intro:measurements} \emph{individually}. This strategy leads to the following algorithm:

\vspace{.5\baselineskip}
\begin{algorithm}[H]
  \caption{Lifting Method}
  \label{algo:lifting}
  \Input{Measurement ensemble $\{(\{ \a_i^j \}_{j = 1}^M, y_i)\}_{i = 1}^m$, sparsity parameter $R > 0$}
  \Output{Estimated source vectors $\solu^1,\dots,\solu^M \in \R^n$}
  \BlankLine
  \SetKwBlock{Compute}{Compute}{}
  \Compute{\BlankLine Solve the Group-Lasso and pick any minimizer:	
    \begin{equation}\label{eq:lifting:l1l2lasso}\tag{$P_R^\text{Lift}$}
      [\solu^1\cdots\solu^M] = \argmin_{\substack{[\x^1\cdots \x^M] \\ \in \R^{n\times M}}} \ \tfrac{1}{2m} \sum_{i = 1}^m
      \Big(y_i - \sum_{j = 1}^M\sp{\a_i^j}{\x^j}\Big)^2 \quad
      \text{s.t. $\lnorm[\big]{[\x^1 \cdots \x^M]}[1,2] \leq R$.}\quad
    \end{equation}
    \BlankLine\revision{\emph{Optional.} Compute the dominating left singular vector $\solu \in \S^{n-1}$ of the matrix $\solu[\X] \coloneqq [\solu^1\cdots\solu^M] \in \R^{n \times M}$ to obtain an estimate of the normalized source vector $\grtr / \lnorm{\grtr}$, up to a possible sign-change.}
  }
\end{algorithm}
\vspace{.5\baselineskip}

The constraint of \eqref{eq:lifting:l1l2lasso} involves the
$\l{1,2}$-norm of the matrix $[\x^1 \cdots \x^M] \in \R^{n \times M}$ which is
defined by
\begin{equation}
  \lnorm[\big]{[\x^1 \cdots \x^M]}[1,2] \coloneqq \sum_{l = 1}^n \Big( \sum_{j = 1}^M \abs{x_l^j}^2 \Big)^{1/2},
\end{equation}
where $x_l^j$ denotes the $l$-th entry of $\x^j \in \R^n$.  The purpose
of this $\l{2}$-group constraint is to enforce a certain ``coupling''
between all vectors, since every node of \eqref{eq:intro:measurements}
is actually supposed to communicate the same source $\grtr$.  Compared to the direct
method, the optimization program of Algorithm~\ref{algo:lifting}
operates in the higher dimensional (``lifted'') matrix space of
$\R^{n\times M}$, which also explains the term ``lifting method.''
This feature comes along with additional computational burdens, but we
will see in Section~\ref{sec:results} that the flexibility of
\eqref{eq:lifting:l1l2lasso} in turn allows us to handle more difficult
situations than \eqref{eq:direct:lasso}.
\revision{Finally, we note that the optional step in Algorithm~\ref{algo:lifting} provides a common strategy to compute an explicit estimator of (the direction of) $\grtr$ from the solution of \eqref{eq:lifting:l1l2lasso}.}

\subsection{Main Contributions}
\label{subsec:intro:contrib}

One of the major concerns of this work is to analyze the proposed methods (Algorithm~\ref{algo:direct} and Algorithm~\ref{algo:lifting}) with respect to the problem questions stated in \ref{question:intro:samplecomplexity}--\ref{question:intro:nodecount}. 
In Section~\ref{sec:results}, we first consider the relatively simple situation of i.i.d.\ Gaussian measurement vectors and sparse source vectors, showing that recovery from superimposed, non-linearly distorted measurements \eqref{eq:intro:measurements} becomes indeed feasible. Very roughly speaking, it will turn out that an $s$-sparse vector $\grtr \in \R^n$ can be estimated with accuracy $\delta \in \intvopcl{0}{1}$ if
\begin{equation}
	m \gtrsim \delta^{-2} \cdot s \cdot \log(\tfrac{2n}{s}) \quad \text{using the direct method (cf. Theorem~\ref{thm:results:direct}),}
\end{equation}
and
\begin{equation}
	m \gtrsim \delta^{-2} \cdot s \cdot \max\{M, \log(\tfrac{2n}{s})\} \quad \text{using the lifting method (cf. Theorem~\ref{thm:results:lifting}).}
\end{equation}
These informal statements already give us a first answer to \ref{question:intro:samplecomplexity} and \ref{question:intro:nodecount}: The sample complexity of the direct method \eqref{eq:direct:lasso} does not depend on the node count at all, whereas the respective rate of the lifting method \eqref{eq:lifting:l1l2lasso} scales linearly in $M$. Hence, it appears that the direct approach is superior, but in fact, several important configurations of $\fobs_1, \dots, \fobs_M$ are (implicitly) excluded by Theorem~\ref{thm:results:direct} (cf. \ref{question:intro:nonlinear}).  On the other hand, the lifting method is able to handle most of these situations, with the price of taking more measurements as $M$ grows.

The proofs of the guarantees from Section~\ref{sec:results} are actually an application of a substantially more general framework that is developed in Section~\ref{sec:proofs}.
More specifically, we permit arbitrary non-linear observations, sub-Gaussian measurement vectors as well as general convex constraints (cf. Model~\ref{model:proofs:generalobs} and Theorem~\ref{thm:proofs:abstractrecovery}).
Since these results go far beyond the specific model of superimposed measurements, they may be of independent interest in various recovery and learning tasks.
The flexibility of our methodology particularly allows us to unify the two methods from Subsection~\ref{subsec:intro:algo} into a \emph{hybrid method} in Section~\ref{sec:extensions} (cf. Algorithm~\ref{algo:hybrid}), which is useful for incorporating \emph{prior knowledge} about the model configuration.

\subsection{Related Literature}

We have already pointed out in the introductory part that, in order to model many real-world applications (e.g., wireless sensor networks), it is substantial to take account of both \emph{distributed observation schemes} and \emph{non-linear distortions} at the same time.  However, most recent approaches from the literature do only focus on either one of these two problems.

The setting of single-index models \eqref{eq:intro:singleindex}\footnote{This is equivalent to \eqref{eq:intro:measurements} with $M = 1$.} --- as natural extension of classical linear compressed sensing --- has gained increasing attention within the past years. One branch of the recent literature has focused on the situation where the non-linearity is known in advance, so that its specific structure can be directly exploited, e.g., see \cite{yang2015sparse,mei2016landscape} and the references therein. Our problem setup unfortunately forbids such an assumption, implying that we have to treat any non-linear perturbation as an additional source of uncertainty. Therefore, this work is more closely related to a different line of research in which the estimator does not rely on knowledge of $\fobs$; see \cite{plan2014highdim,plan2015lasso,thrampoulidis2015lasso,genzel2016estimation,goldstein2016structured,oymak2016fast}.
The proofs of our main results are in fact based on different tools from empirical process theory, which particularly allows us to overcome several shortcomings of these findings, such as the limitation to Gaussian measurement vectors; see Remark~\ref{rmk:app:abstractrecovery:techniques} for a more detailed discussion.

Regarding distributed acquisition, the superimposed model of \eqref{eq:intro:measurements} generalizes the linear case in which the non-linearities $\fobs_j$ just correspond to rescaling:
\begin{equation} \label{eq:intro:bilinearmodel} 
	y_i = \sum_{j = 1}^M h_j \sp{\a_i^j}{\grtr} + e_i, \qquad i = 1, \dots, m,
\end{equation}
where $h_1 ,\dots, h_M \in \R$ are \emph{unknown} scalar factors. The task of recovering both $\grtr \in \R^n$ and $\h \coloneqq (h_1 ,\dots, h_M) \in \R^M$ is actually a \emph{bilinear inverse problem}. Such problems typically occur in blind reconstruction tasks where the linear system model is not precisely known, for instance, in non-coherent sporadic communication \cite{Jung2014}. Taking the lifting perspective, this can be regarded as recovery of the outer product $\grtr\h^\T \in \R^{n \times M}$ from linear observations. Since the factors of such a rank-one matrix may enjoy additional structure, e.g., if $\grtr$ is $s$-sparse, estimation via convex programming becomes a difficult challenge.
\revision{For example, Oymak et al.\ showed in \cite{Oymak2015} that minimizing fixed convex combinations of multiple regularizers (promoting both low-rankness and sparsity) leads to a sampling rate of $m = \asympfaster{\min\{n + M, s \cdot M\}}$, which is in fact suboptimal.}\footnote{\revision{We note that this rate is still slightly better than the one achieved by Theorem~\ref{thm:results:lifting}, which is due to the missing nuclear norm penalty in the lifting method \eqref{eq:lifting:l1l2lasso}. However, such an additional constraint could be easily incorporated by our extensions presented in Section~\ref{sec:extensions}, leading to the same sampling rate as in \cite{Oymak2015}.}} Under certain further restriction, it has turned out that iterative methods can break this bottleneck \cite{lee2016powerfac}, particularly in the situation of blind deconvolution \cite{ahmed2014blinddeconv,Lee2017}, which often serves as a prototypical example. 
However, the actual issue of this paper is even more tough, since the model of \eqref{eq:intro:measurements} includes non-linear distortions that are not covered by the traditional setup of \eqref{eq:intro:bilinearmodel}.

For these reasons, a key objective of this work is a verification that recovery from non-linear \emph{and} distributed observations is still achievable by simple Lasso estimators, whereby many known theoretical guarantees do naturally translate into this combined setting.

\subsection{Outline and Notation}

In Section~\ref{sec:results}, we will present our main recovery results for the direct (Theorem~\ref{thm:results:direct}) and lifting method (Theorem~\ref{thm:results:lifting}).
This also involves a precise definition of several model parameters that allow us to make qualitative and quantitative statements on the issues of \ref{question:intro:samplecomplexity}--\ref{question:intro:nodecount}.
All related proofs are postponed to Section~\ref{sec:proofs}.
Section~\ref{sec:extensions} then elaborates on various important extensions, such as sub-Gaussian measurement designs and arbitrary convex constraint sets, which however require some additional technical preliminaries.
Section~\ref{sec:practice} returns to the initial prototypical example of wireless sensor networks. In this course, we will study this specific application in greater detail and discuss the practical scope of our results.
Final remarks can be found in Section~\ref{sec:conclusion}, including potential improvements that could be investigated in future works.

Throughout this paper, we will use several (standard) notations and conventions, compiled by the following list:
\begin{itemize}
\item
	\emph{Generic constants.} The letter $C$ is always reserved for a constant.
	We refer to $C$ as a \emph{numerical} constant  if its value is independent from all present parameters.
	Note that the value of $C$ might change from time to time, while we still use the same letter.
	If an \mbox{(in-)equality} holds true up to a numerical constant $C$, we sometimes simply write $A \lesssim B$ instead of $A \leq C \cdot B$, and if $C_1 \cdot A \leq B \leq C_2 \cdot A$ for numerical constants $C_1, C_2 > 0$, the abbreviation $A \asymp B$ is used.
\item
	For an integer $d \in \N$, we set $[d] \coloneqq \{1, \dots, d\}$.
\item
	\emph{Vectors} and \emph{matrices} are denoted by lower- and uppercase boldface letters, respectively. Their entries are indicated by subscript indices and lowercase letters, e.g., $\x = (x_1, \dots, x_d) \in \R^d$ for a vector and $\vec{B} = [b_{kl}] \in \R^{d'\times d}$ for a matrix.
\item
	Let $\x = (x_1, \dots, x_d) \in \R^d$. The \emph{support} of $\x$ is defined by the set of its non-zero entries $\supp(\x) \coloneqq \{ k \in [d] \suchthat x_k \neq 0 \}$ and we set $\lnorm{\x}[0] \coloneqq \cardinality{\supp(\x)}$. For $1 \leq p \leq \infty$, the \emph{$\l{p}$-norm} is given by
	\begin{equation}
	\lnorm{\x}[p] \coloneqq \begin{cases} (\sum_{k = 1}^d \abs{x_k}^p)^{1/p}, & p < \infty, \\ \max_{k \in [d]} \abs{x_k}, & p = \infty.
	\end{cases}
	\end{equation}
	The associated \emph{unit ball} is denoted by $\ball[p][d] \coloneqq \{ \x \in \R^d \suchthat \lnorm{\x}[p] \leq 1 \}$ and the \emph{(Euclidean) unit sphere} is $S^{d-1} \coloneqq \{ \x \in \R^d \suchthat \lnorm{\x} = 1 \}$. 
	If $\vec{B}, \tilde{\vec{B}} \in \R^{d' \times d}$ are matrices, then $\lnorm{\vec{B}}$ and $\sp{\vec{B}}{\tilde{\vec{B}}}$ always refer to the Euclidean norm (\emph{Frobenius norm}) and scalar product (\emph{Hilbert-Schmidt inner product}), respectively. \revision{Moreover, we denote the \emph{spectral norm} of $\vec{B}$ by $\opnorm{\vec{B}}$.}
\item
	Let $\ssetalt \subset \R^d$ and $\x \in \ssetalt$. The \emph{cone of $\ssetalt$ at $\x$} is given by
	\begin{equation}\label{eq:intro:notation:cone}
		\cone{\ssetalt}{\x} \coloneqq \{ \lambda \h \suchthat \h \in \ssetalt - \x, \lambda \geq 0 \}.
	\end{equation}
	If $\ssetalt$ is convex, then $\cone{\ssetalt}{\x}$ is convex as well.
	For a linear subspace $U \subset \R^d$, we denote the \emph{orthogonal projection} onto $U$ by $\proj_{U}$.
	And for $\x \in \R^d \setminus \{ \vnull\}$, we just write $\proj_{\x} \coloneqq \proj_{\spann\{\x\}} = \sp{\cdot}{\tfrac{\x}{\lnorm{\x}}}\tfrac{\x}{\lnorm{\x}}$.
\item
	\emph{Sub-Gaussian random variables.} Let $a$ be a real-valued random variable. Then $a$ is \emph{sub-Gaussian} if 
	\begin{equation}\label{eq:intro:notation:normsubg}
		\normsubg{a} \coloneqq \sup_{p \geq 1} p^{-1/2} (\mean[\abs{a}^p])^{1/p} < \infty,
	\end{equation}
	and $\normsubg{\cdot}$ is called the \emph{sub-Gaussian} norm.
	Throughout this work, we will frequently apply an \emph{inequality of Hoeffding-type for sub-Gaussian variables} (cf. \cite[Lem.~5.9]{vershynin2012random}): If $a_1, \dots, a_N$ are independent mean-zero sub-Gaussian random variables, then $\sum_{k = 1}^N a_k$ is also sub-Gaussian with zero mean, and we have
	\begin{equation}\label{eq:intro:notation:hoeffding}
		\normsubg[\Big]{\sum_{k = 1}^N a_k}^2 \lesssim \sum_{k = 1}^N \normsubg{a_k}^2.
	\end{equation}
	
	Now, let $\a$ be a random vector in $\R^d$. Then, $\a$ is called \emph{isotropic} if $\mean[\a \a^\T] = \I{d}$ or equivalently
	\begin{equation}\label{eq:intro:notation:isotropicsp}
		\mean[\sp{\a}{\x} \sp{\a}{\x'}] = \sp{\x}{\x'} \quad \text{for all $\x, \x' \in \R^d$.}
	\end{equation}
	This particularly implies that
	\begin{equation}\label{eq:intro:notation:intropic}
		\mean[\sp{\a}{\x}^2] = \lnorm{\x}^2 \quad \text{for all $\x \in \R^d$.}
	\end{equation}
	Adapted from the scalar case, the \emph{sub-Gaussian norm} of $\a$ is given by
	\begin{equation}
		\normsubg{\a} \coloneqq \sup_{\x \in \S^{d-1}} \normsubg{\sp{\a}{\x}}.
	\end{equation}
	Finally, if $\a$ is a (mean-zero) \emph{Gaussian random vector} with covariance matrix $\Covmatr \in \R^{d \times d}$, we write $\a \distributed \Normdistr{\vnull}{\Covmatr}$.
\end{itemize}

\section{Main Results}
\label{sec:results}

In this section, we analyze the performance of our approaches in
Algorithm~\ref{algo:direct} and Algorithm~\ref{algo:lifting} for the
situation of Gaussian measurements and sparse source vectors.
While there are certainly other important examples of measurement designs,
the Gaussian case typically serves as a proof-of-concept that allows
for a rigorous statistical analysis and highlights the key methodology of
this paper. However, several relevant extensions are presented in Section~\ref{sec:extensions}.
Before stating the actual results, let
us set up a formal (random) measurement model, which is assumed to
hold true for the remainder of this section.
\begin{model}[Measurement Scheme -- Gaussian Case]
  \label{model:results:measurements}
  Let $\a^1, \dots, \a^M \distributed \Normdistr{\vnull}{\I{n}}$ be
  independent standard Gaussian random vectors and let
  $\grtr \in \R^n$ with $\lnorm{\grtr} = 1$.  We define the \emph{superimposed, non-linearly distorted measurement process} by
  \begin{equation}\label{eq:results:measurements:meas}
    y \coloneqq \sum_{j = 1}^M \fobs_j(\sp{\a^j}{\grtr}) + e,
  \end{equation}
  where $e$ is independent, mean-zero, sub-Gaussian noise with
  $\normsubg{e} \leq \nu$ and $\fobs_j \colon \R \to \R$,
  $j = 1, \dots, M$ are (unknown) scalar functions. Moreover, we
  assume $\mean[\fobs_j(\sp{\a^j}{\grtr})] = 0$ for all
  $j = 1, \dots, M$.  Each of the $m$ samples
  $\{(\{ \a_i^j \}_{j = 1}^M, y_i)\}_{i = 1}^m$ is then drawn as an
  independent copy of the random ensemble
  $(\{ \a^j \}_{j = 1}^M, y)$.
\end{model}

\begin{remark}\label{rmk:results:normalization}
We would like to emphasize that a normalization of $\grtr$ as in Model~\ref{model:results:measurements} is quite natural in the setup of non-linear measurements.
\revision{
Indeed, if $\grtr \in \R^n \setminus \{\vnull\}$, we observe that 
\begin{equation}
	\fobs_j(\sp{\a^j}{\grtr}) = \fobs_j(\lnorm{\grtr} \cdot \sp{\a^j}{\tfrac{\grtr}{\lnorm{\grtr}}}).
\end{equation}
Hence, changing the norm of $\grtr$ in Model~\ref{model:results:measurements} can be ``compensated'' by rescaling the output functions $\fobs_1, \dots, \fobs_M$, i.e., replacing $v \mapsto \fobs_j(v)$ with $v \mapsto \tilde{\fobs}_j(v) \coloneqq \fobs_j(\lnorm{\grtr} \cdot v)$.
On the other hand, our estimators \eqref{eq:direct:lasso} and \eqref{eq:lifting:l1l2lasso} are not aware of possibly non-linear output functions, so that it is reasonable to fix a certain ``operating point.'' Moreover, recovery of the magnitude of $\grtr$ might be impossible for specific instances of Model~\ref{model:results:measurements}, e.g., if $\fobs_j = \sign(\cdot)$ for all $j = 1, \dots, M$, as these non-linearities are invariant under rescaling.}

The only notable structural constraint on the output functions $\fobs_j$ in Model~\ref{model:results:measurements} is the mean-zero assumption $\mean[\fobs_j(\sp{\a^j}{\grtr})] = 0$. However, this is usually not a big issue from an application viewpoint, since the non-linear distortions can be often expressed by odd functions, such as in the case of wireless sensor networks (cf. \eqref{eq:practice:clip} and \eqref{eq:practice:channelmultiplication}).
Apart from this, if the transmitted signal $\fobs_j(\sp{\a^j}{\grtr})$ is affected by a known bias, it is a common step in practice to use an online-correction for this bias.
\end{remark}

\subsection{Recovery via the Direct Method}

While Algorithm~\ref{algo:direct} does not explicitly depend on the output
functions $\fobs_j$, it is not
surprising that they should have a certain impact on its actual recovery
performance.  In order to quantify the degree of distortion that is
generated by these non-linearities, let us first introduce the following quantities:
\begin{definition}\label{def:results:scalparam}
  For each node $j = 1, \dots, M$, we define the \emph{scaling
    parameter}
  \begin{equation}\label{eq:results:scalparam:scalfac}
    \scalfac_j \coloneqq \mean_\gaussianuniv{}[\fobs_j(\gaussianuniv) \cdot \gaussianuniv], \quad \gaussianuniv \distributed \Normdistr{0}{1},
  \end{equation}
  and the \emph{scaling vector}
  $\scalvec \coloneqq (\scalfac_1, \dots, \scalfac_M) \in \R^M$.  The
  \emph{mean scaling parameter} is given by
  \begin{equation}
    \bar\scalfac \coloneqq \tfrac{1}{M} \sum_{j = 1}^M \scalfac_j.
  \end{equation}
\end{definition}
\begin{figure}
	\centering
	\begin{subfigure}[t]{0.25\textwidth}
		\centering
		\includegraphics[width=\textwidth]{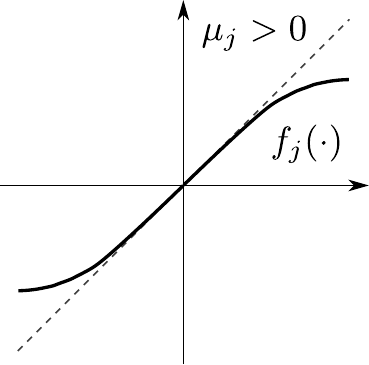}
		\caption{}
		\label{fig:results:nonlinearity:pos}
	\end{subfigure}%
	\qquad\qquad
	\begin{subfigure}[t]{0.25\textwidth}
		\centering
		\includegraphics[width=\textwidth]{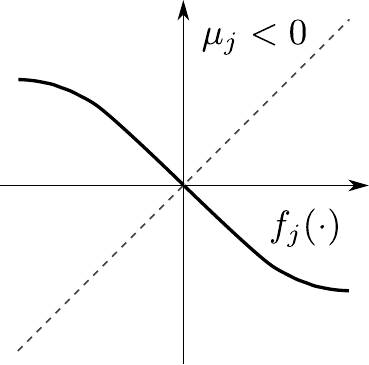}
		\caption{}
		\label{fig:results:nonlinearity:neg}
	\end{subfigure}%
	\caption{Illustration of the scaling parameters $\scalfac_j$. \subref{fig:results:nonlinearity:pos} In this case, $\fobs_j$ ``aligns well'' with $\Id$ so that $\scalfac_j$ is close to $1$. \subref{fig:results:nonlinearity:neg} Here, $\fobs_j$ involves a sign-flip and therefore rather matches with ${-\!\Id}$. This is particularly reflected by a sign-change of $\scalfac_j$.}
	\label{fig:results:nonlinearity}
\end{figure}
From a statistical perspective, \eqref{eq:results:scalparam:scalfac}
just computes the covariance between the distortion function
$\gaussianuniv \mapsto \fobs_j(\gaussianuniv)$ of the $j$-th node and its
linear counterpart.  Intuitively, $\scalfac_j$ measures the
expected\footnote{Note that we assume $\lnorm{\grtr} = 1$ in
  Model~\ref{model:results:measurements} so that indeed
  $\gaussianuniv = \sp{\a^j}{\grtr} \distributed \Normdistr{0}{1}$.}
rescaling and sign-change caused by $\fobs_j$ (with respect to the
identity), whereas $\bar\scalfac$ captures the ``average''
rescaling of the entire measurement process
\eqref{eq:results:measurements:meas}; see
Figure~\ref{fig:results:nonlinearity}.  The simple ``matching
principle'' of \eqref{eq:results:scalparam:scalfac} resembles the definitions \revision{of \cite[Eq.~(3.1)]{Brillinger1982} and} \cite[Eq.~(I.5)]{plan2015lasso}, where the case $M = 1$ is studied. The following recovery guarantee for the direct method shows that the empirical mean computed by $\bar\scalfac$ is the appropriate scaling parameter to handle superimposed
measurements:
\begin{theorem}[Direct Method -- Algorithm~\ref{algo:direct}]\label{thm:results:direct}
  Let Model~\ref{model:results:measurements} hold true and assume that
  $\grtr$ is $s$-sparse.\footnote{That means $\lnorm{\grtr}[0] \leq s$.}  Then, there exists a numerical constant $C > 0$
  such that the following holds true for every (fixed)
  $\delta \in (0,1]$ with probability at least
  $1 - 5 \exp(- C \cdot \delta^2 \cdot m )$: If
  \begin{equation}\label{eq:results:direct:meas}
    m \gtrsim \delta^{-2} \cdot s \cdot \log(\tfrac{2n}{s}),
  \end{equation}
  then any minimizer $\solu \in \R^{n}$ of \eqref{eq:direct:lasso} with $R = \lnorm{\bar\scalfac \grtr}[1]$ satisfies
  \begin{equation}\label{eq:results:direct:bound}
    \lnorm{\solu - \bar\scalfac\grtr} \lesssim (\modeldev_\text{Dir}^2 + \tfrac{\nu^2}{M})^{\frac{1}{2}} \cdot \delta,
  \end{equation} 
  where 
  \begin{equation}
    \modeldev_\text{Dir}^2 \coloneqq \tfrac{1}{M} \sum_{j = 1}^M \normsubg{\fobs_j(\gaussianuniv) - \bar\scalfac \gaussianuniv}^2\quad \text{with $\gaussianuniv \distributed \Normdistr{0}{1}$.}
  \end{equation}
\end{theorem}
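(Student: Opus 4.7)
The proof proceeds by identifying the correct ``pseudo-linear target'' that the direct Lasso is implicitly estimating and then appealing to the abstract recovery framework (Theorem~\ref{thm:proofs:abstractrecovery}). Setting $\adir := \sum_{j=1}^M \a^j \distributed \Normdistr{\vnull}{M\I{n}}$ and $\x^\natural := \bar\scalfac\grtr$, I first verify by a standard Gaussian computation (using isotropy of each $\a^j$, independence across $j$, and the centering assumption $\mean[\fobs_j(\sp{\a^j}{\grtr})] = 0$) that
\begin{equation}
\mean[\adir\adir^\T] = M\I{n}, \qquad \mean[y\cdot\adir] = \sum_{j=1}^M \scalfac_j\grtr = M\bar\scalfac\grtr = \mean[\adir\adir^\T]\x^\natural,
\end{equation}
so $\x^\natural$ is the population minimizer of the squared loss, and the choice $R = \lnorm{\bar\scalfac\grtr}[1]$ renders it feasible for \eqref{eq:direct:lasso}. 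Introducing the effective residual
\begin{equation}
\xi := y - \sp{\adir}{\x^\natural} = \sum_{j=1}^M\bigl(\fobs_j(\sp{\a^j}{\grtr}) - \bar\scalfac\sp{\a^j}{\grtr}\bigr) + e,
\end{equation}
the same computation gives $\mean[\xi\cdot\adir] = \sum_j(\scalfac_j - \bar\scalfac)\grtr = \vnull$, so the task is formally of ``noisy linear'' type $y = \sp{\adir}{\x^\natural} + \xi$, even though $\xi$ and $\adir$ are not independent.

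The next step turns the Lasso optimality condition into a descent-cone inequality. Since $\solu$ minimizes \eqref{eq:direct:lasso} and $\x^\natural$ is feasible, the error $\h := \solu - \x^\natural$ lies in the descent cone $\cone{R\ball[1][n]}{\x^\natural}$ and satisfies $\tfrac{1}{m}\sum_i \sp{\adir_i}{\h}^2 \leq \bigl|\tfrac{1}{m}\sum_i \xi_i \sp{\adir_i}{\h}\bigr|$. On the intersection of this cone with the unit sphere---a set whose squared Gaussian mean width is the classical $\asympfaster{s\log(2n/s)}$ because $\x^\natural$ is $s$-sparse---I then need two uniform estimates. First, a \emph{restricted strong convexity} lower bound $\tfrac{1}{m}\sum_i \sp{\adir_i}{\h}^2 \gtrsim M\lnorm{\h}^2$, obtained by Gordon's escape-through-a-mesh applied to the rescaled standard Gaussians $\adir_i/\sqrt{M}$ and valid with probability at least $1-\exp(-c\delta^2 m)$ whenever $m\gtrsim \delta^{-2}s\log(2n/s)$. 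Second, a \emph{multiplier} bound $\bigl|\tfrac{1}{m}\sum_i \xi_i \sp{\adir_i}{\h}\bigr| \lesssim \sqrt{M(M\modeldev_\text{Dir}^2 + \nu^2)}\cdot \delta \cdot \lnorm{\h}$, which leverages $\mean[\xi\adir]=\vnull$, the Hoeffding-type bound \eqref{eq:intro:notation:hoeffding} yielding $\normsubg{\xi}^2 \lesssim M\modeldev_\text{Dir}^2 + \nu^2$, and $\normsubg{\adir}\lesssim\sqrt{M}$. Combining the two estimates gives $M\lnorm{\h}^2 \lesssim \sqrt{M(M\modeldev_\text{Dir}^2 + \nu^2)}\,\delta\,\lnorm{\h}$, which rearranges into the advertised bound $\lnorm{\solu - \bar\scalfac\grtr} \lesssim (\modeldev_\text{Dir}^2 + \nu^2/M)^{1/2}\,\delta$.

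The main obstacle lies in the multiplier step. Because $\xi$ is built from the very same Gaussian vectors that form $\adir$, the product $\xi\,\adir$ is neither Gaussian nor a product of independent sub-Gaussian factors, but only a mean-zero sub-exponential vector. Controlling its supremum uniformly over the descent cone therefore requires a Bernstein/mixed-tail chaining inequality rather than plain Gaussian concentration; this is precisely the input the abstract framework of Theorem~\ref{thm:proofs:abstractrecovery} is engineered to process, and it extends the argument of Plan--Vershynin for the single-index case $M=1$ to the superimposed setting. The clean cancellation of one power of $M$ between the quadratic and multiplier terms is what ultimately produces the favourable noise scaling $\nu^2/M$, consistent with the heuristic that averaging over $M$ independent nodes should attenuate the additive noise while the model-mismatch term $\modeldev_\text{Dir}^2$ (being itself an average over the $M$ nodes) does not benefit from further averaging.
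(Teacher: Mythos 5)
Your proposal is correct and follows essentially the same route as the paper: the identity $\mean[\xi \adir] = \vnull$ is precisely the paper's Proposition~\ref{prop:proofs:modelcovar}\ref{prop:proofs:modelcovar:direct} (vanishing mismatch covariance), the estimate $\normsubg{\xi}^2 \lesssim M \modeldev_\text{Dir}^2 + \nu^2$ via \eqref{eq:intro:notation:hoeffding} is its mismatch-deviation bound, and your descent-cone argument with the quadratic (RSC) and multiplier processes is exactly the content of the abstract Theorem~\ref{thm:proofs:abstractrecovery}, which the paper applies after normalizing to the pair $(\tfrac{1}{\sqrt{M}}\adir, \tfrac{1}{\sqrt{M}}y)$ so that $\subgparam = 1$, rather than carrying the factors of $M$ explicitly as you do. No gaps.
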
 
Roughly speaking, Theorem~\ref{thm:results:direct} states that, with
high probability, estimating $\grtr$ via the direct method
(Algorithm~\ref{algo:direct}) is very accurate if $m$ (greatly)
exceeds $s \log(2n/s)$ --- which is a sampling rate that resembles
the typical flavor of results from compressed sensing theory.  \revision{We
would like to emphasize that the non-linearities $\fobs_j$ as well as
the sensor count $M$ affect the error bound
\eqref{eq:results:direct:bound} only in terms of the (constant) factors $\bar\scalfac$ and $\modeldev_{\text{Dir}}$:
The mean scaling parameter $\bar\scalfac$ leads to a rescaling of the source vector $\grtr$, implying that $\solu$ yields a consistent estimator of $\bar\scalfac\grtr$ but not of $\grtr$ itself. Noteworthy, due to the assumption $\lnorm{\grtr} = 1$ in Model~\ref{model:results:measurements}, we therefore do not only obtain an approximation of $\grtr$ (by computing $\solu / \lnorm{\solu}$) but also of the unknown factor $\bar\scalfac$ (by computing $\lnorm{\solu}$).
In comparison, the \emph{model deviation} $\modeldev_{\text{Dir}}$ quantifies the absolute \emph{mismatch} caused by
non-linear perturbations, which in turn controls the variance of the estimator \eqref{eq:direct:lasso}.}

The impact of the additive noise term $e$ becomes even
smaller in \eqref{eq:results:direct:bound} as $M$
grows.\footnote{Recall that Model~\ref{model:results:measurements}
  assumes $\normsubg{e} \leq \nu$.} Such a behavior is well-known
from the linear case (i.e., when all $f_j$ are linear functions) and
Theorem~\ref{thm:results:direct} shows that this desirable ``rule-of-thumb'' even remains
true in the general non-linear situation.  This particularly
reflects the fact that the signal-to-noise ratio of
\eqref{eq:results:measurements:meas} increases as $M$ gets larger,
indicating that we can expect more accurate and stable reconstruction
results.
Regarding our initial problem issues \ref{question:intro:samplecomplexity}--\ref{question:intro:nodecount}, we can therefore draw the following somewhat informal conclusion:
\begin{highlight}
	The Lasso \eqref{eq:direct:lasso} with superimposed, non-linear measurements \eqref{eq:results:measurements:meas} essentially performs as well as if it would be applied to noisy linear observations (communicated by a single node). Since this estimator does not make explicit use of the non-linearities, the price to pay is the presence of an unknown factor $\bar\scalfac$ that rescales the source vector $\grtr$.
\end{highlight}

\begin{remark}
\begin{rmklist}
\item
	The underlying proof techniques of Theorem~\ref{thm:results:direct} do only marginally rely on the fact that $\grtr$ is assumed to be $s$-sparse (cf. Theorem~\ref{thm:proofs:abstractrecovery}). However, an extension to more general structural constraints requires some technical preliminaries, which will be presented in the course of Section~\ref{sec:extensions}.
\item\label{rmk:results:direct:tuning}
	The choice of the tuning parameter $R$ in Theorem~\ref{thm:results:direct} is quite restrictive, since both $\lnorm{\grtr}[1]$ and $\bar\scalfac$ are often unknown in practice.
	In fact, this condition can be relaxed in the sense that the above statement essentially remains valid if $R$ was not perfectly chosen (see Subsection~\ref{subsec:extensions:hybridglobal}, in particular Theorem~\ref{thm:extensions:hybridglobal}).\qedhere
\end{rmklist}
\label{rmk:results:direct}
\end{remark}

\subsection{Recovery via the Lifting Method}

Operationally, as long as $\bar\scalfac\neq 0$,
the recovered vector $\solu$ only needs to be rescaled by a factor of $1/\bar\scalfac$ in order to become a reliable estimator of the normalized vector $\grtr$.
The significance of Theorem~\ref{thm:results:direct} is however
lost if $\bar\scalfac \approx 0$, since dividing \eqref{eq:results:direct:bound} by $\abs{\bar\scalfac}$ would blow up the error bound.
Our next main result shows that this problematic situation can be resolved by the lifting method:

\begin{theorem}[Lifting Method -- Algorithm~\ref{algo:lifting}]\label{thm:results:lifting} 
	Let Model~\ref{model:results:measurements} hold true and assume that $\grtr$ is $s$-sparse.
	Then, there exists a numerical constant $C > 0$ such that the following holds true for every (fixed)
	$\delta \in (0,1]$ with probability at least $1 - 5 \exp(- C \cdot \delta^2 \cdot m )$: If
	\begin{equation}\label{eq:results:lifting:meas}
		m \gtrsim \delta^{-2} \cdot s \cdot \max\{M, \log(\tfrac{2n}{s})\},
	\end{equation}
	\revision{then any minimizer $\solu[\X] \coloneqq [\solu^1 \cdots \solu^M] \in \R^{n\times M}$ of \eqref{eq:lifting:l1l2lasso} with $R = \lnorm[\big]{[\scalfac_1\grtr \cdots \scalfac_M\grtr]}[1,2] = \lnorm{\scalvec} \lnorm{\grtr}[1]$ satisfies
	\begin{equation}\label{eq:results:lifting:bound}
		\tfrac{1}{\sqrt{M}}\lnorm{\solu[\X] - \grtr\scalvec^\T} = \Big(\tfrac{1}{M}\sum_{j = 1}^M\lnorm{\solu^j - \scalfac_j\grtr}^2 \Big)^{1/2} \lesssim (\modeldev_\text{Lift}^2 + \tfrac{\nu^2}{M})^{\frac{1}{2}} \cdot \delta,
	\end{equation} 
	where 
	\begin{equation}
		\modeldev_\text{Lift}^2 \coloneqq \tfrac{1}{M} \sum_{j = 1}^M \normsubg{\fobs_j(\gaussianuniv) - \scalfac_j \gaussianuniv}^2\quad \text{with $\gaussianuniv \distributed \Normdistr{0}{1}$.}
	\end{equation}
	Moreover, if $\Err \coloneqq (\modeldev_\text{Lift}^2 + \nu^2 / M)^{\frac{1}{2}} \cdot \delta \lesssim \lnorm{\scalvec} / \sqrt{M}$, we also have the following: let $\solu \in \S^{n-1}$ and $\solu[\scalvec] \in \S^{M-1}$ denote the dominating left and right singular vectors of $\solu[\X]$, respectively, and let $\tau$ be the corresponding singular value. Then
	\begin{equation}\label{eq:results:lifting:bound:svd}
		\lnorm{\solu - \theta_1 \grtr} \lesssim \Err \cdot \tfrac{\sqrt{M}}{\lnorm{\scalvec}} \quad \text{and} \quad \lnorm{\tau\solu[\scalvec] - \theta_2 \scalvec} \lesssim \Err \cdot \sqrt{M},
	\end{equation}
	where $\theta_1 \coloneqq \sign(\sp{\grtr}{\solu})$ and $\theta_2 \coloneqq \sign(\sp{\scalvec}{\solu[\scalvec]})$.}
\end{theorem}

First, we would like to point out that the mean scaling parameter $\bar\scalfac$ does not appear in Theorem~\ref{thm:results:lifting} anymore. 
Indeed, the \emph{model deviation parameter} $\modeldev_\text{Lift}$ is also refined, since each mismatch term $\fobs_j(\gaussianuniv) - \scalfac_j \gaussianuniv$ now involves the correct scaling factor $\scalfac_j$. 
The actual error bound \eqref{eq:results:lifting:bound} reads slightly more complicated;
roughly speaking, each column $\solu^j \in \R^n$ of the minimizer estimates a scaled version of $\grtr$ and \eqref{eq:results:lifting:bound} states that the $\l{2}$-mean of all approximation errors is small.
\revision{Interestingly, Theorem~\ref{thm:results:lifting} tells us even more: the solution matrix $\solu[\X]$ of the Group-Lasso \eqref{eq:lifting:l1l2lasso} approximates the rank-one
matrix $\grtr\scalvec^\T \in \R^{n \times M}$.  Hence, according to \eqref{eq:results:lifting:bound:svd}, we do not only obtain an estimate of the source $\grtr$ but also
of the unknown scaling parameters
$\scalvec = (\scalfac_1, \dots, \scalfac_M) \in \R^M$, by simply computing the dominating left
and right singular vectors of $\solu[\X]$; see also the optional step in Algorithm~\ref{algo:lifting}. Note that the typical scaling ambiguity does only appear in form of sign-changes here, due to the unit-norm assumption in Model~\ref{model:results:measurements}; see also Remark~\ref{rmk:results:normalization}.} 
This important observation
underpins the relation to the bilinear problem of
\eqref{eq:intro:bilinearmodel} and indicates that the lifting method
(Algorithm~\ref{algo:lifting}) is indeed capable of handling more
complicated scenarios than the direct method
(Algorithm~\ref{algo:direct}).

In terms of required measurements \eqref{eq:results:lifting:meas},
this improvement comes at no extra costs as long as
$M = \asympfaster{\log(2n/s)}$.  But for larger distributed systems, one has to
take significantly more measurements because
$m = \asympfaster{s\cdot M}$ then grows linearly with the node count
$M$.  However, the total number of unknown non-zero parameters is
actually in the order of $\asympfaster{s + M}$.  This gap between
multiplicative and additive scaling of the (sample) complexity is due to
the fact that \eqref{eq:lifting:l1l2lasso} does not account for
the low-rankness of $\grtr\scalvec^\T$.  While our focus is
rather on \emph{non-linear} distortions than on solving a bilinear
factorization problem, there is certainly space for improvements in
this direction, which we intend to study in future works (see Section~\ref{sec:conclusion}). Thus, in light of \ref{question:intro:samplecomplexity}--\ref{question:intro:nodecount}, our discussion can be informally summarized as follows:
\begin{highlight}
	\revision{The Group-Lasso estimator \eqref{eq:lifting:l1l2lasso}, when combined with the optional step of Algorithm~\ref{algo:lifting}, is capable of recovering both the source vector $\grtr$ and the scaling vector $\scalvec$ (up to sign-changes).}
	The price to pay is that the number of required measurement (at some point) scales linearly in $M$.
\end{highlight}

\begin{remark}
\begin{rmklist}
\item\label{rmk:results:lifting:matrixversion}
	In some situations (especially in the proofs of Section~\ref{sec:proofs}), it is useful to restate \eqref{eq:lifting:l1l2lasso} in its matrix version.
	For this purpose, we simply concatenate the measurement vectors $\alift_i \coloneqq [\a_i^1 \cdots \a_i^M] \in \R^{n \times M}$ and source vectors $\X \coloneqq [\x^1 \cdots \x^M] \in \R^{n \times M}$.
	Then, \eqref{eq:lifting:l1l2lasso} takes the form
	\begin{equation}\tag{${P_R^\text{Lift}}'$}
		\min_{\X \in \R^{n\times M}} \sum_{i = 1}^m (y_i - \sp{\alift_i}{\X})^2 \quad \text{s.t. $\lnorm{\X}[1,2] \leq R$.}
	\end{equation}
	This perspective reveals that the lifting method actually tries to fit a bilinear structure to \mbox{(bi-)non}-linear observations, eventually estimating a rank-one matrix (cf. \eqref{eq:results:lifting:bound}).  Note that a similar lifting approach combined with the $\l{1,2}$-norm was recently studied in \cite{ling2015selfcalib,flinth2016sparsedeconv}, considering problems from \emph{self-calibration} and \emph{sparse blind deconvolution}.
\item
	Although the lifting method works in a quite general setting, there are still some scenarios which are (implicitly) excluded.
	More precisely, if all scaling factors $\scalfac_j$ do vanish, Theorem~\ref{thm:results:lifting} states that \eqref{eq:lifting:l1l2lasso} just recovers the $\vnull$-matrix and there is particularly no hope to retrieve $\grtr$.
	\revision{An important example of such an undesirable situation is \emph{phase retrieval} \cite{shechtman2015phase}, $\fobs_j(v) = \abs{v}$, where $\scalfac_j = \mean[\abs{\gaussianuniv} \cdot \gaussianuniv] = 0$. Interestingly, it recently turned out that signal recovery is still possible in situations where $\scalfac_j = 0$ (which includes even, non-linear functions), e.g., see \cite{thrampoulidis2018lifting} for a convex approach based on lifting and \cite{yang2017misspecified} for a non-convex method.}
\item 
	\revision{As already pointed out in the above discussion of Theorem~\ref{thm:results:lifting}, the sampling rate $m = \asympfaster{s \cdot \max\{M, \log(\tfrac{2n}{s})\}}$ proposed by \eqref{eq:results:lifting:meas} does not match with the best possible rate $m = \asympfaster{s + M}$. This gap is due to a fundamental limitation of the lifting method \eqref{eq:lifting:l1l2lasso} and not of our statistical analysis, which is based on bounds for the conic mean width (see Proposition~\ref{prop:proofs:meanwidth}); in other words, the sampling rate $m = \asympfaster{s \cdot \max\{M, \log(\tfrac{2n}{s})\}}$ is expected to be (asymptotically) optimal, even if one is only interested in recovery of $\grtr$ via \eqref{eq:lifting:l1l2lasso}, or if the measurement process is linear.
	On the other hand, there exists evidence that it is possible to break the multiplicative complexity barrier of Theorem~\ref{thm:results:lifting} by using more advanced (non-convex) methods instead of \eqref{eq:lifting:l1l2lasso}; see Section~\ref{sec:conclusion}.}
	\qedhere
\end{rmklist}
\label{rmk:results:lifting}
\end{remark}

\section{Extensions}
\label{sec:extensions}

While the framework developed in Section~\ref{sec:results} is quite elegant and simple, there are several undesirable restrictions regarding
practical purposes.  For example, we would like to allow for a larger
class of sub-Gaussian distributions.  The following model includes
this extension. Note that it coincides with
Model~\ref{model:results:measurements} in the Gaussian case. Moreover,
we have also relaxed the unit-norm assumption on $\grtr$ (cf. Remark~\ref{rmk:results:normalization}).

\begin{model}[Measurement Scheme -- Sub-Gaussian Case]\label{model:extensions:measurements}
	Let $\grtr \in \R^n \setminus \{\vnull\}$ and assume that $\a^1, \dots, \a^M$ are independent, isotropic, mean-zero sub-Gaussian random vectors with $\normsubg{\a^j} \leq \subgparam$ for all $j = 1, \dots, M$.
	As before, we define the measurement process by
	\begin{equation}\label{eq:extensions:measurements:meas}
		y \coloneqq \sum_{j = 1}^M \fobs_j(\sp{\a^j}{\grtr}) + e,
	\end{equation}
	where $e$ is independent, mean-zero, sub-Gaussian noise with $\normsubg{e} \leq \nu$ and $\fobs_j \colon \R \to \R$, $j = 1, \dots, M$ are scalar functions.
	Again, we assume that $\mean[\fobs_j(\sp{\a^j}{\grtr})] = 0$ for all $j = 1, \dots, M$.	
	The individual samples $\{(\{ \a_i^j \}_{j = 1}^M, y_i)\}_{i = 1}^m$ are then drawn as independent copies of the random ensemble $(\{ \a^j \}_{j = 1}^M, y)$.
\end{model}

\subsection{The Hybrid Method}
\label{subsec:extensions:hybrid}

A downside of Algorithm~\ref{algo:direct} and Algorithm~\ref{algo:lifting} is that they do not offer an incorporation of prior knowledge about the model configuration.
Let us consider a simple example:
Suppose that the non-linearities take the form $\fobs_j(\cdot) = h_j \cdot \fobs(\cdot)$, where $h_j \in \R$ and $\fobs\colon \R \to \R$.
Using the direct method with a ``positive'' superposition $\adir = \sum_{j = 1}^M \a^j$, we actually insinuate that all nodes align in the same direction, meaning that $h_1, \dots, h_M$ are all of the same sign.
In this case, we would have
\begin{equation}
	\abs{\bar\scalfac} = \tfrac{1}{M}\sum_{j = 1}^M \abs{h_j} \cdot \abs{\underbrace{\mean[\fobs(\gaussianuniv)\gaussianuniv]}_{\eqqcolon \scalfac}} \gg 0.
\end{equation}
However, if $h_1, \dots, h_M$ have different signs, we might end up with
\begin{equation}
	\abs{\bar\scalfac} = \tfrac{1}{M}\abs[\Big]{\sum_{j = 1}^M h_j} \cdot \abs{\scalfac} \approx 0,
\end{equation}
and the statement of Theorem~\ref{thm:results:direct} becomes essentially meaningless.
The underlying problem is that one part of the summands of $\adir = \sum_{j = 1}^M \a^j$ ``matches'' with the superimposed measurement model \eqref{eq:extensions:measurements:meas}, whereas the other part does not.
On the other hand, if the signs of $h_1, \dots, h_M$ are available as prior information, we may easily circumvent this ``cancellation'' phenomenon by computing adapted superpositions $\ahyb = \sum_{j = 1}^M \sign(h_j) \a^j$.
We will see below that this strategy is indeed very helpful (see Example~\ref{ex:extensions:hybrid}\ref{ex:extensions:hybrid:priorinfo}).

Let us now introduce an algorithmic framework that enables us to design such (linear) combinations of measurement vectors in a very general way:

\enlargethispage{1.5\baselineskip}
\begin{algorithm}[H]
\caption{Hybrid Method}
\label{algo:hybrid}
	\Input{Measurement ensemble $\{(\{ \a_i^j \}_{j = 1}^M, y_i)\}_{i = 1}^m$, convex set $\sset \subset \R^{n\times N}$,\newline weight matrix $\hybW = [w_{j,k}] \in \R^{M\times N}$}
	\Output{Estimated source vectors $\solu^1,\dots,\solu^{N} \in \R^n$}
\BlankLine
\SetKwBlock{Compute}{Compute}{}
\Compute{\BlankLine
	\nl Compute hybrid measurement vectors:
	\begin{equation}\label{eq:hybrid:hybmeas}
		\ahyb_i^k \coloneqq \sum_{j = 1}^M w_{j,k}\a_i^j, \quad i = 1, \dots, m, \quad k = 1, \dots, N.
	\end{equation}
	\BlankLine
	\nl Solve the Group-Lasso and pick any minimizer:
	\begin{equation}
		\label{eq:hybrid:lasso}\tag{$P_{\sset}^\text{Hyb}$}
		[\solu^1\cdots\solu^N] = \argmin_{\substack{[\x^1\cdots\x^N] \\ \in \R^{n\times N}}} \ \tfrac{1}{2m} \sum_{i = 1}^m
		\Big(y_i - \sum_{k = 1}^N\sp{\ahyb_i^k}{\x^k}\Big)^2 \quad
		\text{s.t. $[\x^1 \cdots \x^N] \in \sset$.}
	\end{equation}
	}
\end{algorithm}

Picking $\hybW = (1, \dots, 1) \in \R^{M\times 1} $ for the direct method and $\hybW = \I{M} \in \R^{M \times M}$ for the lifting method, we immediately obtain the measurement designs of Algorithm~\ref{algo:direct} and Algorithm~\ref{algo:lifting}, respectively.
This particular shows that our initial approaches are special (extreme) cases of the hybrid method; see also Example~\ref{ex:extensions:hybrid}\ref{ex:extensions:hybrid:direct} and \ref{ex:extensions:hybrid:lifting} below.
\revision{The above algorithm however leaves us much more freedom to tackle the recovery problem: It permits arbitrary linear combinations of the measurement ensembles $\{ \a_i^j \}_{j = 1}^M$;
each resulting hybrid measurement vector $\ahyb_i^k$ in \eqref{eq:hybrid:hybmeas} reflects a certain \emph{hypothesis} on how the raw measurements vectors $\{ \a_i^j \}_{j = 1}^M$ should be appropriately linked with each other.
Thus, the total count $N$ of hybrid measurement vectors corresponds to the number of hypotheses made; in particular, there is only $N = 1$ ``strong'' hypothesis for the direct method, while there are $N = M$ ``independent'' hypotheses for the lifting method, avoiding any additional presumptions.
Finally, let us point out that Algorithm~\ref{algo:hybrid} is not restricted to sparsity-promoting constraints anymore but allows for general convex constraint sets.}

\subsection{Recovery Based on the Conic Mean Width}

Before stating our main recovery guarantee, we first need to adapt the scaling parameters from Definition~\ref{def:results:scalparam}:
\begin{definition}\label{def:extensions:scalparam}
	We define the \emph{scaling parameters}
	\begin{equation}
		\scalfac_j \coloneqq \tfrac{1}{\lnorm{\grtr}^2} \cdot \mean{}_{\a^j}[\fobs_j(\sp{\a^j}{\grtr}) \sp{\a^j}{\grtr}], \quad j = 1, \dots, M,
	\end{equation}
	and $\scalvec \coloneqq (\scalfac_1, \dots, \scalfac_M) \in \R^M$.
	Given $\hybW \in \R^{M\times N}$, we introduce the \emph{hybrid scaling vector} by
	\begin{equation}\label{eq:extensions:scalparam:hybscalvec}
		\hybscalvec = (\hybscalfac_1, \dots, \hybscalfac_N) \coloneqq \tfrac{N}{M} \hybW^\T \scalvec \in \R^N.
	\end{equation}
	Furthermore, we call
	\begin{equation}
		\orthmodelbias^j \coloneqq \mean{}_{\a^j}[\fobs_{j}(\sp{\a^{j}}{\grtr})\proj_{\orthcompl{\{\grtr\}}}(\a^{j})] \in \R^{n} \quad \text{for $j = 1, \dots, M$,}
	\end{equation}
	the \emph{isotropy mismatch vectors} and set $\orthmodelbias \coloneqq [\orthmodelbias^1 \cdots \orthmodelbias^M] \in \R^{n \times M}$.
\end{definition}

Under the hypothesis of Model~\ref{model:results:measurements}, the scaling parameters $\scalfac_j$ exactly coincide with those of \eqref{eq:results:scalparam:scalfac}, and the dependence on $\lnorm{\grtr}$ is due to the missing unit-norm assumption in Model~\ref{model:extensions:measurements}.
In contrast, the hybrid scaling vector $\hybscalvec$ arises from a linear transformation by the weight matrix $\hybW$. \revision{Our recovery results and their proofs (in Subsection~\ref{subsec:proofs:hybridresults}) will reveal that the above definitions of $\hybscalvec$ and $\orthmodelbias$ are both quite natural, since they lead in a certain sense to the ``best approximation'' of the non-linear observation rule by a linear model.}

The isotropy mismatch vectors vanish for Gaussian random vectors ($\orthmodelbias^j = \vnull$) because $\sp{\a^{j}}{\grtr}$ and $\proj_{\orthcompl{\{\grtr\}}}(\a^{j})$ are independent in this case (and not just uncorrelated).
However, this does not happen in the general sub-Gaussian scenario, unless $\fobs_j$ is linear. 
In fact, $\orthmodelbias^j$ computes the covariance between the distorted projection $\fobs_j(\sp{\a^{j}}{\grtr})$ of $\a^{j}$ onto $\spann\{\grtr\}$ and its orthogonal complement $\proj_{\orthcompl{\{\grtr\}}}(\a^{j})$. Thus, we may regard the mismatch vector $\orthmodelbias^j$ as a compatibility measure of the non-linearity $\fobs_j$ and the isotropic measurement vector $\a^j$.
Our next result shows that these terms indeed play an important role in the estimation performance of the hybrid method:

\begin{theorem}[Hybrid Method -- Algorithm~\ref{algo:hybrid}]\label{thm:extensions:hybrid} 
	Assume that Model~\ref{model:extensions:measurements} holds true. Let $\hybW = [\hybw^1 \cdots \hybw^N] \in \R^{M \times N}$ be a weight matrix satisfying $\hybW^\T \hybW = \tfrac{M}{N} \I{N}$. Moreover, suppose that $\grtr\hybscalvec^\T \in \sset$, where $\sset \subset \R^{n \times N}$ is a convex set.
	Then, there exists a numerical constant $C > 0$ such that the following holds true for every (fixed)
	$\delta \in (0,1]$ with probability at least $1 - 5 \exp(- C \cdot \subgparam^{-4} \cdot \delta^2 \cdot m )$: If \footnote{Here, $\meanwidth[1]{\cone{\sset}{\grtr\hybscalvec^\T}}$ denotes the \emph{conic mean width} of $\sset$ at $\grtr\hybscalvec^\T$, which is formally introduced below in Definition~\ref{def:proofs:meanwidth}.}
	\begin{equation}\label{eq:extensions:hybrid:meas}
		m \gtrsim \subgparam^{4} \cdot \delta^{-2} \cdot \meanwidth[1]{\cone{\sset}{\grtr\hybscalvec^\T}}^2,
	\end{equation}
	then any minimizer $[\solu^1 \cdots \solu^N] \in \R^{n\times N}$ of \eqref{eq:hybrid:lasso} satisfies
	\begin{equation}\label{eq:extensions:hybrid:bound}
		\Big(\tfrac{1}{N}\sum_{k = 1}^N\lnorm{\solu^k - \hybscalfac_k\grtr}^2 \Big)^{1/2} \lesssim \subgparam^{-1} \cdot (\modeldev_\text{Hyb}^2 + \tfrac{\nu^2}{M})^{\frac{1}{2}} \cdot \delta + \tfrac{N}{M} \cdot \modelcovar_\text{Hyb},
	\end{equation} 
	where 
	\begin{equation}\label{eq:extensions:hybrid:modeldev}
		\modeldev_\text{Hyb}^2 \coloneqq \tfrac{1}{M} \sum_{j = 1}^M \normsubg{\sp{\a^j}{\hybgrtr^j} - \fobs_j(\sp{\a^j}{\grtr})}^2\quad \text{with $[\hybgrtr^1 \cdots \hybgrtr^M] \coloneqq \tfrac{N}{M} \grtr \scalvec^\T \hybW \hybW^\T$}
	\end{equation}
	and
	\begin{equation}\label{eq:extensions:hybrid:modelcovar}
		\modelcovar_\text{Hyb} \coloneqq \Big(\tfrac{1}{N} \sum_{k = 1}^N \lnorm{\orthmodelbias \hybw^k}^2\Big)^{1/2} = \tfrac{1}{\sqrt{N}} \lnorm{\orthmodelbias \hybW}.
	\end{equation}
\end{theorem}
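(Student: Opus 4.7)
The plan is to invoke an abstract Lasso-type recovery framework (Theorem~\ref{thm:proofs:abstractrecovery}) at the target matrix $\X_0 \coloneqq \grtr\hybscalvec^\T$, which lies in $\sset$ by hypothesis. For this, I would first recast \eqref{eq:hybrid:lasso} as a linear regression on the lifted space $\R^{n\times N}$ by concatenating the hybrid measurement vectors into $\Ahyb_i \coloneqq [\ahyb_i^1 \cdots \ahyb_i^N]$, so that the objective becomes $\tfrac{1}{2m}\sum_i (y_i - \sp{\Ahyb_i}{\X})^2$. Using $\hybscalvec = \tfrac{N}{M}\hybW^\T\scalvec$, a short computation shows $\sp{\Ahyb_i}{\X_0} = \sum_{j=1}^M\sp{\a_i^j}{\hybgrtr^j}$, so the residual of the linearization at $\X_0$ reads $y_i - \sp{\Ahyb_i}{\X_0} = r_i + e_i$ with
\begin{equation*}
r_i \coloneqq \sum_{j=1}^M\bigl(\fobs_j(\sp{\a_i^j}{\grtr}) - \sp{\a_i^j}{\hybgrtr^j}\bigr).
\end{equation*}
By independence of the $\a_i^j$, each $r_i$ is mean-zero sub-Gaussian with $\normsubg{r_i}^2 \lesssim M\modeldev_\text{Hyb}^2$ via the Hoeffding-type inequality \eqref{eq:intro:notation:hoeffding}.

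Next I would verify the two standard ingredients required by the abstract framework, uniformly over the intersection of the descent cone $\cone{\sset}{\X_0}$ with the unit Frobenius sphere. The key structural observation is that the orthogonality condition $\hybW^\T\hybW = \tfrac{M}{N}\I{N}$ forces $\mean[\sp{\Ahyb_i}{\X}\sp{\Ahyb_i}{\H}] = \tfrac{M}{N}\sp{\X}{\H}$, i.e., $\Ahyb_i$ is isotropic up to the scaling factor $\tfrac{M}{N}$. This immediately produces a restricted strong convexity lower bound $\tfrac{1}{m}\sum_i(\sp{\Ahyb_i}{\H})^2 \gtrsim \tfrac{M}{N}\subgparam^{-2}\lnorm{\H}^2$ via a sub-Gaussian Mendelson small-ball argument, provided $m \gtrsim \subgparam^4\delta^{-2}\meanwidth[1]{\cone{\sset}{\X_0}}^2$. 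The matching multiplier bound then has to control $\bigl|\tfrac{1}{m}\sum_i(r_i+e_i)\sp{\Ahyb_i}{\H}\bigr|$, which naturally splits into an empirical-process fluctuation around its mean and the deterministic expectation itself.

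The principal obstacle --- and the origin of the additive bias $\tfrac{N}{M}\modelcovar_\text{Hyb}$ in \eqref{eq:extensions:hybrid:bound} --- is that the expectation $\mean[r_i\Ahyb_i]$ does \emph{not} vanish outside the Gaussian case. Using $\mean[\fobs_j(\sp{\a^j}{\grtr})\a^j] = \scalfac_j\grtr + \orthmodelbias^j$, the $\scalfac_j\grtr$ contribution is cancelled by construction of $\hybscalvec$, leaving $\mean[r_i\Ahyb_i]$ with columns proportional to those of $\orthmodelbias\hybW$. Cauchy-Schwarz yields the deterministic bound $|\mean[r_i\sp{\Ahyb_i}{\H}]| \leq \lnorm{\orthmodelbias\hybW}\lnorm{\H} = \sqrt{N}\modelcovar_\text{Hyb}\lnorm{\H}$, which after being rescaled against the isotropy-normalized RSC constant $\tfrac{M}{N}$ survives exactly as the $\tfrac{N}{M}\modelcovar_\text{Hyb}$ term. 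The random fluctuation part is a genuine sub-Gaussian empirical process which I would handle by a chaining bound at scale $\meanwidth[1]{\cone{\sset}{\X_0}}$; accounting for the sub-Gaussian norm $\sqrt{M}\modeldev_\text{Hyb}$ of $r_i$ and the compensating $\sqrt{M/N}$ factor carried by $\Ahyb_i$ produces the $\subgparam^{-1}(\modeldev_\text{Hyb}^2+\nu^2/M)^{1/2}\delta$ contribution.

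Combining the lower and upper bounds via the optimality of $\solu[\X]$ yields $\lnorm{\solu[\X]-\X_0} \lesssim \sqrt{N}\bigl(\subgparam^{-1}(\modeldev_\text{Hyb}^2+\nu^2/M)^{1/2}\delta + \tfrac{N}{M}\modelcovar_\text{Hyb}\bigr)$; dividing by $\sqrt{N}$ to convert the Frobenius error into the $\l{2}$-average $(\tfrac{1}{N}\sum_k\lnorm{\solu^k-\hybscalfac_k\grtr}^2)^{1/2}$ delivers \eqref{eq:extensions:hybrid:bound}. As a sanity check, in the Gaussian setting of Model~\ref{model:results:measurements} the independence of $\sp{\a^j}{\grtr}$ and $\proj_{\orthcompl{\{\grtr\}}}(\a^j)$ forces $\orthmodelbias = \vnull$, the bias term disappears, and this argument specializes consistently to Theorems~\ref{thm:results:direct} and~\ref{thm:results:lifting} upon choosing $\hybW = (1,\dots,1)^\T$ and $\hybW = \I{M}$ respectively.
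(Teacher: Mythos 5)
Your proposal is correct and follows essentially the same route as the paper: the paper likewise linearizes \eqref{eq:hybrid:lasso} at $\grtr\hybscalvec^\T$, uses $\hybW^\T\hybW = \tfrac{M}{N}\I{N}$ to make the concatenated measurement matrix isotropic (the paper simply rescales the pair to $(\sqrt{\tfrac{N}{M}}\Ahyb,\sqrt{\tfrac{N}{M}}y)$ so that its abstract Theorem~\ref{thm:proofs:abstractrecovery} applies verbatim), bounds the residual's sub-Gaussian norm by $\sqrt{M}(\modeldev_\text{Hyb}^2+\nu^2/M)^{1/2}$ via \eqref{eq:intro:notation:hoeffding}, and extracts the bias $\tfrac{N}{M}\modelcovar_\text{Hyb}$ from the non-vanishing mean of the multiplier process exactly as you describe, with the $\scalfac_j\grtr$ part cancelling and $\orthmodelbias\hybW$ surviving (Proposition~\ref{prop:proofs:modelcovar:hybrid}). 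Your explicit tracking of the $M/N$ anisotropy through the restricted-strong-convexity constant, rather than pre-normalizing, is only a presentational difference.
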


The statement of Theorem~\ref{thm:extensions:hybrid} strongly resembles Theorem~\ref{thm:results:lifting}.
In particular, we observe again that
\begin{equation}
	\Big(\tfrac{1}{N}\sum_{k = 1}^N\lnorm{\solu^k - \hybscalfac_k\grtr}^2 \Big)^{1/2} = \tfrac{1}{\sqrt{N}}\lnorm{[\solu^1 \cdots \solu^N] - \grtr\hybscalvec^\T},
\end{equation}
implying that the Group-Lasso \eqref{eq:hybrid:lasso} allows for estimation of both the desired source vector $\grtr$ and the hybrid scaling vector $\hybscalvec$ (up to a constant scaling factor).

While the sub-Gaussian parameter $\subgparam$ just appears as a constant factor, the impact of $\modelcovar_\text{Hyb}$ is much more significant.
This additional model parameter involves the isotropy mismatch vectors $\orthmodelbias^1, \dots, \orthmodelbias^M$ and precisely quantifies the \revision{\emph{asymptotic bias} (approximation error) of the estimator \eqref{eq:hybrid:lasso}} that is due to the sub-Gaussianity of the measurement vectors $\a^1, \dots, \a^M$.
Note that an additive error term of this type did not occur in Section~\ref{sec:results}, since we always have $\orthmodelbias = \vnull$ in the Gaussian case, and therefore $\modelcovar_\text{Hyb} = 0$.
\revision{Theorem~\ref{thm:extensions:hybrid} indicates that recovery is still feasible for sub-Gaussian distributions, but the resulting accuracy strongly depends on the size of $\modelcovar_\text{Hyb}$; in particular, the estimator \eqref{eq:hybrid:lasso} is not necessarily consistent if $\modelcovar_\text{Hyb} > 0$.
Without any further assumptions, it is in fact very difficult to provide sharp bounds on $\modelcovar_\text{Hyb}$.
However, it will turn out later in the context of Proposition~\ref{prop:proofs:modelcovar:hybrid} that $\modelcovar_\text{Hyb}$ is a specific version of the mismatch covariance, which is a key ingredient of our proofs in Section~\ref{sec:proofs} (see Definition~\ref{def:proofs:modelmismatch}). The mismatch covariance is in turn very closely related to an expression studied in \cite[Eq.~(2.1)]{ai2014onebitsubgauss}, dealing with signal recovery from binary, sub-Gaussian measurements.
We therefore expect that the tools of \cite{ai2014onebitsubgauss} apply to our setting as well (at least for special types of output functions $\fobs_j$), leading to worst-case upper bounds on $\modelcovar_\text{Hyb}$. An alternative way to control the size of $\modelcovar_\text{Hyb}$ is based on \emph{dithering}, which would involve a slight modification of the measurement process in Model~\ref{model:extensions:measurements} (see Subsection~\ref{subsec:extensions:further}).} 

Compared to \eqref{eq:results:direct:meas} and \eqref{eq:results:lifting:meas}, the sampling rate of \eqref{eq:extensions:hybrid:meas} is now determined by the conic mean width $\meanwidth[1]{\cone{\sset}{\grtr\hybscalvec^\T}}$.
This condition is not as explicit as in the case of sparse vectors but clearly leaves space for many different structured representations, such as sparsity in dictionaries. We will return to this point in the course of Theorem~\ref{thm:extensions:hybridglobal}, which can be regarded as a global version of Theorem~\ref{thm:extensions:hybrid}.
Let us conclude our discussion of Theorem~\ref{thm:extensions:hybrid} with the following informal summary:
\revision{
\begin{highlight}
	The hybrid method (Algorithm~\ref{algo:hybrid}) is a natural extension of the direct and lifting method, allowing for a unified treatment of both.
	The theoretical recovery guarantees from Section~\ref{sec:results} can be extended to general convex constraints on the source vector as well as sub-Gaussian measurment ensembles.
	However, going beyond Gaussian measurements comes along with an asymptotic bias of the estimator, which needs to be sufficiently small to ensure accurate reconstructions.
\end{highlight}}

\begin{remark}\label{rmk:extensions:hybrid:weightcondition}
	The semi-orthogonality condition $\hybW^\T \hybW = \tfrac{M}{N} \I{N}$ in Theorem~\ref{thm:extensions:hybrid} is essential to ensure that the measurement designs $[\ahyb_i^1 \cdots \ahyb_i^N]$ are isotropic.
	Fortunately, this assumption can be easily relaxed by a simple trick that was already applied in \cite[Cor.~1.6]{plan2015lasso} and \cite[Thm.~2.8]{genzel2016estimation}:
	Let us assume that the weight matrix $\hybW \in \R^{M \times N}$ in Algorithm~\ref{algo:hybrid} is just a matrix of full rank with $M \geq N$. Then, $\U \coloneqq \tfrac{N}{M} \hybW^\T \hybW \in \R^{N \times N}$ is a positive definite matrix and $\bar{\hybW} \coloneqq  \hybW \U^{-1/2} \in \R^{M \times N}$ is well-defined. This modified weight matrix satisfies the desired semi-orthogonality:
	\begin{align}
		\bar{\hybW}^\T \bar{\hybW} &= \underbrace{(\U^{-1/2})^\T}_{= \U^{-1/2}} \underbrace{(\hybW^\T  \hybW) \U^{-1/2}}_{= \U^{-1/2} ( \hybW^\T  \hybW)} = (\U^{-1/2})^2 ( \hybW^\T  \hybW) \\
		&= \tfrac{M}{N} \cdot ( \hybW^\T  \hybW)^{-1} ( \hybW^\T  \hybW) = \tfrac{M}{N} \I{N}, \label{eq:extension:hybrid:semiortho}
	\end{align}
	where we have used that $\U^{-1/2}$ and $\hybW^\T \hybW$ commute.
	
	Adapting the notation from Remark~\ref{rmk:results:lifting}\ref{rmk:results:lifting:matrixversion}, we set $\alift_i \coloneqq [\a_i^1 \cdots \a_i^M] \in \R^{n \times M}$ and $\Ahyb_i \coloneqq \alift \hybW = [\ahyb_i^1 \cdots \ahyb_i^N] \in \R^{n \times N}$.
	Since $\hybW = \bar\hybW \U^{1/2}$, the Group-Lasso \eqref{eq:hybrid:lasso} can be rewritten as follows:\footnote{Note that if the corresponding minimizers are not unique, the following equalities are understood as identities of sets.}
	\begin{align}
		\solu[\X] &= \argmin_{\substack{[\x^1 \cdots \x^N] \\ \in \sset}} \ \tfrac{1}{2m} \sum_{i = 1}^m
		\Big(y_i - \sum_{k = 1}^N\sp{\ahyb_i^k}{\x^k}\Big)^2 \\
		&= \argmin_{\X \in \sset} \ \tfrac{1}{2m} \sum_{i = 1}^m (y_i - \sp{\Ahyb_i}{\X})^2 \\
		&= \argmin_{\X \in \sset} \ \tfrac{1}{2m} \sum_{i = 1}^m (y_i - \sp{\alift_i \hybW}{\X})^2 \\
		&= \argmin_{\X \in \sset} \ \tfrac{1}{2m} \sum_{i = 1}^m (y_i - \sp{\alift_i \bar\hybW \U^{1/2}}{\X})^2 \\
		&= \argmin_{\X \in \sset} \ \tfrac{1}{2m} \sum_{i = 1}^m (y_i - \sp{\alift_i \bar\hybW }{\underbrace{\X \U^{1/2}}_{\eqqcolon \bar{\X}}})^2 \\
		&= \Big[ \argmin_{\bar{\X} \in \sset \U^{1/2}} \ \tfrac{1}{2m} \sum_{i = 1}^m (y_i - \sp{\alift_i \bar\hybW }{\bar{\X}})^2 \Big] \cdot \U^{-1/2} \eqqcolon \solu[{\bar{\X}}] \cdot \U^{-1/2}.
	\end{align}
	Hence, applying the hybrid method of Algorithm~\ref{algo:hybrid} with weight matrix $\hybW$ and constraint set $\sset$ is actually equivalent to applying it with $\bar{\hybW} =  \hybW \U^{-1/2}$ and $\bar\sset \coloneqq \bar{\hybW} \coloneqq  \sset \U^{1/2}$.
	Due to \eqref{eq:extension:hybrid:semiortho}, we can apply Theorem~\ref{thm:extensions:hybrid} in the latter formulation, bounding the approximation error of $\solu[{\bar{\X}}]$.
	And since $\solu[{\bar{\X}}] = \solu[\X] \cdot \U^{1/2}$, this particularly implies a recovery result for the hybrid method with input $\hybW$ and $\sset$.
	We leave the formal details of this statement to the interested reader.
\end{remark}

Finally, let us illustrate the versatility of the hybrid method by some simple examples:
\begin{example}
\begin{rmklist}
\item\label{ex:extensions:hybrid:direct}
	\emph{The direct method.} Selecting $\hybW \coloneqq (1, \dots, 1) \in \R^{M \times 1}$ ($N= 1$) and $\sset \coloneqq R \ball[1][n]$, the direct method of Algorithm~\ref{algo:direct} coincides with Algorithm~\ref{algo:hybrid}.
	We observe that $\hybW^\T \hybW = M \I{1}$ and
	\begin{equation}
		\hybscalvec = \tfrac{1}{M} \hybW^\T \scalvec = \tfrac{1}{M} \sp{\hybW}{\scalvec} = \tfrac{1}{M} \sum_{j = 1}^M \scalfac_j = \bar\scalfac,
	\end{equation}
	which precisely corresponds to the mean scaling parameter introduced in Definition~\ref{def:results:scalparam}.
	And since
	\begin{equation}
		\meanwidth[1]{\cone{\sset}{\bar\scalfac\grtr}}^2 \lesssim s \cdot \log(\tfrac{2n}{s})
	\end{equation}
	by Proposition~\ref{prop:proofs:meanwidth}\ref{prop:proofs:meanwidth:direct} below, we can conclude that the statements of Theorem~\ref{thm:results:direct} ($R = \lnorm{\bar\scalfac\grtr}[1]$) and Theorem~\ref{thm:extensions:hybrid} are precisely the same under the hypothesis of Model~\ref{model:results:measurements}.
\item\label{ex:extensions:hybrid:lifting}
	\emph{The lifting method.} Choosing $\hybW \coloneqq \I{M} \in \R^{M \times M}$ ($N= M$) and $\sset \coloneqq \{ \X \in \R^{n\times M} \suchthat \lnorm{\X}[1,2] \leq \lnorm[\big]{\grtr\scalvec^\T}[1,2] = R \}$, Algorithm~\ref{algo:hybrid} equals the lifting approach of Algorithm~\ref{algo:lifting}.
	In this case, we have $\hybW^\T \hybW = \I{M}$ and $\hybscalvec = \tfrac{M}{M} \hybW^\T \scalvec = \scalvec$.
	Moreover, by Proposition~\ref{prop:proofs:meanwidth}\ref{prop:proofs:meanwidth:lifting},
	\begin{equation}
		\meanwidth[1]{\cone{\sset}{\grtr\scalvec^\T}}^2 \lesssim s \cdot \max\{M, \log(\tfrac{2n}{s})\},
	\end{equation}
	which implies that Theorem~\ref{thm:results:lifting} and Theorem~\ref{thm:extensions:hybrid} do also coincide under Model~\ref{model:results:measurements}.
\item\label{ex:extensions:hybrid:priorinfo}
	\emph{Incorporating prior knowledge.}
	Motivated by the example discussed at the beginning of Subsection~\ref{subsec:extensions:hybrid}, let us assume that $\sign(\scalfac_j) \neq 0$, $j = 1, \dots, M$ are available as prior information about the sensing model. Then, it is quite natural to apply the hybrid method with 
	\begin{equation}
		\hybW \coloneqq (\sign(\scalfac_1), \dots, \sign(\scalfac_M)) \in \R^{M \times 1},
	\end{equation}
	leading to the hybrid measurement vector $\ahyb \coloneqq \sum_{j = 1}^M \sign(\scalfac_j) \a^j$.
	Similarly to Example~\ref{ex:extensions:hybrid}\ref{ex:extensions:hybrid:direct}, we now have $\hybW^\T \hybW = M \I{1}$ and
	\begin{equation}
		\hybscalfac \coloneqq \hybscalvec = \tfrac{1}{M} \hybW^\T \scalvec = \tfrac{1}{M} \sum_{j = 1}^M \sign(\scalfac_j) \scalfac_j = \tfrac{1}{M} \sum_{j = 1}^M \abs{\scalfac_j}.
	\end{equation}
	Hence, Theorem~\ref{thm:extensions:hybrid} states that \eqref{eq:hybrid:lasso} recovers the vector $\hybscalfac\grtr$ under the hypothesis of Model~\ref{model:extensions:measurements}. In contrast, by Theorem~\ref{thm:results:direct}, the direct method with $\adir \coloneqq \sum_{j = 1}^M \a^j$ would just approximate $\bar\scalfac\grtr$, which is not meaningful if $\bar\scalfac = \tfrac{1}{M} \sum_{j = 1}^M \scalfac_j \approx 0$.
	\qedhere
\end{rmklist}
\label{ex:extensions:hybrid}
\end{example}

\subsection{Recovery Based on the Global Mean Width}
\label{subsec:extensions:hybridglobal}

The condition of \eqref{eq:extensions:hybrid:meas} shows that the conic mean width $\meanwidth[1]{\cone{\sset}{\grtr\hybscalvec^\T}}$ has a dramatic impact on the sampling rate of Theorem~\ref{thm:extensions:hybrid}.
Indeed, if $\grtr\hybscalvec^\T$ does not lie on the boundary of $\sset$, the cone $\cone{\sset}{\grtr\hybscalvec^\T}$ might simply equal $\R^{n \times N}$ so that (cf. \cite[Ex.~3.1\revision{(}a)]{genzel2016estimation})
\begin{equation}
	\meanwidth[1]{\cone{\sset}{\grtr\hybscalvec^\T}}^2 = \meanwidth[1]{\R^{n \times N}}^2 \asymp n \cdot N.
\end{equation}
This phenomenon is actually the reason why Theorem~\ref{thm:results:direct} and Theorem~\ref{thm:results:lifting} both rely on a ``perfect'' tuning of the sparsity parameter $R$.
Another drawback is that the mapping $\grtr \mapsto \meanwidth[1]{\cone{\lnorm{\grtr}[1]\ball[1][n]}{\grtr}}$ turns out to be discontinuous in the neighborhood of sparse vectors, which could be very problematic when dealing with \emph{compressible} source vectors.
As a way out, we now state a different version of Theorem~\ref{thm:extensions:hybrid} that is just based on the \emph{global mean width} of $\sset$:\footnote{For a formal definition of the global mean width, see Definition~\ref{def:proofs:meanwidth} below.}
\begin{theorem}[Hybrid Method -- Global Version]\label{thm:extensions:hybridglobal} 
	Assume that Model~\ref{model:extensions:measurements} holds true. Let $\hybW = [\hybw^1 \cdots \hybw^N] \in \R^{M \times N}$ be a weight matrix with $\hybW^\T \hybW = \tfrac{M}{N} \I{N}$. Moreover, suppose that $\grtr\hybscalvec^\T \in \sset$, where $\sset \subset \R^{n \times N}$ is a bounded\footnote{If the set $\sset$ is unbounded, we may have $\meanwidth{\sset} = \infty$ so that the statement of Theorem~\ref{thm:extensions:hybridglobal} becomes meaningless.} convex set.
	Then, there exists a numerical constant $C > 0$ such that the following holds true for every (fixed)
	$\delta \in (0,1]$ with probability at least $1 - 5 \exp(- C \cdot \subgparam^{-4} \cdot \delta^2 \cdot m )$: If
	\begin{equation}\label{eq:extensions:hybridglobal:meas}
		m \gtrsim \subgparam^{4} \cdot \delta^{-4} \cdot \meanwidth{\sset}^2,
	\end{equation}
	then any minimizer $[\solu^1 \cdots \solu^N] \in \R^{n\times N}$ of \eqref{eq:hybrid:lasso} satisfies
	\begin{equation}\label{eq:extensions:hybridglobal:bound}
		\Big(\tfrac{1}{N}\sum_{k = 1}^N\lnorm{\solu^k - \hybscalfac_k\grtr}^2 \Big)^{1/2} \lesssim \max\Big\{\tfrac{1}{\sqrt{N}}, \subgparam \cdot (\modeldev_\text{Hyb}^2 + \tfrac{\nu^2}{M})^{1/2} \Big\} \cdot \delta + \tfrac{N}{M} \cdot  \modelcovar_\text{Hyb},
	\end{equation} 
	where $\modeldev_\text{Hyb}^2$ and $\modelcovar_\text{Hyb}$ are given by \eqref{eq:extensions:hybrid:modeldev} and \eqref{eq:extensions:hybrid:modelcovar}, respectively.
\end{theorem}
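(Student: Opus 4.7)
The plan follows the same blueprint as Theorem~\ref{thm:extensions:hybrid}: invoke the abstract recovery framework of Theorem~\ref{thm:proofs:abstractrecovery} for the lifted measurement matrices $\Ahyb_i = \alift_i \hybW \in \R^{n \times N}$, which are isotropic sub-Gaussian with parameter $\asymp \subgparam$ thanks to the assumed semi-orthogonality $\hybW^\T \hybW = \tfrac{M}{N}\I{N}$. The essential change compared to the conic version is that all geometric quantities must now be indexed by the entire translated feasible set $\sset - \grtr\hybscalvec^\T$, so that the complexity of the problem is governed by the global mean width $\meanwidth{\sset}$ instead of the conic mean width $\meanwidth[1]{\cone{\sset}{\grtr\hybscalvec^\T}}$ that controlled Theorem~\ref{thm:extensions:hybrid}.

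Starting from the Lasso optimality $\losssqemp(\solu[\X]) \leq \losssqemp(\grtr\hybscalvec^\T)$, I would decompose the loss difference into a multiplier part, linear in $\H \coloneqq \solu[\X] - \grtr\hybscalvec^\T$ and driven by the residuals $y_i - \sp{\Ahyb_i}{\grtr\hybscalvec^\T}$, and a quadratic remainder. The multiplier process is controlled uniformly on $\sset - \grtr\hybscalvec^\T$ by a standard sub-Gaussian deviation estimate, contributing a term of order $\subgparam \cdot (\modeldev_\text{Hyb}^2 + \nu^2/M)^{1/2} \cdot \meanwidth{\sset}/\sqrt{m}$, plus the $(N/M)\modelcovar_\text{Hyb}$ bias coming from the isotropy-mismatch vectors $\orthmodelbias^j$ (exactly as in Theorem~\ref{thm:extensions:hybrid}, only now indexed by the whole set). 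The hard part is the quadratic remainder: without the descent-cone restriction that produced restricted strong convexity in the proof of Theorem~\ref{thm:extensions:hybrid}, one must instead establish a uniform small-ball lower bound of the form
\[
\tfrac{1}{m}\sum_{i=1}^m \sp{\Ahyb_i}{\H}^2 \;\gtrsim\; \lnorm{\H}^2 - \subgparam^2 \cdot \lnorm{\H} \cdot \meanwidth{\sset}/\sqrt{m}
\]
uniformly in $\H \in \sset - \grtr\hybscalvec^\T$. This is the natural sub-Gaussian globalization of RSC; it can be produced either from Mendelson's small-ball method or from a matrix deviation inequality, and is the step that dictates the $\subgparam$-dependence in \eqref{eq:extensions:hybridglobal:meas}.

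Combining the multiplier upper bound with this quadratic lower bound yields a one-sided inequality of the schematic shape $t^2 \lesssim A t + B$ for $t = \lnorm{\H}$, whose solution satisfies $t \lesssim A + \sqrt{B}$. The square root on $B$ is precisely what degrades the sampling complexity from $\delta^{-2}$ in Theorem~\ref{thm:extensions:hybrid} to $\delta^{-4}$ here: demanding $\sqrt{B} \lesssim \delta$ forces $m \gtrsim \subgparam^4 \delta^{-4} \meanwidth{\sset}^2$. After rescaling the left-hand side of \eqref{eq:extensions:hybridglobal:bound} by $1/\sqrt{N}$, the $A$-contribution produces the $\delta/\sqrt{N}$ term that becomes active in the $\max$ when the noise and model mismatch are small, the $\sqrt{B}$-contribution yields the $\subgparam \cdot (\modeldev_\text{Hyb}^2 + \nu^2/M)^{1/2} \delta$ term, and the $(N/M)\modelcovar_\text{Hyb}$ addend is inherited unchanged from the multiplier bound. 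The main obstacle throughout is proving the global small-ball inequality with the stated explicit $\subgparam$ scaling; once that is in place, the rest is essentially the same book-keeping as in Theorem~\ref{thm:extensions:hybrid}.
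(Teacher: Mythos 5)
Your proposal is correct and takes essentially the same route as the paper: the same multiplier/quadratic decomposition, the same two concentration inputs (the quadratic-process deviation bound and Mendelson's multiplier bound), the same global estimate $\meanwidth[t]{\sset-\grtr\hybscalvec^\T}\le 2\meanwidth{\sset}$, and the $\delta^{-4}$ arises for exactly the reason you identify. The only packaging difference is that the paper does not establish a new global small-ball inequality; it simply invokes its already-proven local-mean-width version of the abstract recovery theorem (Theorem~\ref{thm:app:abstractrecovery:localmw}) at the localization radius $t \asymp \max\{1,\subgparam\cdot\modeldev(\grtr\hybscalvec^\T)\}\cdot\subgparam\,(\meanwidth{\sset}/\sqrt{m})^{1/2} + \modelcovar(\grtr\hybscalvec^\T)$, which absorbs your quadratic lower bound and the resolution of $t^2 \lesssim At + B$ into the choice of scale.
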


The most striking difference to Theorem~\ref{thm:extensions:hybrid} is that sample-complexity condition of \eqref{eq:extensions:hybridglobal:meas} only involves the \emph{global} mean width.
Indeed, this particularly resolves the tuning issue (cf. Remark~\ref{rmk:results:direct}\ref{rmk:results:direct:tuning}), since the value of $\meanwidth{\sset}$ does not depend on $\grtr\hybscalvec^\T$ anymore and we just have to ensure that $\grtr\hybscalvec^\T$ is contained (``somewhere'') in $\sset$.
The price to pay for this simplification is the unusual oversampling factor of $\delta^{-4}$ in \eqref{eq:extensions:hybridglobal:meas}. 
In other words, the error decay in $m$ now just scales as $\asympfaster{m^{-1/4}}$, which is slower than the error rate $\asympfaster{m^{-1/2}}$ achieved by Theorem~\ref{thm:extensions:hybrid}.\footnote{For linear observations ($\fobs_j = \Id$), the error rate of Theorem~\ref{thm:extensions:hybridglobal} can be easily improved to $\asympfaster{m^{-1/2}}$, but as already mentioned, our focus is rather on the non-linear case which is more complicated.}
\revision{Moreover, the appearance of the additional term $1/\sqrt{N}$ on the right-hand side of \eqref{eq:extensions:hybridglobal:bound} implies that Theorem~\ref{thm:extensions:hybridglobal} is also suboptimal in the low-noise regime, i.e., $\modeldev_\text{Hyb}^2 + \nu^2 / M$ is small compared to $1 / N$. In particular, Theorem~\ref{thm:extensions:hybridglobal} does not guarantee exact recovery of $\grtr\hybscalvec^\T$ if $\modeldev_\text{Hyb} = \modelcovar_\text{Hyb} = \nu = 0$, while Theorem~\ref{thm:extensions:hybrid} would do.}

Let us finally give an example that emphasizes the main benefit of Theorem~\ref{thm:extensions:hybridglobal}, namely that the global mean width is oftentimes easier to control than its conic counterpart:
\begin{example}[Sparse representations in a dictionary]\label{ex:extensions:hybridglobal:dictionaries}
	We assume that $\sset$ takes the form $\sset = R \dict \ball[1][n']$, where $\dict = [\dictatom_{1} \cdots \dictatom_{n'}] \in \R^{n \times n'}$ is a \emph{dictionary}. Then, by \cite[Ex.~3.1\revision{(}b)]{genzel2016estimation}, we have the following bound on the mean width:
	\begin{align}
		\meanwidth{\sset}^2 = R^2 \cdot \meanwidth{\dict \ball[1][n']} = R^2 \cdot \meanwidth{\convhull\{\pm\dictatom_1, \dots, \pm\dictatom_{n'}\}}^2 \lesssim R^2 \cdot \max_{1 \leq l \leq n'} \lnorm{\dictatom_l}^2 \cdot \log(2 n'),
	\end{align}
	where $\convhull\{\pm\dictatom_1, \dots, \pm\dictatom_{n'}\}$ denotes the convex hull of the (signed) dictionary atoms of $\dict$.
	In particular, if $\grtr \in \sset$ possesses an $s$-sparse representation $\vec{c} \in \S^{n' - 1}$ (i.e., $\grtr = \dict \vec{c}$), we can conclude that $\meanwidth{\sset}^2$ can be essentially controlled by the sparsity $s$:
	\begin{equation}
		\lnorm{\vec{c}}[1] \leq \sqrt{\lnorm{\vec{c}}[0]} \cdot \lnorm{\vec{c}} \leq \sqrt{s} \cdot 1 = \sqrt{s}  \ (\approx R).
	\end{equation}
\end{example}

\subsection{Further Extensions}
\label{subsec:extensions:further}

The following list sketches several extensions that we did not include in our results for the sake of brevity.
Although this may require some technical preliminaries and additional assumptions, we regard these steps to be relatively straightforward.

\begin{itemize}
\item
	\emph{Adversarial noise.} The noise terms $e_1, \dots e_m \in \R$ in Model~\ref{model:extensions:measurements} do not have to be independent and could be even deterministic.
	This would lead to an extra additive term in the error bounds which takes the form
	\begin{equation}
		\tfrac{1}{\sqrt{M}} \cdot \Big(\tfrac{1}{m} \sum_{i = 1}^m e_i^2 \Big)^{1/2}.
	\end{equation}
	For further details, see also \cite{genzel2016estimation}.
\item
	\emph{Convex loss functions.} So far, we have only considered the squared loss for our recovery programs.
	But it could be also beneficial to use a different loss function in \eqref{eq:hybrid:lasso}.
	More precisely, we might replace \eqref{eq:hybrid:lasso} by
	\begin{equation}
		\min_{\substack{[\x^1\cdots\x^N] \\ \in \R^{n\times N}}} \ \tfrac{1}{m} \sum_{i = 1}^m
		\loss\Big(\sum_{k = 1}^N\sp{\ahyb_i^k}{\x^k}, y_i\Big) \quad
		\text{s.t. $[\x^1 \cdots \x^N] \in \sset$,}
	\end{equation}
	where $\loss \colon \R \times \R \to \R$ is a convex function.
	The squared loss then just corresponds to $\loss(v_1, v_2) = \tfrac{1}{2} (v_1 - v_2)^2$.
	Under relatively mild conditions on $\loss$, such as \emph{restricted strong convexity}, similar recovery guarantees as above can be proven; see again \cite{genzel2016estimation}.
\item
	\revision{\emph{Random non-linearities.} The non-linearities $\fobs_1, \dots, \fobs_M$ could be random functions which are independent of the measurement ensemble $\{\a^j\}_{j = 1}^M$;
	in other words, the output functions are allowed to change between different measurements in an i.i.d.\ random manner.
	As an example, they could take the form $\fobs_j = \xi \cdot \fobs$, where $\fobs \colon \R \to \R$ is a scalar function and $\xi$ is a $\pm 1$-valued random variable, modeling (independent) random sign-flips in every node and measurement step $i = 1, \dots, m$.
	Our results do literally hold true in this advanced situation, but note that the expected values then have to computed with respect to $\fobs_j$ as well.}
\item
	\revision{\emph{Anisotropic measurement vectors.} Instead of isotropy in Model~\ref{model:extensions:measurements}, one could just assume that $\mean[\a^j (\a^j)^\T] = \Covmatr_j$ for $j = 1, \dots, M$ and positive definite covariance matrices $\Covmatr_j \in \R^{n \times n}$, which can be even \emph{unknown}.
	Technically, this extension relies on the same argument as in Remark~\ref{rmk:extensions:hybrid:weightcondition}, with the difference that the unknown covariance matrices $\Covmatr_j$ are only implicitly used by the estimator \eqref{eq:hybrid:lasso} here.}
\item
	\revision{\emph{Dithering.} As pointed out in the discussion subsequent to Theorem~\ref{thm:extensions:hybrid}, a typical problem with sub-Gaussian measurement ensembles is the presence of a bias term $\modelcovar_\text{Hyb} > 0$, which may prevent \eqref{eq:hybrid:lasso} from being a consistent estimator.
	In fact, this shortcoming is not an artifact of our statistical analysis, but rather a limitation of the model setup, i.e., signal recovery from non-linear and non-Gaussian measurement is sometimes impossible; see also \cite{ai2014onebitsubgauss}.
	It therefore came as a surprise that in the situation of \emph{quantized compressed sensing}, a slight modification of the measurement process, based on \emph{dithering}, enables consistent estimation in the general sub-Gaussian case; see \cite{xu2018dithering,dirksen2018robust,thrampoulidis2018dithering} for recent advances in this direction.
	Being in line with the findings of \cite{thrampoulidis2018dithering}, one can also apply the abstract framework of Section~\ref{sec:proofs} to derive a recovery guarantee for the Lasso estimator with dithered $1$-bit observations; more specifically, an appropriate dithering step allows us to control the bias term (i.e., the mismatch covariance) by the sample size $m$, in such a way that it tends to $0$ as $m \to \infty$; see \cite[Subsec.~4.2.2]{genzel2019thesis} for details.
	However, it is worth pointing out that all these results are restricted to the case of $M = 1$ (i.e., single-index models) and do only consider specific types on non-linearities (uniform and $1$-bit quantizers). But we still believe that dithering could be incorporated into the more general setup of this work at least up to a certain extent, although this is expected to be more challenging than the previously mentioned extensions.}
\end{itemize}

\section{Practical Applications and Numerics}
\label{sec:practice}

In this part, we return to the prototypical example of \emph{wireless sensor networks}, which was already sketched in the introduction and has served as motivation for our superposition model.
In particular, we discuss the theoretical findings of the previous sections with respect to this specific application and provide some numerical simulations.

\subsection{Wireless Sensor Networks}
\label{subsec:practical:networks}

Distributed sparse parameter estimation using wireless sensor
networks is a promising approach to many environmental monitoring
problems \cite{Akyildiz:2002} and forms a natural application of compressed sensing
\cite{Bajwa2006,Luo2009,Cao2013}.  In fact, such network architectures have advantages over
conventional sensing technologies in terms of costs, coverage,
redundancy, and reliability.  Typical applications are structural
health monitoring, medical sensor solutions, traffic monitoring as
well as warning systems for heat, fire, seismic activities, or
meteorologic disturbances.  While several communication standards,
embedded platforms, and operating systems are available for this
problem settings (e.g., TinyOS and IEEE 802.15.4), some of the
inherent limitations of these transceiver designs are low transmission
and computing power due to battery saving. 
It is therefore important to devise approaches to recovery under such
non-ideal conditions (e.g., see ``Dirty RF'' \cite{Fettweis2007}).

As prototypical setup, we may consider a model
situation where multiple \emph{sensor nodes} perform individual measurements
on the same source.  For example, each sensor reading could correspond to a
spatial sample of a temperature field in a building or measurements of
the water flow and quality taken at different locations.  The
fluctuation of these quantities are typically specified by only a
small number of active parameters which can be often modeled as a
\emph{sparse vector} $\grtr \in \R^n$ in a known transform domain (e.g.,
Fourier or wavelets).  The task of the wireless sensor network is now
to communicate $\grtr$ to a central \emph{fusion center} in an \emph{ad hoc}
and \emph{autonomous} manner, bypassing an additional resource
and time overhead.  During the $i$-th communication step, all
sensor nodes $j = 1, \dots, M$ transmit their measured data $\sp{\a_i^j}{\grtr}$.  But due to the
overall processing, the low-quality hardware components as well as the
wireless channel, the $j$-th transceiver node effectively only contributes
a distorted signal $\fobs_j(\sp{\a_i^j}{\grtr})$. Since the transmission procedure takes place within a shared wireless medium, this finally leads to a superposition of non-linear signals at the central receiver; see Figure~\ref{fig:sensornetwork} for an illustration. 
Mathematically, we precisely end up with our initial model from \eqref{eq:intro:measurements}:
\begin{equation}
	y_i = \sum_{j = 1}^M \fobs_j (\sp{\a_i^j}{\grtr}) + e_i, \qquad i = 1, \dots, m.
\end{equation}

\subsubsection*{The Non-Linear Functions $\fobs_1, \dots, \fobs_M$}

We have not specified yet how the non-linearities $\fobs_1, \dots, \fobs_M$ usually look like in practice.
It was already pointed out that the main purpose of these functions is to capture the common effects caused by the wireless channel and hardware imperfections.
For example, the latter issue is highly relevant to low-cost transceiver nodes whose radio-frequency (RF) components only provide very low signal quality.
Here, severe degradations are caused by phase noise and non-linear distortions, such as ADC impairments or IQ imbalances. 
As an illustration, let us briefly discuss two typical phenomena that often arise in applications:
\begin{itemize}
\item
	\emph{Power amplifiers.} An important type of disturbance is caused by the non-linear characteristics of low-cost amplifiers used at the nodes; see \cite{Rapp1991} for a widely used model. In the extreme case, this leads to a clipping at a certain \emph{amplitude level} (threshold) $A > 0$:
	\begin{equation}\label{eq:practice:clip}
		\fobs^{(A)}(v) \coloneqq\sign(v)\cdot\min\{\abs{v}, A\}.
	\end{equation}
	While the sign (phase) of the signal is still preserved in this generic model, the amplitude undergoes a (data-dependent) deformation.
\item
	\emph{Wireless channel.} In a realistic setup, each node modulates its sensor readings on particular waveforms, propagating through the wireless channel after amplification. A filtering and sampling step is then performed at the central receiver.  As a simple model, we may assume that the effective channel is approximately constant over the entire communication period, including all transceiver operations. Formally, this corresponds to a scalar multiplication
	\begin{equation}\label{eq:practice:channelmultiplication}
		\fobs^{(h)}(v) \coloneqq h \cdot v,
	\end{equation}
	where $h \in \R$ is the \emph{channel coefficient}, which might be unknown a priori.  If the individual channel configuration of a sensor device is approximately known, one may determine the sign (phase) of $h$ and consider $\fobs^{(\abs{h})}(v)=\abs{h}\cdot v$ instead.  A common approach to achieve such a sign-compensation is the concept of \emph{channel reciprocity} for narrow-band time-division multiplexing transmission.  Here, pilot signals are periodically broadcasted by the receiver to all sensors (simultaneously).  Each node is now capable of estimating its individual coefficient in the reverse direction (downlink), and in that way, also approximating the parameter $h$. However, due to limited transmission power, this step usually only allows for specifying the sign of $h$ and not its magnitude.
\end{itemize}
Therefore, put together, each contribution to the superposition of
\eqref{eq:intro:measurements} could be modeled by a function of the
form $\fobs_j = \fobs^{(h_j)}\circ \fobs^{(A)}$.
These types of non-linearities will be also used for our numerical experiments below in Subsection~\ref{subsec:practice:experiments}.
However, this is still just a simplified example of distortion, since the channel may be outdated in many applications and further disturbances could be present, like an oscillator mismatch or phase noise.

\subsubsection*{Coherent vs. Non-Coherent Communication}

If all sensors use a common, synchronized clock (cf. \cite{Brown05}) and possess sufficient knowledge about the wireless channel, a \emph{coherent cooperative transmission} can be achieved, e.g., by channel reciprocity.
This essentially means that the phases (signs) of the individually communicated signals $\sp{\a_i^j}{\grtr}$ are preserved by the functions $\fobs_j$.
In other words, (most of) the scaling parameters $\scalfac_1, \dots, \scalfac_M$ (cf. Definition~\ref{def:results:scalparam}) are of the same sign and we can expect that
\begin{equation}
	\abs{\bar\scalfac} = \tfrac{1}{M} \abs[\Big]{\sum_{j = 1}^M \scalfac_j} \approx \tfrac{1}{M} \sum_{j = 1}^M \abs{\scalfac_j} \gg 0.
\end{equation}
With such a pre-compensation, the signals add up \emph{coherently} at the receiver, so that the strategy of the direct method (Algorithm~\ref{algo:direct}), i.e., computing ``aligned'' superimposed vectors $\adir_i = \sum_{j = 1}^M \a_i^j$, turns out to be very natural.
Moreover, due to the independent channel conditions, the probability of outage can be significantly reduced in that way, which is known as \emph{cooperative} or \emph{multiuser diversity} in communication engineering.
The statement of Theorem~\ref{thm:results:direct} shows that Algorithm~\ref{algo:direct} is indeed very appropriate in the this scenario, since already $\asympfaster{s\log(2n/s)}$ superimposed measurements are sufficient for recovery.
Thus, the Lasso estimator \eqref{eq:direct:lasso} basically \emph{achieves the same sampling rate as in the linear case}. 
Our main results even suggest that the direct method should be preferred to the lifting method (Algorithm~\ref{algo:lifting}) in the situation of coherent transmission, since the latter one is computationally more challenging and requires more measurements. 

On the other hand, Theorem~\ref{thm:results:direct} indicates that successful source estimation heavily relies on the assumption of $\abs{\bar\scalfac} \gg 0$.
Unfortunately, such a coherent setup is often difficult to implement in practice, especially for ad-hoc wireless sensor networks.  
For this reason, there also exists a research branch investigating \emph{non-coherent cooperative transmission}, although most works rather focus on achieving higher power gains at the receiver (e.g., see \cite{Scaglione03}).
The generic task of non-coherent network architectures is to estimate the structured data $\grtr$ in the presence of unknown channel coefficients (cf. \eqref{eq:practice:channelmultiplication}), implying that the (signs of the) sensor parameters $\scalfac_j$ are unknown.
Note that, for ideal transmitters, i.e., $A\to\infty$ in \eqref{eq:practice:clip}, this challenge is actually an instance of a \emph{bilinear inverse problem} for blind sensor calibration (see \eqref{eq:intro:bilinearmodel}).
While the direct method may completely fail in these general situations, the lifting approach of Algorithm~\ref{algo:lifting} now becomes advantageous. \revision{Theorem~\ref{thm:results:lifting} shows that, with $m = \asympfaster{s \cdot \max\{M, \log(2n / s)\}}$ samples, \eqref{eq:lifting:l1l2lasso} does not only allow for recovery of the source vector $\grtr$ but even allows for an estimate of the unknown scaling factors $\scalfac_1, \dots, \scalfac_M$ (cf. \eqref{eq:results:lifting:bound:svd}).} Hence, to a certain extent, \eqref{eq:lifting:l1l2lasso} enables us to ``learn'' the underlying system configuration. Once such a ``calibration step'' has been performed, one may continue using the hybrid approach of Algorithm~\ref{algo:hybrid}, tuned by our additional information on the sensor environment (see Example~\ref{ex:extensions:hybrid}\ref{ex:extensions:hybrid:priorinfo}).

%

\subsection{Numerical Experiments}
\label{subsec:practice:experiments}

In this part, we validate our recovery approaches by several numerical simulations. 
\revision{For this purpose, we have generated normalized $s$-sparse random
vectors $\grtr\in\R^n$ for two different parameter regimes,  $(s,n)=(4,64)$ and $(s,n)=(10,256)$,}
where each of the non-zero entries was drawn from an independent Gaussian random variable.
Following Model~\ref{model:results:measurements}, every node $j =
1,\dots, M$ performs $i = 1,\dots, m$ measurements of $\grtr$ with
i.i.d.\ standard Gaussian vectors $\a_i^j \distributed
\Normdistr{\vnull}{\I{n}}$. 
Figure~\ref{fig:practice:txsignal} visualizes the signal vectors $[\fobs_j(\sp{\a_i^j}{\grtr})]_{1\leq i \leq m}$ of two exemplary nodes for $m = 128$. Here, each of the taken measurements (red) undergoes a clipping \eqref{eq:practice:clip} with threshold $A=1$ (green) and is then transmitted into the channel,
which corresponds to a scalar multiplication according to \eqref{eq:practice:channelmultiplication} with coefficients 
$h_j\distributed \Normdistr{0}{1}$  (blue).
\begin{figure}
  \centering
  \includegraphics[width=.8\linewidth]{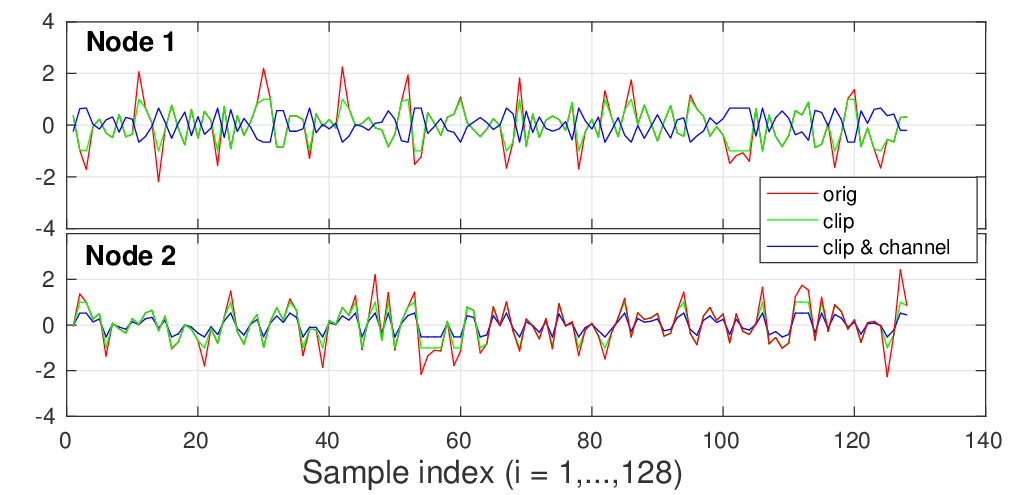}
  \caption{Transmitted signals of two different nodes: The
    original signal (red), clipped channel input (green), and clipped
    channel output (blue). Note that the first node was affected by a
    sign-flip, i.e., $\sign(h_j) = -1$.  }
  \label{fig:practice:txsignal}
\end{figure}

\emph{The impact of non-linear distortions.}
The plot of Figure~\ref{fig:practice:clio} demonstrates the performance of the direct method (Algorithm~\ref{algo:direct}) with respect to a ``softcut'' (clip) non-linearity \eqref{eq:practice:clip}, i.e., $\fobs_j=\fobs^{(A)}$ for $j = 1, \dots, M$.
Not very surprisingly, the reconstruction becomes more accurate as $A$ grows, since the $\fobs_j$ are then ``closer to be linear.''
The horizontal distances between the single curves are of particular interest here, since they determine the number of extra (sensor) nodes required to achieve the same recovery performance.  This can be of considerable practical relevance if the price of low-power devices (small $A$) is significantly lower than the one of high-quality devices (large $A$).
\begin{figure}
  \centering
  \includegraphics[width=1\linewidth]{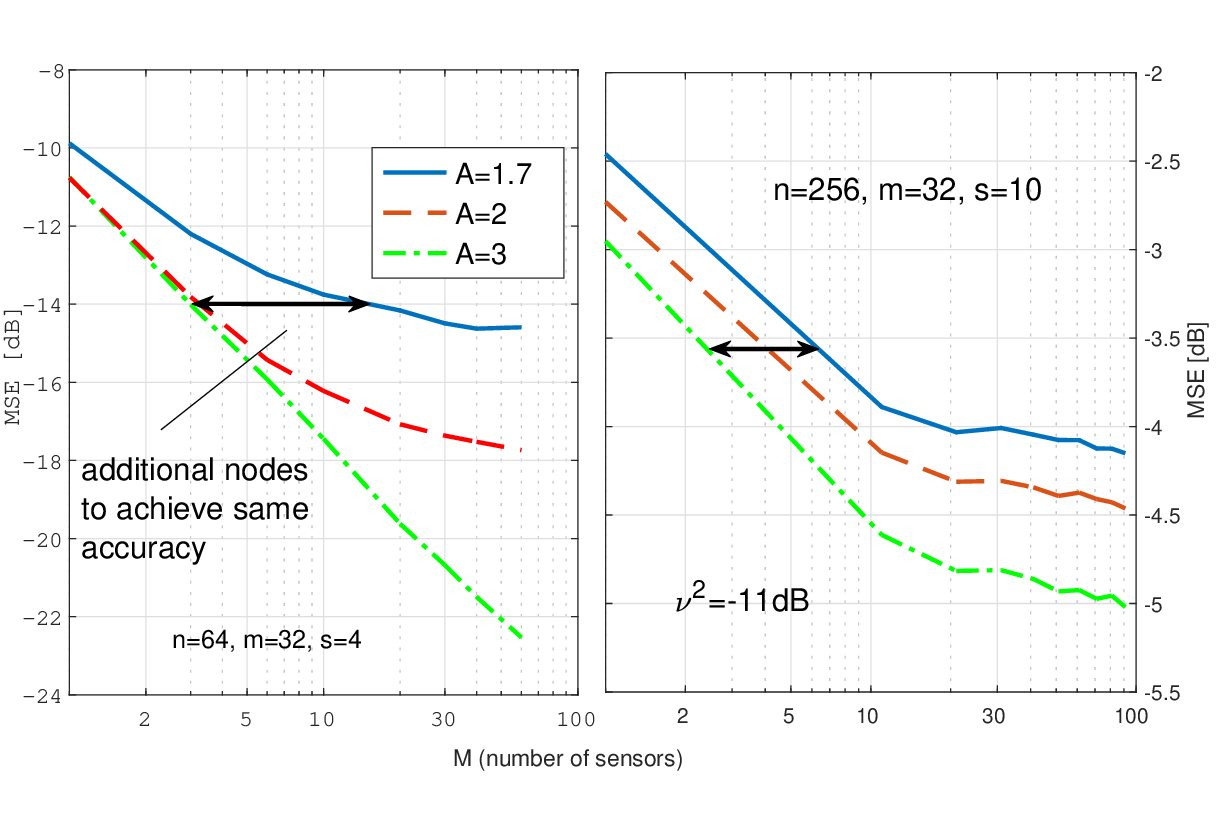}
  \caption{\revision{Recovery of $s$-sparse vectors in $\R^n$
      from $m = 32$ noisy measurements with a ``softcut''
      non-linearity $\fobs_j=\fobs^{(A)}$
      (cf. \eqref{eq:practice:clip}) for $A\in\{1.7, 2,3\}$ and
      $\a_i^j\distributed \Normdistr{\vnull}{\I{n}}$ being standard
      Gaussian vectors. The additive Gaussian noise has variance $\nu^2=-11$dB.
      The plots show the mean squared error (MSE) of
      the reconstruction via \eqref{eq:direct:lasso} for $(s,n)=(4,64)$ (left) and $(s,n)=(10,256)$ (right).  }}
  \label{fig:practice:clio}
\end{figure}
\revision{Figure~\ref{fig:practice:clio2} demonstrates the difference in recovery performance for $(s,n) = (4,64)$ when the sensor readings are obtained from Gaussian and Bernoulli measurements, respectively. As expected, the Bernoulli setup is more robust against the distortions caused by a ``softcut'' non-linearity, since these measurements have a smaller tail in their amplitudes compared to the Gaussian distribution.}
\begin{figure}
  \centering
  \includegraphics[width=.8\linewidth]{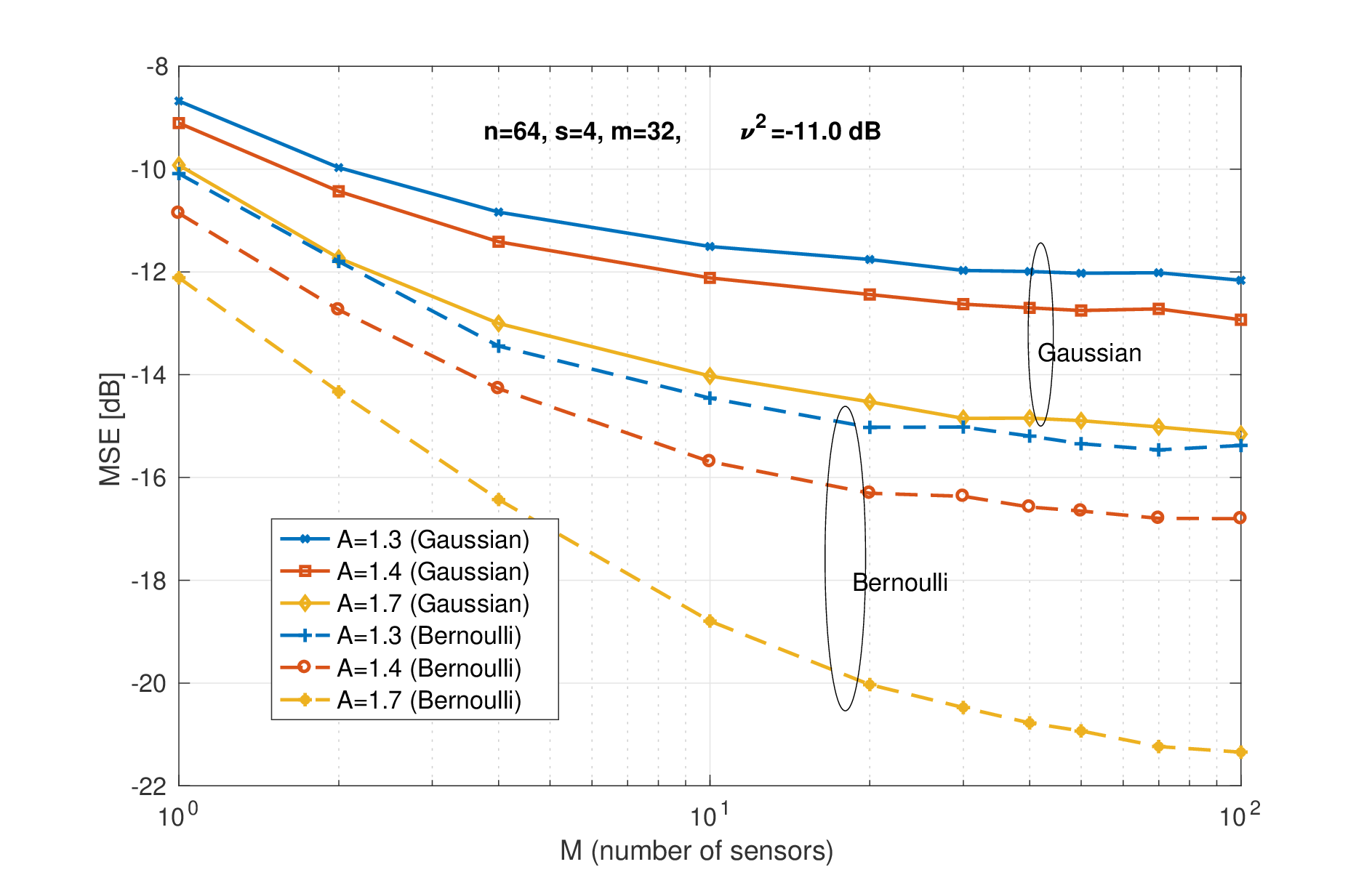}
  \caption{\revision{Recovery of $4$-sparse vectors in $\R^{64}$ from $m = 32$ noisy measurements. 
  		Here, a ``softcut'' non-linearity $\fobs_j=\fobs^{(A)}$ (cf. \eqref{eq:practice:clip}) for $A\in\{1.3, 1.4, 1.7\}$ has been applied to the sensor readings, where the measurement vectors $\a_i^j$ are either standard Gaussian vectors or i.i.d.\ Bernoulli vectors (symmetric $\pm1$-valued).  
  		The additive Gaussian noise has variance $\nu^2=-11$dB.
  		The plots show the mean squared error (MSE) of the reconstruction via \eqref{eq:direct:lasso}.}}
  \label{fig:practice:clio2}
\end{figure}

\emph{Coherent vs. non-coherent communication.}  For testing coherent transmission, we assume that
the phases of the individual channel coefficients $h_1, \dots, h_M$
have been already resolved, as discussed in the course of
\eqref{eq:practice:channelmultiplication}. Hence, we consider
$\fobs_j(v)=\abs{h_j} \cdot \fobs^{(A)}(v)$ as non-linearities with
$A = 1$ here. The recovery is then performed by solving
\eqref{eq:direct:lasso} with $R=\bar\scalfac \cdot \sqrt{s}$ and
rescaling the minimizer by
$1/\bar\scalfac$. 
In the non-coherent setting, we just use
$\fobs_j(v)=h_j\cdot \fobs^{(A)}(v)$ and apply
\eqref{eq:lifting:l1l2lasso} with $R= \sqrt{M} \cdot \sqrt{s}$ for
retrieval.  The numerical results are shown in
Figure~\ref{fig:practice:directvslifting} for different values of $m$
and $M$. \revision{For the direct method (red), the MSE decreases as $M$
grows.  It becomes in fact almost constant for large $M$, which
is in line with the observation that at some point, the model variance
$\modeldev_\text{Dir}^2$ dominates the noise term $\nu^2 / M$ in the
error bound of \eqref{eq:results:direct:bound} in
Theorem~\ref{thm:results:direct}.  In other words, enlarging a \emph{coherently communicating} sensor network can improve the signal-to-noise ratio of the
underlying measurement process, but as soon as $\nu^2 / M \approx \modeldev_\text{Dir}^2$ adding further sensor devices does not bring any more gain.} On the contrary, there is obviously a
``turning point'' when using the lifting method (blue). For
sufficiently small node counts, the recovery error indeed drops with
$M$ up to a certain level. Above this threshold, more measurements are
required to achieve the same accuracy.  This behavior is precisely
reflected by the statement of Theorem~\ref{thm:results:lifting}, which
indicates that $M = \asympfaster{\log(2n/s)}$ is the ``ideal'' size of
a network.

\revision{Finally, it is worth pointing out that increasing $m$ improves the performance of both the direct and lifting method in Figure~\ref{fig:practice:directvslifting}, which is also consistent with the respective guarantees of Theorem~\ref{thm:results:direct} and Theorem~\ref{thm:results:lifting}; indeed, by adjusting the parameter $\delta$, the error bounds of \eqref{eq:results:direct:bound} and \eqref{eq:results:lifting:bound} certify a decay rate of $\asympfaster{m^{-1/2}}$. Thus, we can conclude that taking more measurements at each node decreases the variance of the estimators (regardless of non-linear distortions), while increasing the number of nodes is only beneficial up to a certain point (depending on the specific model configuration). In practice, this implies a trade-off between increasing $m$ and $M$ which is controlled by the additional costs of each increase.}
\begin{figure}
  \centering
  \includegraphics[width=.8\linewidth]{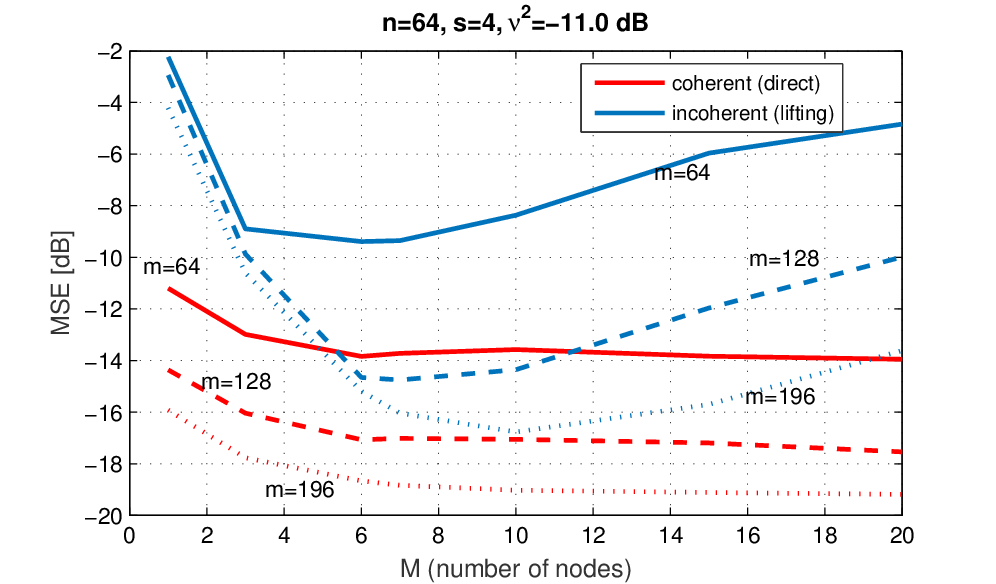}
  \caption{\revision{Recovery of $4$-sparse vectors in $\R^{64}$ from $m \in \{64, 128, 196\}$ measurements. The plots show the mean squared error (MSE) of the reconstruction with coherent (red) and non-coherent transmission (blue) via \eqref{eq:direct:lasso} and \eqref{eq:lifting:l1l2lasso}, respectively.}}
  \label{fig:practice:directvslifting}
\end{figure}

\section{Conclusion and Outlook}
\label{sec:conclusion}

Regarding our initial problem statements from Section~\ref{sec:intro}, we have shown that recovery from superimposed, non-linearly distorted observations \eqref{eq:intro:measurements} is already feasible by applying simple convex estimators of Lasso-type, which do not require any additional domain knowledge.
The statistical analysis of our main results, Theorem~\ref{thm:results:direct} and Theorem~\ref{thm:results:lifting}, provides detailed answers to the issues of \ref{question:intro:samplecomplexity}--\ref{question:intro:nodecount}, indicating that each of our two approaches comes along with its specific up- and downsides:
While the direct method (Algorithm~\ref{algo:direct}) has advantages with respect to sample complexity and efficiency, the lifting method (Algorithm~\ref{algo:lifting}) can handle more challenging model situations, such as blind sensor calibration.
From a practical perspective, these findings may be relevant to the application of wireless sensor networks, for which these points can be translated into economical aspects like quality, efficiency, and time budget.
\revision{Apart from that, it has turned out in Section~\ref{sec:extensions} that the scope of our results is not just limited to sparse source estimation from Gaussian observations, but can be extended to sub-Gaussian distributions (with the price of an asymptotic bias of the estimators), general convex constraints as well as more involved (hybrid) measurement designs.}

There are however several important open problems which could be investigated in the future works.
The following list sketches some potential improvements of our results that we consider to be of particular interest:
\begin{itemize}
\item 
	\emph{Exploiting prior knowledge.}
	The hybrid method of Algorithm~\ref{algo:hybrid} provides a lot of flexibility in setting up measurement ensembles and structural constraints.
	We gave a brief illustration in Example~\ref{ex:extensions:hybrid} and Example~\ref{ex:extensions:hybridglobal:dictionaries}, but our approach clearly offers much more possibilities.
	For example, it might be interesting to explore the following questions: Given limited information on the non-linearities $\fobs_1, \dots, \fobs_M$, what is the optimal choice of the weight matrix $\hybW$ for Algorithm~\ref{algo:hybrid}?
	Or how to choose $\hybW$ and $\sset$ if the number of active nodes ($\fobs_j \neq 0$) is much smaller than $M$?
\item 
	\emph{Structured measurements.} This point is very important from a practical perspective because the measurement vectors do not necessarily obey a sub-Gaussian distribution (with small $\subgparam$). Unfortunately, there are only a very few things known for the case of non-linear observations.
	One of the main difficulties is that our proofs and auxiliary results do heavily rely on statistical properties of sub-Gaussian distributions that fail to hold true for structured measurements like random Fourier samples.
\item 
	\emph{Breaking the multiplicative complexity barrier.}
	We have already mentioned in the course of Theorem~\ref{thm:results:lifting} that the multiplicative sampling rate $\asympfaster{s \cdot M}$ of the Group-Lasso \eqref{eq:lifting:l1l2lasso} is sub-optimal.
	Hence, we wish to come up with an algorithmic approach that is capable of recovering both $\grtr$ and $\scalvec$ but only requires $\asympfaster{s + M}$ distributed measurements.\footnote{This rate would precisely correspond to the degrees of freedom of $\grtr$ and $\scalvec$.}
	In fact, there has been recent progress in low-rank matrix factorization successfully tackling this bottleneck, e.g., see \cite{lee2016powerfac,richard2014tight}.
	But these approaches are still limited to the linear case of \eqref{eq:intro:bilinearmodel} and suffer either from restrictive model assumptions or involve computationally challenging (NP-hard) steps.
	Thus, in our non-linear setup, achieving the optimal additive sampling rate remains a big open problem.
\item 
	\emph{General distributed observations.}
	In this work, we have only studied the case where the fusion function $F$ in \eqref{eq:intro:distrmeas} corresponds to computing a sum.
	But there are clearly more interesting examples that may arise in practical applications, for instance, see \cite{Goldenbaum13}.
	The abstract statement of Theorem~\ref{thm:proofs:abstractrecovery} might be also useful in these general situations, since it does not make any restrictions on the actual observation model.
\end{itemize}

\section{Proofs of the Main Results}
\label{sec:proofs}

\subsection{Proofs of Theorem~\ref{thm:results:direct} and Theorem~\ref{thm:results:lifting}}
\label{subsec:proofs:mainresults}

The key idea of \eqref{eq:direct:lasso} and
\eqref{eq:lifting:l1l2lasso} (see Algorithm~\ref{algo:direct} and Algorithm~\ref{algo:lifting}, respectively) is to fit \emph{non-linear}
measurements $y_1, \dots, y_m$ by an appropriate \emph{linear}
counterpart.  This strategy is in fact a specific instance of a more
general approach, which is based on the so-called \emph{$\sset$-Lasso} with arbitrary observation rules:
\begin{equation}\label{eq:proofs:klasso} \tag{$P_\sset$}
  \min_{\xgen \in \R^d} \tfrac{1}{2m} \sum_{i = 1}^m (y_i - \sp{\agen_i}{\xgen})^2 \quad \text{s.t. $\xgen \in \sset$.}
\end{equation}
Here, $\agen_1, \dots, \agen_m \in \R^d$ are again certain
measurement vectors, while the convex \emph{constraint set}
$\sset \subset \R^d$ imposes structural assumptions on the solution.
For the direct method, we have chosen
$\agen_i \coloneqq \adir_i = \sum_{j = 1}^M \a_i^j \in \R^n$ and $\sset$
equals a rescaled $\l{1}$-unit ball ($d = n$), whereas for the lifting
method,
$\agen_i \coloneqq \alift_i = [\a_i^1 \cdots \a_i^M] \in \R^{n\times M}$ and
$\sset$ is a rescaled $\l{1,2}$-unit ball ($d = n \cdot M$, cf. Remark~\ref{rmk:results:lifting}\ref{rmk:results:lifting:matrixversion}).

A major challenge in the abstract setup of \eqref{eq:proofs:klasso} is to establish a relationship between a minimizer of \eqref{eq:proofs:klasso} and the underlying observations $y_1, \dots, y_m$.
For this purpose, let us first fix a general model:
\begin{model}[General Observations] \label{model:proofs:generalobs}
	Let $\{(\agen_i, y_i)\}_{i = 1}^m$ be independent samples of a joint random pair $(\agen, y) \in \R^d \times \R$, where $\agen$ is an isotropic, mean-zero sub-Gaussian random vector in $\R^d$ with $\normsubg{\agen} \leq \subgparam$ for some $\subgparam > 0$.
\end{model}
Our main goal is now to specify a linear mapping $\agen \mapsto \sp{\agen}{\grtrgen}$ with a certain $\grtrgen \in \R^d$ that ``mimics'' the observation variable $y$ as well as possible. In order to make this approach more precise, we introduce the following two quantities:
\begin{definition}\label{def:proofs:modelmismatch}
	Let $\grtrgen \in \R^d$ be a vector. Under the hypothesis of Model~\ref{model:proofs:generalobs} we define the \emph{mismatch covariance} as
	\begin{equation}\label{eq:modelmismatch:covariance}
		\modelcovar(\grtrgen) \coloneqq \modelcovar(\grtrgen; \agen, y) \coloneqq \lnorm{\mean{}_{(\agen,y)}[(\sp{\agen}{\grtrgen} - y) \agen]}.
	\end{equation}
	and \emph{mismatch deviation}\footnote{We implicitly assume that $y$ is sub-Gaussian here, implying that $\modeldev(\grtrgen) < \infty$.}
	\begin{equation}
		\modeldev(\grtrgen) \coloneqq \modeldev(\grtrgen; \agen, y) \coloneqq \normsubg{\sp{\agen}{\grtrgen} - y}.
	\end{equation}
\end{definition}
The purpose of these two parameters is to quantify the \emph{mismatch} that results from approximating non-linear observations by a linear model.
Intuitively, $\modelcovar(\grtrgen)$ measures the covariance between the mismatch term $\sp{\agen}{\grtrgen} - y$ and the measurement vector $\agen$, whereas $\modeldev(\grtrgen)$ essentially captures its deviation from zero.

The impact of the constraint set $\sset$, which forms the second important ingredient of \eqref{eq:proofs:klasso}, can be handled with the well-known concept of Gaussian mean width:
\begin{definition}\label{def:proofs:meanwidth}
Let $\ssetalt \subset \R^d$ be a non-empty subset.
\begin{deflist}
\item
	The \emph{(global) mean width} of $\ssetalt$ is given by
	\begin{equation}
		\meanwidth{\ssetalt} \coloneqq \mean_{\gaussian}{}[\sup_{\hgen \in \ssetalt} \sp{\gaussian}{\hgen}],
	\end{equation}
	where $\gaussian \distributed \Normdistr{\vec{0}}{\I{d}}$ is a standard Gaussian random vector.
\item
	The \emph{local mean width} of $\ssetalt$ at scale $t > 0$ is defined as
	\begin{equation}
		\meanwidth[t]{\ssetalt} \coloneqq \meanwidth{\ssetalt \intersec t \S^{d-1}}.
	\end{equation}
	Moreover, for $\xgen \in \R^d$, we call $\meanwidth[1]{\cone{\ssetalt}{\xgen}}$ the \emph{conic mean width} of $\ssetalt$ at $\x$.\footnote{Recall that $\cone{\ssetalt}{\xgen}$ denotes the cone of $\ssetalt$ at $\xgen$; see \eqref{eq:intro:notation:cone}.}
\end{deflist}
\end{definition}
Note that these parameters originate from the field of geometric functional analysis (e.g., see \cite{giannopoulos2004asymptotic,gordon1985gaussian,gordon1988escape}). For a more extensive discussion of their basic properties and their role in signal estimation problems, the reader is referred to \cite{bartlett2003complexity,mendelson2007subgaussian,chandrasekaran2012geometry,plan2014highdim,vershynin2014estimation,plan2013robust, RomanHDP}. 
Moreover, we would like to mention that the conic mean width is very closely related to the notion of \emph{statistical dimension} \cite{amelunxen2014edge}, which is also widely used in the literature.

We are now ready to formulate an abstract recovery guarantee that gives a quite general answer to what the $\sset$-Lasso \eqref{eq:proofs:klasso} is doing to non-linear observations. In fact, this result will form the basis of proving Theorem~\ref{thm:results:direct} and Theorem~\ref{thm:results:lifting}. 
\begin{theorem}\label{thm:proofs:abstractrecovery}
	We assume that Model~\ref{model:proofs:generalobs} holds true.
	Let $\sset \subset \R^d$ be a convex subset and let $\grtrgen \in \sset$ be an arbitrary vector.
	There exist numerical constants $C, C', C'' > 0$ such that the following holds true for every (fixed) $\delta \in \intvopcl{0}{1}$ with probability at least $1 - 5 \exp(-C \cdot \subgparam^{-4} \cdot \delta^{2} \cdot m)$:
	If the number of observations obeys
	\begin{equation}\label{eq:proofs:abstractrecovery:meas}
		m \geq C' \cdot \subgparam^4 \cdot \delta^{-2} \cdot \meanwidth[1]{\cone{\sset}{\grtrgen}}^2,
	\end{equation}
	then any minimizer $\hat{\xgen}$ of \eqref{eq:proofs:klasso} satisfies
	\begin{equation}\label{eq:proofs:abstractrecovery:bound}
		\lnorm{\hat{\xgen} - \grtrgen} \leq C'' \cdot \Big( \subgparam^{-1} \cdot \modeldev(\grtrgen) \cdot \delta + \modelcovar(\grtrgen) \Big).
	\end{equation}
	\revision{In particular, if $\modeldev(\grtrgen) = \modelcovar(\grtrgen) = 0$ (which holds if and only if $y = \sp{\a}{\grtrgen}$), the error bound \eqref{eq:proofs:abstractrecovery:bound} yields exact recovery, i.e., $\hat{\xgen} = \grtrgen$.}
\end{theorem}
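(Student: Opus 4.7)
The plan is a standard \emph{basic-inequality} scheme for constrained least squares, combined with uniform control of two empirical processes indexed by the cone $\cone{\sset}{\grtrgen}$. Set $\h \coloneqq \solu - \grtrgen$; convexity of $\sset$ forces $\h \in \cone{\sset}{\grtrgen}$. Since $\grtrgen \in \sset$ is feasible and $\solu$ minimizes $\lossemp$, expanding the difference $\lossemp(\solu) - \lossemp(\grtrgen) \leq 0$ and rearranging yields the key inequality
\begin{equation}
\tfrac{1}{2m}\sum_{i=1}^{m}\sp{\agen_i}{\h}^{2} \leq \tfrac{1}{m}\sum_{i=1}^{m}(y_i - \sp{\agen_i}{\grtrgen})\sp{\agen_i}{\h}.
\end{equation}
It then suffices to lower-bound the left-hand side by $c_{1}\lnorm{\h}^{2}$ and upper-bound the right-hand side by $C\cdot(\modelcovar(\grtrgen) + \subgparam^{-1}\modeldev(\grtrgen)\delta)\lnorm{\h}$, each uniformly over $\h \in \cone{\sset}{\grtrgen}$; dividing by $\lnorm{\h}$ (which may be assumed positive) then produces \eqref{eq:proofs:abstractrecovery:bound}.

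For the quadratic process I would apply Mendelson's small-ball method: for any isotropic sub-Gaussian $\agen$, the marginals $\sp{\agen}{\h/\lnorm{\h}}$ admit a uniform small-ball lower bound depending only on $\subgparam$, and a standard Rademacher/Gaussian complexity argument controls the associated empirical deviation on $\cone{\sset}{\grtrgen}\cap\S^{d-1}$ in terms of $\meanwidth[1]{\cone{\sset}{\grtrgen}}$. Under the sampling rate~\eqref{eq:proofs:abstractrecovery:meas}, this produces the uniform restricted-strong-convexity bound $\tfrac{1}{m}\sum_{i}\sp{\agen_i}{\h}^{2} \geq c_{1}\lnorm{\h}^{2}$ on the entire cone with probability at least $1 - 2\exp(-c\,\subgparam^{-4}\delta^{2}m)$. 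The exponent $\subgparam^{-4}$ is inevitable because squared sub-Gaussian variables are only sub-exponential, so Bernstein-type concentration pays a fourth moment.

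For the multiplier process, writing $u_{i} \coloneqq y_{i} - \sp{\agen_i}{\grtrgen}$, I split into expectation and centred deviation. Cauchy--Schwarz gives $\abs{\mean[u\sp{\agen}{\h}]} = \abs{\sp{\mean[u\agen]}{\h}} \leq \modelcovar(\grtrgen)\lnorm{\h}$ directly from the definition of $\modelcovar$. For the centred part, a generic-chaining multiplier bound with sub-Gaussian multiplier of norm $\modeldev(\grtrgen)$ and sub-Gaussian vectors of norm $\subgparam$ yields, with probability at least $1 - 3\exp(-c\,\subgparam^{-4}\delta^{2}m)$,
\begin{equation}
\sup_{\h \in \cone{\sset}{\grtrgen}\cap\S^{d-1}} \abs[\Big]{\tfrac{1}{m}\sum_{i=1}^{m} u_{i}\sp{\agen_i}{\h} - \mean[u\sp{\agen}{\h}]} \lesssim \subgparam \cdot \modeldev(\grtrgen) \cdot \tfrac{\meanwidth[1]{\cone{\sset}{\grtrgen}}}{\sqrt{m}},
\end{equation}
which collapses to $\subgparam^{-1}\modeldev(\grtrgen)\delta$ once~\eqref{eq:proofs:abstractrecovery:meas} is plugged in. A union bound over the two good events then closes the argument with the prescribed probability $1 - 5\exp(-C\,\subgparam^{-4}\delta^{2}m)$.

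The main obstacle is the uniform quadratic lower bound: executing Mendelson's small-ball method \emph{uniformly over a cone of potentially large ambient dimension} in the sub-Gaussian (rather than Gaussian) regime is the delicate piece, and the careful book-keeping of the $\subgparam$-factors in both the sample-complexity condition and the probability exponent must land precisely the $\subgparam^{4}$ in \eqref{eq:proofs:abstractrecovery:meas} and $\subgparam^{-4}$ in the probability exponent. The multiplier bound is more routine but still hinges on the observation that $\meanwidth[1]{\cone{\sset}{\grtrgen}}$ captures the effective low-dimensionality of the admissible deviation directions, which is why the final bound is independent of the ambient dimension~$d$.
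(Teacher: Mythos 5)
Your proposal is correct and follows the same overall architecture as the paper's proof: both decompose the excess loss $\lossemp{}(\solu) - \lossemp{}(\grtrgen)$ into a quadratic term and a multiplier term, control the multiplier term by splitting off its mean via Cauchy--Schwarz (yielding exactly $\modelcovar(\grtrgen)$) and bounding the centred part with Mendelson's multiplier-process inequality, and control the quadratic term uniformly over directions measured by $\meanwidth[1]{\cone{\sset}{\grtrgen}}$. Two tool-level differences are worth noting. First, for the quadratic lower bound the paper does not use the small-ball method but the two-sided deviation inequality of Liaw--Mehrabian--Plan--Vershynin for sub-Gaussian matrices, which directly gives $\bigl(\tfrac{1}{m}\sum_i \abs{\sp{\agen_i}{\h}}^2\bigr)^{1/2} \geq c\,\lnorm{\h}$ on the cone; your small-ball route works equally well here (the small-ball constant for isotropic sub-Gaussian marginals depends only on $\subgparam$ via Paley--Zygmund) and is arguably more robust, but you would still need a chaining step to bound the relevant empirical/Rademacher complexity by $\subgparam\,\meanwidth[1]{\cone{\sset}{\grtrgen}}/\sqrt{m}$. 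Second, instead of your basic-inequality-plus-homogeneity argument on the cone, the paper localizes on the sphere $\sset \intersec (t\S^{d-1} + \grtrgen)$ and shows the excess loss is strictly positive there, concluding via convexity of $\lambda \mapsto \score(\grtrgen + \lambda(\x-\grtrgen))$ that the minimizer lies within distance $t$; this detour is what lets the paper also prove the local-mean-width version (Theorem~\ref{thm:app:abstractrecovery:localmw}) needed for Theorem~\ref{thm:extensions:hybridglobal}, where the index set is not a cone and your rescaling trick would not apply. For the conic statement at hand, both routes are valid; the only loose end in yours is the degenerate case $\modeldev(\grtrgen)=\modelcovar(\grtrgen)=0$, which your formulation actually handles automatically (the right-hand side of the basic inequality vanishes almost surely), whereas the paper must treat it separately because its choice of $t$ collapses to zero.
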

The proof of Theorem~\ref{thm:proofs:abstractrecovery} is given in Appendix~\ref{subsec:app:abstractrecovery}.
At first sight, it is somewhat surprising that Theorem~\ref{thm:proofs:abstractrecovery}
is valid for \emph{every} choice of $\grtrgen$.  However, in order to
turn the error bound of \eqref{eq:proofs:abstractrecovery:bound} into a meaningful statement, one needs to ensure that the two mismatch parameters $\modelcovar(\grtrgen)$ and $\modeldev(\grtrgen)$ are sufficiently small.
If the ``ansatz-vector'' $\grtrgen$ can be chosen in such a way, Theorem~\ref{thm:proofs:abstractrecovery} states that any output of the Lasso \eqref{eq:proofs:klasso} indeed forms a reliable estimate of $\grtrgen$. 

\revision{Interpreting Theorem~\ref{thm:proofs:abstractrecovery} statistically, the mismatch deviation $\modeldev(\grtrgen)$ is associated with the \emph{variance} of the estimator \eqref{eq:proofs:klasso}, controlling the size of the first term in the error bound \eqref{eq:proofs:abstractrecovery:bound}; most notably, this term can be made arbitrarily small by adjusting $\delta$ and $m$.
The mismatch covariance $\modelcovar(\grtrgen)$, in contrast, plays the role of an \emph{(asymptotic) bias} that particularly specifies whether \eqref{eq:proofs:klasso} is consistent or not.
We refer the interested reader to \cite{genzel2017mismatch} and \cite[Chap.~3~\&~4]{genzel2019thesis} for a more detailed discussion of the mismatch parameters and their role in learning semi-parametric observations models.}

The following proposition shows that we can even achieve $\modelcovar(\grtrgen) = 0$ under the hypothesis of Model~\ref{eq:results:measurements:meas}, supposed that $\grtrgen = \bar\scalfac\grtr$ for the direct method and $\grtrgen = \grtr\scalvec^\T$ for the lifting method:
\begin{proposition}\label{prop:proofs:modelcovar}
	Let Model~\ref{model:results:measurements} hold true. Recalling the notation of Definition~\ref{def:results:scalparam}, we have:
	\begin{thmlist}
	\item\label{prop:proofs:modelcovar:direct}
		\emph{Direct method:} Set $\adir \coloneqq \sum_{j = 1}^M \a^j$. Then, the measurement pair $(\tfrac{1}{\sqrt{M}}\adir, \tfrac{1}{\sqrt{M}}y)$ obeys Model~\ref{model:proofs:generalobs} (with $\subgparam = 1$ and $d = n$) and we have
		\begin{equation}
			\modelcovar(\bar\scalfac\grtr; \tfrac{1}{\sqrt{M}}\adir, \tfrac{1}{\sqrt{M}}y) = 0.
		\end{equation}
	\item\label{prop:proofs:modelcovar:lifting}
		\emph{Lifting method:} Set $\alift \coloneqq [\a^1 \cdots \a^M]$. Then, the measurement pair\footnote{Hereafter, the matrix space $\R^{n \times N}$ is canonically identified with $\R^{n \cdot M}$. In particular, we regard $\alift$ as a random vector in $\R^{n \cdot M}$.} $(\alift, y)$ obeys Model~\ref{model:proofs:generalobs} (with $\subgparam = 1$ and $d = n\cdot M$) and we have
		\begin{equation}
			\modelcovar(\grtr\scalvec^\T; \alift, y) = 0.
		\end{equation}
	\end{thmlist}
\end{proposition}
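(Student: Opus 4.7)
The plan is a direct computation that relies on three structural features of the Gaussian setup: the independence of $\a^1,\dots,\a^M$, the zero-mean independent noise $e$, and the crucial Gaussian-only fact that for $\a \distributed \Normdistr{\vnull}{\I{n}}$ the pair $(\sp{\a}{\grtr}, \proj_{\orthcompl{\{\grtr\}}}(\a))$ is \emph{independent} (not merely uncorrelated), given $\lnorm{\grtr}=1$.

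First I would dispense with the verification that Model~\ref{model:proofs:generalobs} is satisfied. For the direct method, $\tfrac{1}{\sqrt{M}}\adir = \tfrac{1}{\sqrt{M}}\sum_j \a^j \distributed \Normdistr{\vnull}{\I{n}}$ because the $\a^j$ are independent standard Gaussians, so it is isotropic, mean-zero, and sub-Gaussian with norm bounded by a numerical constant (which we normalize to $\subgparam = 1$). For the lifting method, identifying $\alift = [\a^1 \cdots \a^M]$ with a vector in $\R^{nM}$, isotropy $\mean[\sp{\alift}{\X}\sp{\alift}{\X'}] = \sp{\X}{\X'}$ follows coordinate-wise from the i.i.d. structure of the $\a^j$.

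Next I would compute the mismatch covariance. For the direct method, write
\begin{equation}
\sp{\tfrac{1}{\sqrt{M}}\adir}{\bar\scalfac \grtr} - \tfrac{1}{\sqrt{M}} y = \tfrac{1}{\sqrt{M}}\Big(\sum_{j=1}^M \bar\scalfac \sp{\a^j}{\grtr} - \sum_{j=1}^M \fobs_j(\sp{\a^j}{\grtr}) - e\Big),
\end{equation}
multiply by $\tfrac{1}{\sqrt{M}}\adir = \tfrac{1}{\sqrt{M}}\sum_k \a^k$, and expand the resulting double sum. The $e$-term vanishes since $e$ is independent of the $\a^j$ with $\mean[e]=0$. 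For each off-diagonal $j \neq k$, independence together with $\mean[\a^k]=\vnull$ kills the expectation. For $j=k$, I would use the Gaussian decomposition $\a^j = \sp{\a^j}{\grtr}\grtr + \proj_{\orthcompl{\{\grtr\}}}(\a^j)$ with the two summands independent; the orthogonal piece contributes zero (it has mean zero and multiplies a scalar function of $\sp{\a^j}{\grtr}$), leaving only
\begin{equation}
\mean\big[(\bar\scalfac \sp{\a^j}{\grtr} - \fobs_j(\sp{\a^j}{\grtr}))\sp{\a^j}{\grtr}\big]\grtr = (\bar\scalfac - \scalfac_j)\grtr,
\end{equation}
by definition of $\scalfac_j$ and $\mean[\sp{\a^j}{\grtr}^2]=\lnorm{\grtr}^2=1$. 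Summing over $j$ produces $\sum_j(\bar\scalfac-\scalfac_j)=0$, so the entire expectation vanishes and $\modelcovar(\bar\scalfac\grtr) = 0$.

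For the lifting method the argument is essentially the same, applied columnwise: the $j$-th column of $\mean[(\sp{\alift}{\grtr\scalvec^\T} - y)\alift]$ equals $\sum_k \mean[(\scalfac_k\sp{\a^k}{\grtr} - \fobs_k(\sp{\a^k}{\grtr}))\a^j]$ (again after dropping the $e$-term). Cross-terms $k\neq j$ vanish by independence, and the diagonal term yields $(\scalfac_j - \scalfac_j)\grtr = \vnull$ by the very definition of the scaling parameter $\scalfac_j$. Thus every column of the expectation is zero, giving $\modelcovar(\grtr\scalvec^\T)=0$. There is no genuine obstacle here; the only delicate step is invoking Gaussian independence of $\sp{\a^j}{\grtr}$ and $\proj_{\orthcompl{\{\grtr\}}}(\a^j)$ to discard the orthogonal contribution, and this is precisely the feature that fails in the general sub-Gaussian setting of Section~\ref{sec:extensions} (where it reappears as the isotropy mismatch vectors $\orthmodelbias^j$).
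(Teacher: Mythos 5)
Your proposal is correct and follows essentially the same route as the paper's proof: verify isotropy and the sub-Gaussian bound from the Gaussian structure, expand the double sum, kill the noise and cross terms by independence, and handle the diagonal terms via the orthogonal decomposition $\a^j = \sp{\a^j}{\grtr}\grtr + \proj_{\orthcompl{\{\grtr\}}}(\a^j)$ together with Gaussian independence of the two components, so that the definitions of $\scalfac_j$ and $\bar\scalfac$ make the remaining sum telescope to zero. Your closing remark correctly identifies the Gaussian-specific step whose failure in the sub-Gaussian case produces the isotropy mismatch vectors $\orthmodelbias^j$.
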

The proof of Proposition~\ref{prop:proofs:modelcovar} is provided at the end of this subsection.

\begin{remark}
	The statement of Proposition~\ref{prop:proofs:modelcovar} reflects the \emph{orthogonality principle} in linear estimation theory:
	the scaling parameters in Definition~\ref{def:results:scalparam} are precisely chosen such that the model mismatch (sometimes referred to as ``noise'') is uncorrelated with the measurement vectors $\adir$ and $\alift$, respectively.
	Note that this classical methodology was also applied in \cite{plan2015lasso,genzel2016estimation}, although not stated explicitly.
\end{remark}

The second important quantity of Theorem~\ref{thm:proofs:abstractrecovery} is the conic mean width $\meanwidth[1]{\cone{\sset}{\grtrgen}}$. 
Intuitively, this geometric parameter measures the complexity of the set $\sset$ in a local neighborhood of $\grtrgen$, and by \eqref{eq:proofs:abstractrecovery:meas}, it gets related to the sampling rate of the actual estimation problem.
It is therefore substantial to select a structural constraint for \eqref{eq:proofs:klasso} which is ``compatible'' with the target vector $\grtrgen$, implying that recovery succeeds with a very few measurements.
In the specific setup of Section~\ref{sec:results}, this is indeed the case because we can establish (sharp) upper bounds on $\meanwidth[1]{\cone{\sset}{\grtrgen}}$ that only logarithmically depend on the dimension of the ambient space:
\begin{proposition}
	\begin{thmlist}
	\item\label{prop:proofs:meanwidth:direct}
		\emph{Direct method:} Let the assumptions of Theorem~\ref{thm:results:direct} be satisfied and set
		\begin{equation}
			\sset \coloneqq \{ \x \in \R^n \suchthat \lnorm{\x}[1] \leq \lnorm{\bar\scalfac\grtr}[1] = R \}.
		\end{equation}
		Then
		\begin{equation}\label{eq:meanwidth:direct}
			\meanwidth[1]{\cone{\sset}{\bar\scalfac\grtr}} \lesssim \sqrt{ s \cdot \log(\tfrac{2n}{s})}.
		\end{equation}
	\item\label{prop:proofs:meanwidth:lifting}
		\emph{Lifting method:} Let the assumptions of Theorem~\ref{thm:results:lifting} be satisfied and set
		\begin{equation}
			\sset \coloneqq \{ \X \in \R^{n\times M} \suchthat \lnorm{\X}[1,2] \leq \lnorm[\big]{\grtr\scalvec^\T}[1,2] = R \}.
		\end{equation}
		Then
		\begin{equation}\label{eq:meanwidth:lifting}
			\meanwidth[1]{\cone{\sset}{\grtr\scalvec^\T}} \lesssim \sqrt{ s \cdot \max\{M, \log(\tfrac{2n}{s})\}}.
		\end{equation}
	\end{thmlist}
\label{prop:proofs:meanwidth}
\end{proposition}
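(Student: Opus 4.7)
Both bounds follow from the same polarity argument, and the plan is to invoke the classical inequality
\begin{equation}
\meanwidth[1]{\cone{\sset}{\xgen_0}}^2 \leq \inf_{\lambda \geq 0} \mean\big[\dist{\gaussian}{\lambda \partial f(\xgen_0)}^2\big],
\end{equation}
valid whenever $\xgen_0$ lies on the boundary of a sublevel set $\sset = \{f \leq f(\xgen_0)\}$ of a norm $f$. This follows by combining Jensen's inequality with the identification of the normal cone at $\xgen_0$ as $\bigcup_{\lambda \geq 0} \lambda \partial f(\xgen_0)$, and is standard in the compressed sensing literature (Chandrasekaran--Recht--Parrilo--Willsky, Amelunxen--Lotz--McCoy--Tropp). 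Part~(1) takes $f = \lnorm{\cdot}[1]$ and $\xgen_0 = \bar\scalfac \grtr$ (assuming $\bar\scalfac \neq 0$, since otherwise $\sset = \{\vnull\}$ and the claim is trivial), while part~(2) takes $f = \lnorm{\cdot}[1,2]$ and $\xgen_0 = \grtr \scalvec^\T$. The remaining work is to evaluate the subdifferential explicitly and to control a single tail integral.

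For part~(1), the subdifferential factorizes coordinate-wise: on the support $S$ of $\grtr$ (of size $\leq s$) the $l$-th entry is forced to equal $\sign(\bar\scalfac x_{0,l})$, while on $[n] \setminus S$ it ranges freely over $[-1,1]$. Distributing the squared distance over the $n$ coordinates of a standard Gaussian $\gaussian \distributed \Normdistr{\vnull}{\I{n}}$ therefore yields
\begin{equation}
\mean\big[\dist{\gaussian}{\lambda \partial \lnorm{\cdot}[1](\bar\scalfac\grtr)}^2\big] = s(1 + \lambda^2) + (n-s) \cdot \mean[(\abs{\gaussianuniv} - \lambda)_+^2]
\end{equation}
with $\gaussianuniv \distributed \Normdistr{0}{1}$, and the standard Gaussian tail estimate $\mean[(\abs{\gaussianuniv}-\lambda)_+^2] \lesssim \exp(-\lambda^2/2)$ for $\lambda \geq 1$, together with the choice $\lambda^2 = 2\log(2n/s)$, gives the claimed bound $\lesssim s \log(2n/s)$.

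For part~(2), the key structural observation is that the $l$-th row of $\grtr\scalvec^\T$ equals $x_{0,l}\scalvec^\T$, a non-zero vector in $\R^M$ exactly for $l \in S$, with Euclidean norm $\abs{x_{0,l}}\lnorm{\scalvec}$. The subdifferential of $\lnorm{\cdot}[1,2]$ at $\grtr\scalvec^\T$ therefore splits row-wise: on $S$ the row is fixed at $\sign(x_{0,l})\scalvec^\T/\lnorm{\scalvec}$ (a unit vector in $\R^M$), while on $[n] \setminus S$ it ranges over the Euclidean unit ball $\ball[2][M]$. Applied to a Gaussian matrix $\gaussian$ with i.i.d.\ rows $\gaussian_l \distributed \Normdistr{\vnull}{\I{M}}$, this yields
\begin{equation}
\mean\big[\dist{\gaussian}{\lambda \partial \lnorm{\cdot}[1,2](\grtr\scalvec^\T)}^2\big] = s(M + \lambda^2) + (n-s) \cdot \mean[(\lnorm{\gaussian_M} - \lambda)_+^2]
\end{equation}
with $\gaussian_M \distributed \Normdistr{\vnull}{\I{M}}$. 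The standard $\chi_M$-concentration bound $\prob{\lnorm{\gaussian_M} \geq \sqrt{M} + t} \leq \exp(-t^2/2)$ (Gaussian concentration for the $1$-Lipschitz map $\gaussian \mapsto \lnorm{\gaussian}$, combined with $\mean[\lnorm{\gaussian_M}] \leq \sqrt{M}$) yields $\mean[(\lnorm{\gaussian_M}-\lambda)_+^2] \lesssim \exp(-(\lambda-\sqrt{M})^2/2)$ for $\lambda \geq \sqrt{M}$. Setting $\lambda = \sqrt{M} + \sqrt{2\log(2n/s)}$ balances the two contributions and gives $\lesssim s\cdot(M + \log(2n/s)) \asymp s \cdot \max\{M, \log(2n/s)\}$.

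The only step beyond the textbook $\l{1}$ computation of part~(1) is this $\chi_M$-tail estimate, which replaces the scalar Gaussian tail; this is where I expect the main obstacle to lie, and it is precisely what brings the extra factor of $M$ into the sample complexity. Conceptually, the $sM$-term reflects that each of the $s$ active rows of the target matrix must be matched by the subdifferential as a full vector in $\R^M$, with no lower-dimensional structure inside the row exploited by the $\lnorm{\cdot}[1,2]$ constraint.
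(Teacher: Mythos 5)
Your proposal is correct and follows essentially the same route as the paper: the polarity bound $\meanwidth[1]{\cone{\sset}{\grtrgen}}^2 \leq \inf_{\tau>0}\mean[\dist{\gaussian}{\tau\,\partial\norm{\grtrgen}}^2]$, the explicit row-wise (resp.\ coordinate-wise) description of the subdifferential, and optimization of $\tau \asymp \sqrt{M}+\sqrt{\log(2n/s)}$ against a $\chi_M$-type tail. The only cosmetic differences are that the paper treats part (1) as the $M=1$ special case of part (2) rather than separately, and it derives the tail $\mean[\pospart{\lnorm{\gaussian_1}-\tau}^2] \lesssim \exp(-(\tau-\sqrt{2M})^2/2)$ from the Laurent--Massart $\chi^2$ bound instead of your Lipschitz-concentration estimate centered at $\sqrt{M}$, which changes nothing in the final bound.
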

These bounds have been already established in the literature (e.g., see \cite{chandrasekaran2012geometry,flinth2016sparsedeconv}). For the sake of self-containedness, a proof is however provided in Appendix~\ref{subsec:app:conicmeanwidth}. We are now ready to prove the two main results of Section~\ref{sec:results}:
\begin{proof}[Proof of Theorem~\ref{thm:results:direct}]
	By Proposition~\ref{prop:proofs:modelcovar}\ref{prop:proofs:modelcovar:direct}, we know that $(\tfrac{1}{\sqrt{M}}\adir, \tfrac{1}{\sqrt{M}}y)$ with $\adir \coloneqq \sum_{j = 1}^M \a^j$ satisfies Model~\ref{model:proofs:generalobs} for $\subgparam = 1$. Now, we would like to apply Theorem~\ref{thm:proofs:abstractrecovery} with $\sset = R\ball[1][n]$ and $\grtrgen = \bar\scalfac\grtr$.
	Proposition~\ref{prop:proofs:meanwidth}\ref{prop:proofs:meanwidth:direct} yields
	\begin{equation}
		\meanwidth[1]{\cone{\sset}{\grtrgen}}^2 \lesssim s \cdot \log(\tfrac{2n}{s}),
	\end{equation}
	and therefore, the assumption of \eqref{eq:results:direct:meas} indeed implies \eqref{eq:proofs:abstractrecovery:meas}.
	
	Next, we bound the mismatch deviation
	\begin{equation}
		\modeldev(\bar\scalfac\grtr) = \modeldev(\bar\scalfac\grtr; \tfrac{1}{\sqrt{M}}\adir, \tfrac{1}{\sqrt{M}}y) = \tfrac{1}{\sqrt{M}} \normsubg{\sp{\adir}{\bar\scalfac\grtr} - y} \stackrel{\eqref{eq:results:measurements:meas}}{=} \tfrac{1}{\sqrt{M}} \normsubg[\Big]{\sum_{j = 1}^M z_j - e},
	\end{equation}
	where $z_j \coloneqq \sp{\a^j}{\bar\scalfac\grtr} - \fobs_j(\sp{\a^j}{\grtr})$, $j = 1, \dots, M$. Since Model~\ref{model:results:measurements} particularly assumes $\mean[\fobs_j(\sp{\a^j}{\grtr})] = 0$, we can conclude that $z_1, \dots, z_M, e$ are independent, mean-zero sub-Gaussian variables.\footnote{Note that we have implicitly assumed that $\fobs_j(\sp{\a^j}{\grtr})$ is sub-Gaussian because otherwise we would have $\modeldev_\text{Dir} = \infty$ and the claim is trivial.}
	Hence, by \eqref{eq:intro:notation:hoeffding},
	\begin{align} 
		\modeldev(\bar\scalfac\grtr)^2 = \tfrac{1}{M} \normsubg[\Big]{\sum_{j = 1}^M z_j - e}^2 \lesssim \tfrac{1}{M} \Big( \sum_{j = 1}^M \normsubg{z_j}^2 + \normsubg{e}^2 \Big) \stackrel{\lnorm{\grtr} = 1}{\leq} \modeldev_\text{Dir}^2 + \tfrac{\nu^2}{M}.
	\end{align}
	
	Finally, observing that $\modelcovar(\bar\scalfac\grtr) = 0$ according to Proposition~\ref{prop:proofs:modelcovar}\ref{prop:proofs:modelcovar:direct}, the claim follows from Theorem~\ref{thm:proofs:abstractrecovery}. Note that \eqref{eq:proofs:klasso} and \eqref{eq:direct:lasso} are equivalent, since multiplying the objective function by a factor of $M$ does not change the set of minimizers.	
\end{proof}

\begin{proof}[Proof of Theorem~\ref{thm:results:lifting}]
	By Proposition~\ref{prop:proofs:modelcovar}\ref{prop:proofs:modelcovar:lifting}, we know that $(\alift, y)$ with $\alift \coloneqq [\a^1 \cdots \a^M]$ satisfies Model~\ref{model:proofs:generalobs} for $\subgparam = 1$. Now, we would like to apply Theorem~\ref{thm:proofs:abstractrecovery} with $\sset = \{ \X \in \R^{n\times M} \suchthat \lnorm{\X}[1,2] \leq R \}$ and $\grtrgen = \grtr\scalvec^\T$.
	Proposition~\ref{prop:proofs:meanwidth}\ref{prop:proofs:meanwidth:lifting} yields
	\begin{equation}
		\meanwidth[1]{\cone{\sset}{\grtrgen}}^2 \lesssim s \cdot \max\{M, \log(\tfrac{2n}{s})\},
	\end{equation}
	and therefore, the assumption of \eqref{eq:results:lifting:meas} indeed implies \eqref{eq:proofs:abstractrecovery:meas}.
	
	Similarly to the proof of Theorem~\ref{thm:results:direct}, we now bound the mismatch deviation
	\begin{align}
		\modeldev(\grtr\scalvec^\T) &= \modeldev(\grtr\scalvec^\T; \alift, y) = \normsubg{\sp{\alift}{\grtr\scalvec^\T} - y} = \normsubg{\sp{\alift\scalvec}{\grtr} - y} \\
		&= \normsubg[\Big]{\sp{\sum_{j = 1}^M\scalfac_j\a^j}{\grtr}  - y} \stackrel{\eqref{eq:results:measurements:meas}}{=} \normsubg[\Big]{\sum_{j = 1}^M z_j - e},
	\end{align}
	where $z_j \coloneqq \sp{\a^j}{\scalfac_j\grtr} - \fobs_j(\sp{\a^j}{\grtr})$, $j = 1, \dots, M$. Since Model~\ref{model:results:measurements} particularly assumes $\mean[\fobs_j(\sp{\a^j}{\grtr})] = 0$, we can conclude that $z_1, \dots, z_M, e$ are independent, mean-zero sub-Gaussians.
	Hence, by \eqref{eq:intro:notation:hoeffding},
	\begin{align} 
		\modeldev(\grtr\scalvec^\T)^2 = \normsubg[\Big]{\sum_{j = 1}^M z_j - e}^2 \lesssim \Big( \sum_{j = 1}^M \normsubg{z_j}^2 + \normsubg{e}^2 \Big) \stackrel{\lnorm{\grtr} = 1}{\leq} M \cdot \modeldev_\text{Lift}^2 + \nu^2.
	\end{align}
	
	From Proposition~\ref{prop:proofs:modelcovar}\ref{prop:proofs:modelcovar:lifting}, we obtain $\modelcovar(\grtr\scalvec^\T; \alift, y) = 0$, so that the error bound of Theorem~\ref{thm:proofs:abstractrecovery} reads as follows:
	\begin{align}
		\Big(\sum_{j = 1}^M\lnorm{\solu^j - \scalfac_j\grtr}^2 \Big)^{1/2} = \lnorm{[\solu^1 \cdots \solu^M] - \grtr\scalvec^\T} \lesssim \modeldev(\grtr\scalvec^\T) \cdot \delta \lesssim (M \cdot \modeldev_\text{Lift}^2 + \nu^2)^{\frac{1}{2}} \cdot \delta.
	\end{align}
	Dividing both sides by $1 / \sqrt{M}$ precisely gives the error bound \eqref{eq:results:lifting:bound} of Theorem~\ref{thm:results:lifting}.
	
	\revision{It remains to show the error bounds in \eqref{eq:results:lifting:bound:svd}. To this end, we proceed analogously to the proof of \cite[Cor.~1]{iwen2015note}. Let $\uv_1, \dots, \uv_n \in \S^{n-1}$ and $\vv_1, \dots, \vv_M \in \S^{M-1}$ be the left and right singular vectors of $\solu[\X] \in \R^{n \times M}$, respectively, and let $\tau_1 \geq \tau_2 \geq \cdots \geq \tau_r$ be the corresponding singular values with $r \coloneqq \min\{n, M\}$; in particular, note that $\solu = \uv_1$, $\solu[\scalvec] = \vv_1$, and $\tau = \tau_1$. Moreover, we set $\lambda \coloneqq \lnorm{\grtr \scalvec^\T} = \lnorm{\scalvec}$, which is the only positive singular value of the rank-one matrix $\grtr \scalvec^\T \in \R^{n \times M}$.
		
	We now apply Weyl's inequality (for non-hermitian matrices) to obtain
	\begin{align}
		\opnorm{\solu[\X] - \lambda \uv_1 \vv_1^\T} &= \opnorm[\Big]{(\tau_1 - \lambda) \uv_1 \vv_1^\T + \sum_{j = 2}^r \tau_j  \uv_j \vv_j^\T} \\
		&= \max\{ \abs{\tau_1 - \lambda}, \tau_2, \tau_3, \dots, \tau_r \} \\
		&\leq \opnorm{\solu[\X] - \grtr \scalvec^\T} \leq \lnorm{\solu[\X] - \grtr \scalvec^\T} \stackrel{\eqref{eq:results:lifting:bound}}{\lesssim} \Err \cdot \sqrt{M}. \label{eq:proofs:lifting:weyl}
	\end{align}
	The triangle inequality and \eqref{eq:results:lifting:bound} again then yield
	\begin{equation}
		\opnorm{\grtr \scalvec^\T - \lambda \uv_1 \vv_1^\T} \leq \opnorm{\grtr \scalvec^\T - \solu[\X]} + \opnorm{\solu[\X] - \lambda \uv_1 \vv_1^\T} \lesssim \Err \cdot \sqrt{M}.
	\end{equation}
	This implies
	\begin{align}
		\lambda^2 - \lambda \sp{\grtr}{\solu} \sp{\scalvec}{\solu[\scalvec]} = \tfrac{1}{2}\lnorm{\grtr \scalvec^\T - \lambda \uv_1 \vv_1^\T}^2 \stackrel{(\ast)}{\leq} \opnorm{\grtr \scalvec^\T - \lambda \uv_1 \vv_1^\T}^2 \lesssim \Err^2 \cdot M,
	\end{align}
	where $(\ast)$ follows from the fact that $\grtr \scalvec^\T - \lambda \uv_1 \vv_1^\T$ is at most rank $2$. Dividing by $\lambda^2 = \lnorm{\scalvec}^2$ then leads to
	\begin{equation}\label{eq:proofs:lifting:svdestim}
		1 - \sp{\grtr}{\solu} \sp{\tfrac{\scalvec}{\lnorm{\scalvec}}}{\solu[\scalvec]} \lesssim \Err^2 \cdot \tfrac{M}{\lnorm{\scalvec}^2} \ ,
	\end{equation}
	and due to the additional assumption that $\Err \lesssim \lnorm{\scalvec} / \sqrt{M}$, we may also assume that $1 - \sp{\grtr}{\solu} \sp{\tfrac{\scalvec}{\lnorm{\scalvec}}}{\solu[\scalvec]} \leq 1$, i.e., $\sp{\grtr}{\solu} \sp{\tfrac{\scalvec}{\lnorm{\scalvec}}}{\solu[\scalvec]} \geq 0$. 
	Since $\grtr, \solu \in \S^{n-1}$ and $\tfrac{\scalvec}{\lnorm{\scalvec}}, \solu[\scalvec] \in \S^{M-1}$, we obtain the following error estimate from \eqref{eq:proofs:lifting:svdestim} (recall that $\theta_1 = \sign(\sp{\grtr}{\solu})$ and $\theta_2 = \sign(\sp{\scalvec}{\solu[\scalvec]})$):
	\begin{align}
		& \lnorm{\solu - \theta_1 \grtr}^2 + \lnorm{\solu[\scalvec] - \theta_2 \tfrac{\scalvec}{\lnorm{\scalvec}}}^2 = 2 - 2\sp{\theta_1\grtr}{\solu} + 2 - 2\sp{\theta_2\tfrac{\scalvec}{\lnorm{\scalvec}}}{\solu[\scalvec]} \\
		= {} &  4 - 2\abs{\sp{\grtr}{\solu}} - 2\abs{\sp{\tfrac{\scalvec}{\lnorm{\scalvec}}}{\solu[\scalvec]}} \stackrel{(\ast)}{\leq} 4 - 4 \sqrt{ \abs{\sp{\grtr}{\solu}} \cdot \abs{\sp{\tfrac{\scalvec}{\lnorm{\scalvec}}}{\solu[\scalvec]}}} \\
		= {} & 4 - 4 \sqrt{ \sp{\grtr}{\solu} \sp{\tfrac{\scalvec}{\lnorm{\scalvec}}}{\solu[\scalvec]}} \leq 4 \Big( 1 - \sqrt{ \sp{\grtr}{\solu} \sp{\tfrac{\scalvec}{\lnorm{\scalvec}}}{\solu[\scalvec]}} \Big) \cdot \Big( 1 + \sqrt{ \sp{\grtr}{\solu} \sp{\tfrac{\scalvec}{\lnorm{\scalvec}}}{\solu[\scalvec]}} \Big) \\
		= {} & 4 \big( 1 - \sp{\grtr}{\solu} \sp{\tfrac{\scalvec}{\lnorm{\scalvec}}}{\solu[\scalvec]} \big) \stackrel{\eqref{eq:proofs:lifting:svdestim}}{\lesssim} \Err^2 \cdot \tfrac{M}{\lnorm{\scalvec}^2} \ , \label{eq:proofs:lifting:svdestim2}
	\end{align}
	where $(\ast)$ follows from the inequality of arithmetic and geometric means.
	This already implies $\lnorm{\solu - \theta_1 \grtr} \lesssim \Err \cdot \sqrt{M} / \lnorm{\scalvec}$, and the second bound in \eqref{eq:results:lifting:bound:svd} follows from
	\begin{align}
		\lnorm{\tau\solu[\scalvec] - \theta_2 \scalvec} &\leq \lnorm{\tau\solu[\scalvec] - \lambda \solu[\scalvec]} + \underbrace{\lnorm{\lambda \solu[\scalvec] - \theta_2 \scalvec}}_{\mathclap{\stackrel{\eqref{eq:proofs:lifting:svdestim2}}{\lesssim} \Err \sqrt{M}}} \lesssim \underbrace{\abs{\tau - \lambda}}_{\mathclap{\stackrel{\eqref{eq:proofs:lifting:weyl}}{\lesssim} \Err \sqrt{M}}} + \Err \cdot \sqrt{M} \lesssim \Err \cdot \sqrt{M}.
	\end{align}
	}
\end{proof}

To conclude our proofs, it just remains to verify Proposition~\ref{prop:proofs:modelcovar}:
\begin{proof}[Proof of Proposition~\ref{prop:proofs:modelcovar}]
\ref{prop:proofs:modelcovar:direct} First, we observe that $\tfrac{1}{\sqrt{M}}\adir \distributed \Normdistr{\vnull}{\I{n}}$, implying that $\tfrac{1}{\sqrt{M}}\adir$ is an isotropic sub-Gaussian vector with $\subgparam \coloneqq \normsubg{\tfrac{1}{\sqrt{M}}\adir} = 1$ and the hypothesis of Model~\ref{model:proofs:generalobs} is indeed satisfied.

Recalling the definition of $\modelcovar(\bar\scalfac\grtr; \tfrac{1}{\sqrt{M}}\adir, \tfrac{1}{\sqrt{M}}y)$ (see Definition~\ref{def:proofs:modelmismatch}), it suffices to show that
\begin{equation}
	\mean{}_{(\adir,y)}[(\sp{\adir}{\bar\scalfac\grtr} - y) \adir] = \vnull.
\end{equation}
For this purpose, let $\proj_{\grtr}(\cdot) = \sp{\cdot}{\grtr} \grtr$ denote the orthogonal projection onto $\spann\{\grtr\}$ and $\proj_{\orthcompl{\{\grtr\}}} = \I{} - \proj_{\grtr}$ its orthogonal complement (note that $\lnorm{\grtr} = 1$).
The rotation invariance of Gaussian random vectors particularly implies that $\sp{\a^j}{\grtr}$ and $\proj_{\orthcompl{\{\grtr\}}}(\a^j)$ are independent (and not just uncorrelated) for every $j = 1, \dots, M$. The independence of $\a^1, \dots, \a^M, e$ and the definition of $\bar\scalfac$ now yield
\begin{align}
	\mean[(\sp{\adir}{\bar\scalfac\grtr} - y) \adir] &= \mean{}\Big[\sum_{j = 1}^M \underbrace{(\sp{\a^j}{\bar\scalfac\grtr} - \fobs_j(\sp{\a^j}{\grtr}))}_{\eqqcolon z_j} \sum_{j' = 1}^M \a^{j'}\Big] - \underbrace{\mean{}\Big[e \cdot \sum_{j' = 1}^M \a^{j'} \Big]}_{= \vnull} \\
	&= \sum_{j = 1}^M \mean[z_j \a^{j}] = \sum_{j = 1}^M \mean[z_j [\proj_{\grtr} + \proj_{\orthcompl{\{\grtr\}}}](\a^{j})] \\
	&= \sum_{j = 1}^M \mean[z_j \sp{\a^{j}}{\grtr} \grtr] + \sum_{j = 1}^M \underbrace{\mean[(\bar\scalfac\sp{\a^j}{\grtr} - \fobs_j(\sp{\a^j}{\grtr})) \proj_{\orthcompl{\{\grtr\}}}(\a^{j})]}_{ \stackrel{(\ast)}{=} \vnull} \\
	&= \sum_{j = 1}^M ( \mean[(\bar\scalfac\sp{\a^{j}}{\grtr}^2 - \fobs_j(\sp{\a^j}{\grtr}) \sp{\a^{j}}{\grtr} )  \grtr] \\
	&= \Big( \bar\scalfac \sum_{j = 1}^M \underbrace{\mean[\sp{\a^{j}}{\grtr}^2]}_{= 1} - \sum_{j = 1}^M \underbrace{\mean[\fobs_j(\sp{\a^j}{\grtr}) \sp{\a^{j}}{\grtr}]}_{= \scalfac_j} \Big) \grtr \\
	&= \underbrace{\Big( M \cdot \bar\scalfac - \sum_{j = 1}^M \scalfac_j \Big)}_{= 0} \grtr = \vnull,
\end{align}
where $(\ast)$ is due to the fact that $\fobs_j(\sp{\a^j}{\grtr})$ and $\proj_{\orthcompl{\{\grtr\}}}(\a^{j})$ are independent in the Gaussian case.

\ref{prop:proofs:modelcovar:lifting}
Since $\alift \distributed \Normdistr{\vnull}{\I{n\cdot M}}$, the hypothesis of Model~\ref{model:proofs:generalobs} is satisfied with $\subgparam \coloneqq \normsubg{\alift} = 1$.
Using the notation of the first part and the independence of $\a^1, \dots, \a^M, e$, we obtain
\begin{align}
	\mean[(\sp{\alift}{\grtr\scalvec^\T} - y) \a^j] &= \mean{}\Big[\sum_{j' = 1}^M(\sp{\a^{j'}}{\scalfac_{j'}\grtr} - \fobs_{j'}(\sp{\a^{j'}}{\grtr}) \a^j\Big] - \underbrace{\mean[e \cdot \a^j]}_{= \vnull} \\
	&= \mean[(\sp{\a^{j}}{\scalfac_{j}\grtr} - \fobs_{j}(\sp{\a^{j}}{\grtr}) \a^j] \\
	&= \mean[(\sp{\a^{j}}{\scalfac_{j}\grtr} - \fobs_{j}(\sp{\a^{j}}{\grtr}) (\sp{\a^{j}}{\grtr} \grtr + \proj_{\orthcompl{\{\grtr\}}}(\a^{j}))] \\
	&= \mean[(\sp{\a^{j}}{\scalfac_{j}\grtr} - \fobs_{j}(\sp{\a^{j}}{\grtr}) \sp{\a^{j}}{\grtr} ] \grtr \\
	&= \Big(  \scalfac_j \underbrace{\mean[\sp{\a^{j}}{\grtr}^2]}_{= 1} - \underbrace{\mean[\fobs_{j}(\sp{\a^{j}}{\grtr}) \sp{\a^{j}}{\grtr}]}_{= \scalfac_j} \Big) \grtr = (\scalfac_j  - \scalfac_j) \grtr = \vnull.
\end{align}
Since this holds true for every $j = 1, \dots, M$, we conclude that $\mean[(\sp{\alift}{\grtr\scalvec^\T} - y) \alift] = \vnull$ and therefore $\modelcovar(\grtr\scalvec^\T; \alift, y) = 0$.
\end{proof}

\subsection{Proofs of Theorem~\ref{thm:extensions:hybrid} and Theorem~\ref{thm:extensions:hybridglobal}}
\label{subsec:proofs:hybridresults}

Similarly to the proofs of the previous part, we first set up a relationship between the hybrid approach of Algorithm~\ref{algo:hybrid} and Model~\ref{model:proofs:generalobs}, and compute the mismatch covariance in this specific case:
\begin{proposition}\label{prop:proofs:modelcovar:hybrid}
	Suppose that Model~\ref{model:extensions:measurements} holds true and recall the notation of Definition~\ref{def:extensions:scalparam}.
	As in the lifting case, let $\alift \coloneqq [\a^1 \cdots \a^M] \in \R^{n \times M}$ 
	and set $\Ahyb \coloneqq \alift \hybW \in \R^{n\times N}$.
	If $\hybW^\T \hybW = \tfrac{M}{N} \I{N}$, the measurement pair\footnote{Again, we regard $\Ahyb$ as a random vector $\R^{n \cdot N}$.} $(\sqrt{\tfrac{N}{M}}\Ahyb, \sqrt{\tfrac{N}{M}} y)$ obeys Model~\ref{model:proofs:generalobs} (with $\subgparam$ from Model~\ref{model:extensions:measurements} and $d = n \cdot N$) and we have
		\begin{equation}\label{eq:proofs:modelcovar:hybrid:covar}
			\modelcovar(\grtr\hybscalvec^\T; \sqrt{\tfrac{N}{M}} \Ahyb, \sqrt{\tfrac{N}{M}} y) = \tfrac{N\sqrt{N}}{M} \cdot  \modelcovar_\text{Hyb}.
		\end{equation}
\end{proposition}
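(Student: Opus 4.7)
The proof naturally splits into two independent parts. For the first claim I would verify each of the three conditions in Model~\ref{model:proofs:generalobs} directly. Mean-zero is inherited from the $\a^j$. For isotropy, I would test against an arbitrary matrix $\X = [\x^1 \cdots \x^N] \in \R^{n \times N}$ and expand
\begin{equation}
\mean[\sp{\Ahyb}{\X}^2] = \sum_{k,k'} \sum_{j,j'} w_{j,k} w_{j',k'} \mean[\sp{\a^j}{\x^k}\sp{\a^{j'}}{\x^{k'}}].
\end{equation}
Independence and mean-zero kill the $j \neq j'$ terms, while isotropy of each $\a^j$ gives $\mean[\sp{\a^j}{\x^k}\sp{\a^j}{\x^{k'}}] = \sp{\x^k}{\x^{k'}}$; the remaining double sum collapses via $\hybW^\T \hybW = (M/N)\I{N}$ to $(M/N)\lnorm{\X}^2$. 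Multiplying by $N/M$ yields the desired $\lnorm{\X}^2$. For the sub-Gaussian norm, I would first write $\sp{\Ahyb}{\X} = \sum_j \sp{\a^j}{\X \hybW^\T \vec{e}_j}$, apply the Hoeffding-type bound \eqref{eq:intro:notation:hoeffding} across the independent $\a^j$'s, and recognize the resulting sum $\sum_j \lnorm{\X\hybW^\T \vec{e}_j}^2 = (M/N)\lnorm{\X}^2$ again by semi-orthogonality; rescaling gives $\normsubg{\sqrt{N/M}\Ahyb} \lesssim \subgparam$.

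For the second claim, unwinding Definition~\ref{def:proofs:modelmismatch} under the scaling $(\sqrt{N/M}\Ahyb, \sqrt{N/M} y)$ reveals $\modelcovar = (N/M)\lnorm{\mean[(\sp{\Ahyb}{\grtr\hybscalvec^\T} - y)\Ahyb]}$, so it suffices to compute the matrix $\vec{G} \coloneqq \mean[(\sp{\Ahyb}{\grtr\hybscalvec^\T} - y)\Ahyb] \in \R^{n \times N}$ column by column. Setting $\beta \coloneqq \hybW \hybscalvec \in \R^M$, the inner product expands as $\sp{\Ahyb}{\grtr\hybscalvec^\T} = \sum_j \beta_j \sp{\a^j}{\grtr}$, and with $u_j \coloneqq \beta_j \sp{\a^j}{\grtr} - \fobs_j(\sp{\a^j}{\grtr})$ the $k$-th column of $\vec{G}$ becomes $\sum_{j,j'} w_{j',k} \mean[u_j \a^{j'}]$ minus an $e$-contribution that vanishes by independence. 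The cross terms ($j \neq j'$) disappear because $\mean[u_j] = 0$ (here we use $\mean[\fobs_j(\sp{\a^j}{\grtr})] = 0$ from Model~\ref{model:extensions:measurements}) and $\mean[\a^{j'}] = \vnull$.

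What remains is the diagonal contribution $\sum_j w_{j,k} \mean[u_j \a^j]$. I would decompose $\a^j = \sp{\a^j}{\grtr/\lnorm{\grtr}}\cdot\grtr/\lnorm{\grtr} + \proj_{\orthcompl{\{\grtr\}}}(\a^j)$. The component along $\grtr$ contributes $(\beta_j - \scalfac_j)\grtr$ by the definition of $\scalfac_j$ and isotropy; the orthogonal component contributes exactly $-\orthmodelbias^j$ since $\beta_j \mean[\sp{\a^j}{\grtr}\proj_{\orthcompl{\{\grtr\}}}(\a^j)] = \vnull$ by isotropy. Summing over $j$, the first piece gives $(\hybW^\T(\beta - \scalvec))_k \grtr$, and this vanishes: using $\hybscalvec = (N/M)\hybW^\T \scalvec$ together with $\hybW^\T \hybW = (M/N)\I{N}$ yields $\hybW^\T \beta = \hybW^\T \hybW \hybscalvec = (M/N)\hybscalvec = \hybW^\T \scalvec$. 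Consequently $\vec{G}$ has $k$-th column $-\orthmodelbias \hybw^k$, i.e., $\vec{G} = -\orthmodelbias \hybW$, so $\lnorm{\vec{G}} = \sqrt{N}\cdot\modelcovar_\text{Hyb}$ by \eqref{eq:extensions:hybrid:modelcovar}, and the scaling $N/M$ recovers \eqref{eq:proofs:modelcovar:hybrid:covar}.

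The only place requiring genuine care is the cancellation $\hybW^\T(\beta - \scalvec) = \vnull$, which is precisely the algebraic reason the definition \eqref{eq:extensions:scalparam:hybscalvec} of $\hybscalvec$ was chosen as it was; every other step is bookkeeping with independence, isotropy, and the semi-orthogonality of $\hybW$.
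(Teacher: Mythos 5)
Your proposal is correct and follows essentially the same route as the paper's proof: independence and the semi-orthogonality $\hybW^\T\hybW=\tfrac{M}{N}\I{N}$ for isotropy and the sub-Gaussian bound, then the decomposition of $\a^j$ along $\grtr$ and $\orthcompl{\{\grtr\}}$ with the cancellation $\hybW^\T\hybW\hybscalvec=\hybW^\T\scalvec$ isolating $-\orthmodelbias\hybW$. Your introduction of $\beta=\hybW\hybscalvec$ is only a notational streamlining of the same computation the paper carries out with explicit double sums.
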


\begin{proof}
	First, we note that the independence of $\a^1, \dots, \a^M$ implies that $\alift$ is an isotropic, mean-zero random vector in $\R^{n \times M}$ with $\normsubg{\alift} \leq \subgparam$.
	
	By linearity of $\hybW$, we conclude that $\mean[\Ahyb] = \vnull$. Moreover, for every $\X \in \R^{n \times N}$, we have
	\begin{align}
		\mean[\sp{\sqrt{\tfrac{N}{M}}\Ahyb}{\X}^2] &= \tfrac{N}{M} \cdot \mean[\sp{\alift\hybW}{\X}^2] = \tfrac{N}{M} \cdot \mean[\sp{\alift}{\X\hybW^\T}^2] \stackrel{\eqref{eq:intro:notation:intropic}}{=} \tfrac{N}{M} \cdot \lnorm{\X\hybW^\T}^2 \\
		&= \tfrac{N}{M} \cdot \sp{\X\hybW^\T}{\X\hybW^\T} = \tfrac{N}{M} \cdot \sp{\X}{\X\underbrace{\hybW^\T\hybW}_{= \tfrac{M}{N} \I{N}}} = \sp{\X}{\X} = \lnorm{\X}^2,
	\end{align}
	implying that $\sqrt{\tfrac{N}{M}}\Ahyb$ is indeed isotropic. Next, we show that $\normsubg{\sqrt{\tfrac{N}{M}}\Ahyb} \leq \subgparam$.
	For this purpose, let $\X \in \R^{n \times N}$ with $\lnorm{\X} = 1$. Using that $\normsubg{\alift}\leq \subgparam$, we observe (without loss of generality $\lnorm{\X\hybW^\T} \neq 0$)
	\begin{align}
		\normsubg{\sp{\sqrt{\tfrac{N}{M}}\Ahyb}{\X}} &= \sqrt{\tfrac{N}{M}} \cdot \normsubg{\sp{\alift\hybW}{\X}} = \sqrt{\tfrac{N}{M}} \cdot \normsubg{\sp{\alift}{\X\hybW^\T}} \\*
		&= \sqrt{\tfrac{N}{M}} \cdot \lnorm{\X\hybW^\T} \cdot \normsubg{\sp{\alift}{\tfrac{\X\hybW^\T}{\lnorm{\X\hybW^\T}}}} \\
		&\leq \sqrt{\tfrac{N}{M}} \cdot  \underbrace{\lnorm{\X\hybW^\T}}_{= \sqrt{\tfrac{M}{N}} \cdot \lnorm{\X}} \cdot \subgparam = \underbrace{\lnorm{\X}}_{= 1} \cdot \subgparam = \subgparam,
	\end{align}
	where the assumption $\hybW^\T \hybW = \tfrac{M}{N} \I{N}$ was used again. Taking the supremum over $\X \in \S^{n\cdot N - 1}$, it follows that $\normsubg{\sqrt{\tfrac{N}{M}}\Ahyb} \leq \subgparam$.
	
	It remains to verify \eqref{eq:proofs:modelcovar:hybrid:covar}. For this, we set $\ahyb^k \coloneqq \alift \hybw^k = \sum_{j = 1}^M w_{j,k}\a^j$ for $k = 1, \dots, N$, where $\hybw^1, \dots, \hybw^N \in \R^M$ denote the columns of $\hybW$.
	Since $\a^1, \dots, \a^M$, and $e$ are independent, we can compute
	\begin{align}
		\mean[(\sp{\Ahyb}{\grtr\hybscalvec^\T} - y) \ahyb^k] &= \mean{}\Big[\Big(\sum_{k' = 1}^N\sp{\ahyb^{k'}}{\hybscalfac_{k'}\grtr} - \sum_{j = 1}^M \fobs_{j}(\sp{\a^{j}}{\grtr}) \Big) \ahyb^k\Big] - \underbrace{\mean[e \cdot \ahyb^k]}_{= \vnull} \\*
			&= \mean{}\Big[\Big(\sum_{k' = 1}^N\sp{\sum_{j' = 1}^M w_{j',k'}\a^{j'}}{\hybscalfac_{k'}\grtr} - \sum_{j = 1}^M \fobs_{j}(\sp{\a^{j}}{\grtr}) \Big) \sum_{j'' = 1}^M w_{j'',k}\a^{j''}\Big] \\
			&= \sum_{k' = 1}^N \sum_{j' = 1}^M \mean[\sp{ w_{j',k'}\a^{j'}}{\hybscalfac_{k'}\grtr} w_{j',k}\a^{j'}] - \sum_{j = 1}^M \mean[\fobs_{j}(\sp{\a^{j}}{\grtr}) w_{j,k}\a^{j}] \\
			&= \sum_{j = 1}^M w_{j,k} \Big( \mean{}\Big[\Big(\sum_{k' = 1}^N  w_{j,k'} \hybscalfac_{k'}\Big) \sp{ \a^{j}}{\grtr} \a^{j} \Big] - \mean[\fobs_{j}(\sp{\a^{j}}{\grtr}) \a^{j}]\Big) \\
			&= \sum_{j = 1}^M w_{j,k} \Big(\Big(\sum_{k' = 1}^N  w_{j,k'} \hybscalfac_{k'}\Big) \mean[\sp{ \a^{j}}{\grtr} \a^{j}] - \mean[\fobs_{j}(\sp{\a^{j}}{\grtr}) \a^{j}]\Big).
	\end{align}
	Next, we decompose $\a^j$ again by an orthogonal projection onto $\spann\{\grtr\}$ and its orthogonal complement, in order to simplify the remaining expected values:
	\begin{align}
		\mean[\sp{ \a^{j}}{\grtr} \a^{j}] &= \mean[\sp{ \a^{j}}{\grtr} (\sp{\a^{j}}{\tfrac{\grtr}{\lnorm{\grtr}}} \tfrac{\grtr}{\lnorm{\grtr}} + \proj_{\orthcompl{\{\grtr\}}}(\a^{j}))] \\*
		&= \mean[\sp{ \a^{j}}{\grtr}^2] \tfrac{\grtr}{\lnorm{\grtr}^2} = \lnorm{\grtr}^2 \cdot \tfrac{\grtr}{\lnorm{\grtr}^2} = \grtr, \\
		\mean[\fobs_{j}(\sp{\a^{j}}{\grtr}) \a^{j}] &= \mean[\fobs_{j}(\sp{\a^{j}}{\grtr}) (\sp{\a^{j}}{\tfrac{\grtr}{\lnorm{\grtr}}} \tfrac{\grtr}{\lnorm{\grtr}} + \proj_{\orthcompl{\{\grtr\}}}(\a^{j}))] \\
		&= \scalfac_j \cdot \grtr + \underbrace{\mean[\fobs_{j}(\sp{\a^{j}}{\grtr})\proj_{\orthcompl{\{\grtr\}}}(\a^{j})]}_{= \orthmodelbias^j},
	\end{align}
	where in the first part, we have used the isotropy of $\a^j$ and \eqref{eq:intro:notation:isotropicsp}, and the second identity follows from Definition~\ref{def:extensions:scalparam}.
	Using $\hybW^\T\hybW = \tfrac{M}{N} \I{N} = \tfrac{M}{N} [\delta_{k,k'}]$ once more, we now obtain
	\begin{align}
		\mean[(\sp{\Ahyb}{\grtr\hybscalvec^\T} - y) \ahyb^k] &= \sum_{j = 1}^M w_{j,k} \Big(\Big(\sum_{k' = 1}^N  w_{j,k'} \hybscalfac_{k'}\Big) - \scalfac_j\Big) \cdot \grtr - \sum_{j = 1}^M w_{j,k} \orthmodelbias^j \\*
		&= \Big( \sum_{k' = 1}^N \hybscalfac_{k'}  \underbrace{\sum_{j = 1}^M w_{j,k} w_{j,k'}}_{= \tfrac{M}{N} \delta_{k,k'}} - \underbrace{\sum_{j = 1}^M w_{j,k} \scalfac_j}_{= \sp{\hybw^k}{\scalvec} = \tfrac{M}{N} \hybscalfac_k} \Big) \cdot \grtr - \sum_{j = 1}^M w_{j,k} \orthmodelbias^j \\
		&= \underbrace{\tfrac{M}{N} \cdot \Big( \sum_{k' = 1}^N \hybscalfac_{k'} \delta_{k,k'} - \hybscalfac_k \Big)}_{= 0} \cdot \grtr - \sum_{j = 1}^M w_{j,k} \orthmodelbias^j = - \sum_{j = 1}^M w_{j,k} \orthmodelbias^j.
	\end{align}
	Hence,
	\begin{align}
		\modelcovar(\grtr\hybscalvec^\T; \sqrt{\tfrac{N}{M}} \Ahyb, \sqrt{\tfrac{N}{M}} y) &= \lnorm[\Big]{ \mean[(\sp{\sqrt{\tfrac{N}{M}} \Ahyb}{\grtr\hybscalvec^\T} - \sqrt{\tfrac{N}{M}} y) \sqrt{\tfrac{N}{M}} \Ahyb]} \\*
		&= \tfrac{N}{M} \cdot \lnorm[\Big]{ \mean{}\big[(\sp{\Ahyb}{\grtr\hybscalvec^\T} - y) [\ahyb^1 \cdots \ahyb^N]\big]} \\
		&= \tfrac{N}{M} \cdot \lnorm[\Big]{\Big[\sum_{j = 1}^M w_{j,1} \orthmodelbias^j \dots \sum_{j = 1}^M w_{j,N} \orthmodelbias^j \Big]} \\
		&= \tfrac{N}{M} \cdot \Big( \sum_{k = 1}^N \lnorm{\orthmodelbias \hybw^k}^2\Big)^{1/2} = \tfrac{N\sqrt{N}}{M} \cdot \modelcovar_\text{Hyb}.
	\end{align}
\end{proof}

The proof of Theorem~\ref{thm:extensions:hybrid} is again a consequence of Theorem~\ref{thm:proofs:abstractrecovery}:
\begin{proof}[Proof of Theorem~\ref{thm:extensions:hybrid}]
	From Proposition~\ref{prop:proofs:modelcovar:hybrid}, we know that $(\sqrt{\tfrac{N}{M}}\Ahyb, \sqrt{\tfrac{N}{M}} y)$ with $\Ahyb \coloneqq \alift \hybW$ satisfies Model~\ref{model:proofs:generalobs}. Now, we would like to apply Theorem~\ref{thm:proofs:abstractrecovery} with $\grtrgen = \grtr\hybscalvec^\T$.

	Recalling the definition of $\modeldev_\text{Hyb}^2$ and the independence of $\a^1, \dots, \a^M$, we can bound the model deviation as follows:
	\begin{align}
		\modeldev(\grtr\hybscalvec^\T; \sqrt{\tfrac{N}{M}} \Ahyb, \sqrt{\tfrac{N}{M}} y)^2 &= \normsubg{\sp{\sqrt{\tfrac{N}{M}} \Ahyb}{\grtr\hybscalvec^\T} - \sqrt{\tfrac{N}{M}} y}^2 \\*
		&= \tfrac{N}{M} \cdot \normsubg{\sp{\alift}{\tfrac{N}{M}\grtr\scalvec^\T \hybW \hybW^\T} - y}^2 \\
		&= \tfrac{N}{M} \cdot \normsubg[\Big]{\sum_{j = 1}^M ( \sp{\a^j}{\hybgrtr^j} - \fobs_j(\sp{\a^j}{\grtr})) - e}^2 \\
		&\stackrel{\eqref{eq:intro:notation:hoeffding}}{\lesssim} \tfrac{N}{M} \cdot \sum_{j = 1}^M \normsubg[\Big]{\sp{\a^j}{\hybgrtr^j} - \fobs_j(\sp{\a^j}{\grtr})}^2 + \tfrac{N \nu^2}{M} \\
		&= N \cdot \modeldev_\text{Hyb}^2 + N \cdot \tfrac{\nu^2}{M}. \label{eq:proofs:hybrid:modeldev}
	\end{align}
	Since $\modelcovar(\grtr\hybscalvec^\T; \sqrt{\tfrac{N}{M}} \Ahyb, \sqrt{\tfrac{N}{M}} y) = \tfrac{N\sqrt{N}}{M} \cdot  \modelcovar_\text{Hyb}$ by Proposition~\ref{prop:proofs:modelcovar:hybrid}, the error bound of Theorem~\ref{thm:proofs:abstractrecovery} states
	\begin{align}
		\Big(\sum_{k = 1}^N\lnorm{\solu^j - \hybscalfac_j\grtr}^2 \Big)^{1/2} &= \lnorm{[\solu^1 \cdots \solu^N] - \grtr\hybscalvec^\T} \\*
		&\lesssim \subgparam^{-1} \cdot \modeldev(\grtr\hybscalvec^\T) \cdot \delta + \modelcovar(\grtr\hybscalvec^\T) \\
		&\lesssim \subgparam^{-1} \cdot \sqrt{N} \cdot (\modeldev_\text{Hyb}^2 + \tfrac{\nu^2}{M})^{\frac{1}{2}} \cdot \delta + \tfrac{N\sqrt{N}}{M} \cdot\modelcovar_\text{Hyb}.
	\end{align}
	Dividing both sides by $\sqrt{N}$ gives the bound of Theorem~\ref{thm:extensions:hybrid}.
	Similarly to the proof of Theorem~\ref{thm:results:direct}, note that by multiplying the objective function of \eqref{eq:hybrid:lasso} by $N / M$, we precisely end up with the $\sset$-Lasso \eqref{eq:proofs:klasso} in the setup of Model~\ref{model:proofs:generalobs}.
\end{proof}

For the proof of Theorem~\ref{thm:extensions:hybridglobal}, we cannot apply Theorem~\ref{thm:proofs:abstractrecovery} anymore, since it is limited to the conic mean width. Instead, the refined version of Theorem~\ref{thm:app:abstractrecovery:localmw} turns out to be useful:
\begin{proof}[Proof of Theorem~\ref{thm:extensions:hybridglobal}]

	As before, we first note that, by Proposition~\ref{prop:proofs:modelcovar:hybrid}, the measurement pair $(\sqrt{\tfrac{N}{M}}\Ahyb, \sqrt{\tfrac{N}{M}} y)$ indeed satisfies Model~\ref{model:proofs:generalobs}.

	In contrast to the proof of Theorem~\ref{thm:extensions:hybrid}, we now apply Theorem~\ref{thm:app:abstractrecovery:localmw} with $\grtrgen = \grtr\hybscalvec^\T$. Adapting the proof strategy of \cite[Thm.~1.3]{genzel2016estimation}, we choose the (desired) error accuracy $t$ as follows:
	\begin{equation}
		t \coloneqq D \cdot \Big[ \subgparam \cdot \Big(\frac{\meanwidth{\sset}}{\sqrt{m}}\Big)^{1/2} + \frac{u}{\sqrt{m}}\Big] + D' \cdot \modelcovar(\grtrgen) ,
	\end{equation}
	where the constants $D, D' \gtrsim 1$ are specified later on.
	First, we observe that
	\begin{align}
		\Big(\tfrac{1}{t}\meanwidth[t]{\sset-\grtrgen}\Big)^2 &\leq \tfrac{1}{t^2}\underbrace{\meanwidth{(\sset-\grtrgen)\intersec t \S^{n\cdot N - 1}}^2}_{\leq \meanwidth{\sset - \sset}^2 \leq 4 \meanwidth{\sset}^2} \leq \tfrac{4}{t^2} \cdot \meanwidth{\sset}^2 \\*
		&\leq \tfrac{4}{D^2 \cdot \subgparam^2} \cdot \tfrac{\sqrt{m}}{\meanwidth{\sset}} \cdot \meanwidth{\sset}^2 = \tfrac{4}{D^2 \cdot \subgparam^2} \cdot \sqrt{m} \cdot \meanwidth{\sset} \label{eq:proofs:hybridglobal:locglob} \\
		&\stackrel{\eqref{eq:extensions:hybridglobal:meas}}{\lesssim} \tfrac{1}{D^2 \cdot \subgparam^4} \cdot \underbrace{\delta^{2}}_{\leq 1} \cdot m \leq \tfrac{1}{D^2 \cdot \subgparam^4} \cdot m, 
	\end{align}
	and since we may just enlarge $D$ later on, the condition \eqref{eq:app:abstractrecovery:localmw:meas} of Theorem~\ref{thm:app:abstractrecovery:localmw} is indeed satisfied.
	Next, we choose $D \gtrsim \max\{1, \subgparam \cdot \modeldev(\grtrgen) \}$ and bound the right-hand side of \eqref{eq:app:abstractrecovery:localmw:implbound}:
	\begin{align}
		& C'' \cdot \Big( \underbrace{ \subgparam \cdot \modeldev(\grtrgen)}_{\lesssim D \lesssim D^2} \cdot \frac{\tfrac{1}{t}\meanwidth[t]{\sset - \grtrgen} + u}{\sqrt{m}} + \modelcovar(\grtrgen)\Big) \\*
		\lesssim{} & \frac{D^2}{2} \cdot \frac{\tfrac{1}{t}\meanwidth[t]{\sset - \grtrgen}}{\sqrt{m}} + D \cdot \frac{u}{\sqrt{m}} + D' \cdot \modelcovar(\grtrgen) \\
		\stackrel{\eqref{eq:proofs:hybridglobal:locglob}}{\leq} {} & \frac{D^2}{2} \cdot \frac{2}{D \cdot \subgparam} \cdot \frac{\sqrt{\meanwidth{\sset}} \cdot m^{1/4}}{\sqrt{m}} + D \cdot \frac{u}{\sqrt{m}} + D' \cdot \modelcovar(\grtrgen) \\
		={} & D \cdot \underbrace{\subgparam^{-1}}_{\stackrel{\eqref{eq:proofs:abstractrecovery:subgparamlower}}{\leq} 2 \subgparam} \cdot \Big(\frac{\meanwidth{\sset}}{\sqrt{m}}\Big)^{1/2} + D \cdot \frac{u}{\sqrt{m}} + D' \cdot \modelcovar(\grtrgen) \\
		\lesssim {} & D \cdot \Big[ \subgparam \cdot \Big(\frac{\meanwidth{\sset}}{\sqrt{m}}\Big)^{1/2} + \frac{u}{\sqrt{m}}\Big] + D' \cdot \modelcovar(\grtrgen)  = t.
	\end{align}
	Thus, if $D' > 0$ is large enough and $D = \tilde{C} \cdot \max\{1, \subgparam \cdot \modeldev(\grtrgen) \}$ for a sufficiently large numerical constant $\tilde{C} > 0$, we can conclude that condition \eqref{eq:app:abstractrecovery:localmw:implbound} is also satisfied. Therefore, under the assumptions of Theorem~\ref{thm:extensions:hybridglobal}, the statement of Theorem~\ref{thm:app:abstractrecovery:localmw} yields the following error bound:
	\begin{align}
		\Big(\sum_{k = 1}^N\lnorm{\solu^j - \hybscalfac_j\grtr}^2 \Big)^{1/2} &= \lnorm{[\solu^1 \cdots \solu^N] - \grtrgen} \\*
		&\leq t \lesssim \max\{1, \subgparam \cdot \modeldev(\grtrgen) \} \cdot \Big( \underbrace{\subgparam \cdot \Big(\frac{\meanwidth{\sset}}{\sqrt{m}}\Big)^{1/2}}_{\stackrel{\eqref{eq:extensions:hybridglobal:meas}}{\lesssim} \delta} + \frac{u}{\sqrt{m}} \Big) + \modelcovar(\grtrgen)  \\
		&\stackrel{\mathllap{u \coloneqq \delta \sqrt{m}}}{\lesssim} \max\{1, \subgparam \cdot \modeldev(\grtrgen) \} \cdot \delta + \modelcovar(\grtrgen)
	\end{align}
	with probability at least $1 - 2 \exp(-C \cdot \delta^2 \cdot m) - 2 \exp(-C \cdot m) - \exp(-C \cdot \subgparam^{-4} \cdot m)$.
	
	Finally, using that $\modelcovar(\grtr\hybscalvec^\T) = \tfrac{N\sqrt{N}}{M} \cdot  \modelcovar_\text{Hyb}$ (see Proposition~\ref{prop:proofs:modelcovar:hybrid}) and that $\modeldev(\grtr\hybscalvec^\T)^2 \lesssim N \cdot \modeldev_\text{Hyb}^2 + N \cdot \tfrac{\nu^2}{M}$ (by \eqref{eq:proofs:hybrid:modeldev}), we obtain
	\begin{align}
		\Big(\tfrac{1}{N}\sum_{k = 1}^N\lnorm{\solu^j - \hybscalfac_j\grtr}^2 \Big)^{1/2} &\lesssim \tfrac{1}{\sqrt{N}} \cdot \max\Big\{1, \subgparam \cdot \modeldev(\grtr\hybscalvec^\T) \Big\} \cdot \delta + \tfrac{1}{\sqrt{N}} \cdot \modelcovar(\grtr\hybscalvec^\T)) \\
		&\lesssim \max\{\tfrac{1}{\sqrt{N}}, \subgparam \cdot (\modeldev_\text{Hyb}^2 + \tfrac{\nu^2}{M})^{1/2} \} \cdot \delta + \tfrac{N}{M} \cdot  \modelcovar_\text{Hyb}.
	\end{align}
	Since $\delta \leq 1$ and $\subgparam \geq  1 / \sqrt{2}$ (by \eqref{eq:proofs:abstractrecovery:subgparamlower}), the above probability of success can be bounded from below by $1 - 5 \exp(- C \cdot \subgparam^{-4} \cdot \delta^2 \cdot m )$ for an appropriately chosen $C > 0$.
\end{proof}




\appendix
\section{Appendix}

\subsection{Proof of Theorem~\ref{thm:proofs:abstractrecovery} (Recovery via the $\sset$-Lasso)}
\label{subsec:app:abstractrecovery}

We shall first prove the following slightly more general version of
Theorem~\ref{thm:proofs:abstractrecovery}.  It involves the concept of \emph{local mean width} introduced in
Definition~\ref{def:proofs:meanwidth}, which can be very helpful if the conic mean width $\meanwidth{\cone{\sset}{\grtrgen}}$ in \eqref{eq:proofs:abstractrecovery:meas} behaves ``inappropriately'' (see also Subsection~\ref{subsec:extensions:hybridglobal}).
\begin{theorem}\label{thm:app:abstractrecovery:localmw}
  We assume that Model~\ref{model:proofs:generalobs} holds true.  Let
  $\sset \subset \R^d$ be a convex subset and let $\grtrgen \in \sset$
  be an arbitrary vector.  There exist numerical constants
  $C, C', C'' > 0$ such that for every $u > 0$ and $t > 0$ the
  following holds true with probability at least
  $1 - 2 \exp(-C \cdot u^2) - 2 \exp(-C \cdot m) - \exp(-C \cdot
  \subgparam^{-4} \cdot m)$: If
  \begin{equation}\label{eq:app:abstractrecovery:localmw:meas}
    m \geq C' \cdot \subgparam^4 \cdot \Big(\tfrac{1}{t}\meanwidth[t]{\sset-\grtrgen}\Big)^2,
  \end{equation}
  and
  \begin{equation}\label{eq:app:abstractrecovery:localmw:implbound}
    t > C'' \cdot \Big( \subgparam \cdot \modeldev(\grtrgen) \cdot \frac{\tfrac{1}{t}\meanwidth[t]{\sset - \grtrgen} + u}{\sqrt{m}} + \modelcovar(\grtrgen)\Big),
  \end{equation}
  then any minimizer $\solu$ of \eqref{eq:proofs:klasso} satisfies
  $\lnorm{\solu - \grtrgen} \leq t$.
\end{theorem}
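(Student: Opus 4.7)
The plan is to follow a standard ``small-ball'' / restricted-strong-convexity scheme, applied to the residual at $\grtrgen$ rather than at a ``true'' signal. First I would reduce the statement to a uniform lower bound of the excess loss on the spherical shell
\begin{equation}
  T_t \coloneqq (\sset - \grtrgen) \cap t \, \S^{d-1}.
\end{equation}
Concretely, if $\mathcal{L}(\xgen) \coloneqq \tfrac{1}{2m}\sum_{i=1}^m(y_i-\sp{\agen_i}{\xgen})^2$ and $\h \coloneqq \solu - \grtrgen$ with $\lnorm{\h} > t$, then by convexity of $\sset$ and of $\mathcal{L}$, the point $\grtrgen + \tfrac{t}{\lnorm{\h}}\h$ lies in $\sset$ and satisfies $\mathcal{L}(\grtrgen + \tfrac{t}{\lnorm{\h}}\h) \leq \max\{\mathcal{L}(\grtrgen),\mathcal{L}(\solu)\} = \mathcal{L}(\grtrgen)$ (using optimality of $\solu$). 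Hence it suffices to show that, with the required probability,
\begin{equation}
  \inf_{\h \in T_t} \bigl[\mathcal{L}(\grtrgen+\h) - \mathcal{L}(\grtrgen)\bigr] > 0.
\end{equation}

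The excess loss decomposes into a \emph{quadratic part} and a \emph{multiplier part},
\begin{equation}
  \mathcal{L}(\grtrgen+\h) - \mathcal{L}(\grtrgen) \;=\; \tfrac{1}{2m}\sum_{i=1}^m \sp{\agen_i}{\h}^2 \;-\; \tfrac{1}{m}\sum_{i=1}^m \xi_i \sp{\agen_i}{\h}, \qquad \xi_i \coloneqq y_i - \sp{\agen_i}{\grtrgen}.
\end{equation}
For the quadratic term I would invoke a Mendelson-type small-ball / restricted lower isometry result (in the spirit of \cite{tropp2015small} or the sub-Gaussian refinement of Koltchinskii--Mendelson): under the sampling condition \eqref{eq:app:abstractrecovery:localmw:meas}, there is a numerical constant $c_1 > 0$ such that
\begin{equation}
  \inf_{\h \in T_t} \tfrac{1}{m}\sum_{i=1}^m \sp{\agen_i}{\h}^2 \;\geq\; c_1 \, t^2
\end{equation}
with probability at least $1 - \exp(-C \subgparam^{-4} m)$; the local mean width appears because the Gaussian complexity of $T_t$ equals $\meanwidth[t]{\sset-\grtrgen}$. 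For the multiplier term I would first separate mean from deviation: using $|\mean[\xi\agen]| = \modelcovar(\grtrgen)$, Cauchy--Schwarz gives $|\sp{\mean[\xi\agen]}{\h}| \leq \modelcovar(\grtrgen) \cdot t$. The centered part
\begin{equation}
  \sup_{\h \in T_t} \Bigl| \tfrac{1}{m}\sum_{i=1}^m \xi_i \sp{\agen_i}{\h} - \mean[\xi\sp{\agen}{\h}] \Bigr|
\end{equation}
is a multiplier empirical process indexed by a subset of $t\,\S^{d-1}$, whose summands are products of sub-Gaussian variables with sub-Gaussian norm $\lesssim \subgparam \modeldev(\grtrgen) \cdot t$. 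A standard multiplier chaining bound (Mendelson, Dirksen) then yields, with probability at least $1 - 2\exp(-C u^2) - 2\exp(-C m)$,
\begin{equation}
  \sup_{\h \in T_t} \Bigl| \tfrac{1}{m}\sum_{i=1}^m \xi_i \sp{\agen_i}{\h} - \mean[\xi\sp{\agen}{\h}] \Bigr| \;\lesssim\; \subgparam \, \modeldev(\grtrgen) \cdot \frac{\meanwidth[t]{\sset-\grtrgen} + u \, t}{\sqrt{m}}.
\end{equation}

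Combining these two estimates on the event where both hold, the excess loss satisfies
\begin{equation}
  \mathcal{L}(\grtrgen+\h) - \mathcal{L}(\grtrgen) \;\geq\; \tfrac{c_1}{2} t^2 - \modelcovar(\grtrgen)\cdot t - C''' \cdot \subgparam \, \modeldev(\grtrgen) \cdot \tfrac{\meanwidth[t]{\sset-\grtrgen} + u\,t}{\sqrt{m}}
\end{equation}
uniformly in $\h \in T_t$. Dividing by $t$, the assumption \eqref{eq:app:abstractrecovery:localmw:implbound} (with the constant $C''$ chosen appropriately relative to $c_1$ and $C'''$) forces the right-hand side to be strictly positive, and the reduction from the first step then gives $\lnorm{\solu-\grtrgen} \leq t$. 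The main technical obstacle is the multiplier chaining bound: one must argue that even without independence between $\xi_i$ and $\agen_i$ (indeed $\xi_i$ is a function of $\agen_i$), the product $\xi_i \sp{\agen_i}{\h}$ behaves like a sub-exponential random variable with the right sub-Gaussian constant $\subgparam\,\modeldev(\grtrgen)$, so that generic chaining over $T_t$ produces the Gaussian-width-type bound; everything else is bookkeeping of constants and probabilities, with the three error terms in the probability statement coming respectively from the multiplier deviation, the chaining tail, and the small-ball lower bound.
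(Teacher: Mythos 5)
Your proposal is correct and follows essentially the same route as the paper's own proof: the same decomposition of the excess loss into a quadratic and a multiplier part over the shell $(\sset-\grtrgen)\cap t\,\S^{d-1}$, the same pair of concentration inputs (a Liaw--Mehrabian--Plan--Vershynin/small-ball lower bound for the quadratic process and Mendelson's multiplier-process theorem for the centered linear part), the same Cauchy--Schwarz step isolating $\modelcovar(\grtrgen)$, and the same convexity argument to pass from positivity on the shell to the distance bound. The one technical obstacle you flag (dependence of $\xi_i$ on $\agen_i$ in the multiplier process) is exactly what the cited Mendelson result is designed to handle, so nothing is missing.
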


The proof of Theorem~\ref{thm:app:abstractrecovery:localmw} is based
on two fundamental concentration results on empirical sub-Gaussian
processes, originating from \cite{liaw2016randommat} and
\cite{mendelson2016multiplier}, respectively:

\begin{theorem}\label{thm:app:abstractrecovery:empproc}
  Assume that Model~\ref{model:proofs:generalobs} is satisfied and let $\ssetalt \subset t \S^{d-1}$ for $t > 0$.
  \begin{thmlist}
  \item (See \cite[Thm.~1.3]{liaw2016randommat}) There exists a
    numerical constant $C_1 > 0$ such that for every $u \geq 0$ the
    following holds true with probability at least $1 - \exp(-u^2)$:
    \begin{equation}\label{eq:app:abstractrecovery:empproc:quadratic}
      \sup_{\h \in \ssetalt} \abs[\Big]{\Big(\tfrac{1}{m}\sum_{i = 1}^m \abs{\sp{\a_i}{\h}}^2\Big)^{1/2} - t } \leq C_1 \cdot \subgparam^2 \cdot \frac{\meanwidth{\ssetalt} + u \cdot t}{\sqrt{m}}.
    \end{equation}
  \item\label{thm:app:abstractrecovery:empproc:multiplier} (See \cite[Thm. 4.4]{mendelson2016multiplier}) Fix
    $\grtrgen \in \R^d$ and set
    $z_i(\grtrgen) \coloneqq \sp{\a_i}{\grtrgen} - y_i$ for
    $i = 1, \dots, m$.  There exist numerical constants $C, C_2 > 0$
    such that for every $u > 0$ the following holds true with probability
    at least $1 - 2 \exp(-C \cdot u^2 ) - 2 \exp(-C \cdot m)$:
    \begin{equation}\label{eq:app:abstractrecovery:empproc:multiplier}
      \sup_{\h \in \ssetalt} \abs[\Big]{\tfrac{1}{m}\sum_{i = 1}^m \Big(z_i(\grtrgen) \sp{\a_i}{\h} - \mean{}_{(\a_i,y_i)}[z_i(\grtrgen) \sp{\a_i}{\h}]\Big)} \leq C_2 \cdot \subgparam \cdot \modeldev(\grtrgen) \cdot  \frac{\meanwidth{\ssetalt} + u \cdot t}{\sqrt{m}}.
    \end{equation}
  \end{thmlist}
\end{theorem}

\begin{proof}[Proof of Theorem~\ref{thm:app:abstractrecovery:localmw}]
  Let us denote the objective function of \eqref{eq:proofs:klasso} --- typically referred to as the \emph{empirical loss} --- by
  \begin{equation}
    \lossemp{}(\x) \coloneqq \tfrac{1}{2m} \sum_{i = 1}^m (y_i - \sp{\agen_i}{\x})^2, \quad \x \in \R^d.
  \end{equation}
  A straightforward computation shows that the excess loss takes the following form:
  \begin{align}
    \lossemp{}(\x) - \lossemp{}(\grtrgen) = \underbrace{\tfrac{1}{m} \sum_{i = 1}^m z_i(\grtrgen) \sp{\a_i}{\x - \grtrgen}}_{\eqqcolon \multiplterm{\lossemp}{\x,\grtrgen}} + \underbrace{\tfrac{1}{2m} \sum_{i = 1}^m \abs{\sp{\a_i}{\x - \grtrgen}}^2}_{\eqqcolon \quadrterm{\lossemp}{\x,\grtrgen}}, \label{eq:app:abstractrecovery:localmw:decomp}
  \end{align}
  where $\x \in \R^d$ and $z_i(\grtrgen) \coloneqq \sp{\a_i}{\grtrgen} - y_i$ for $i = 1, \dots, m$.
	
  Now, we would like to use
  Theorem~\ref{thm:app:abstractrecovery:empproc} to bound both the
  \emph{multiplier process} $\multiplterm{\lossemp}{\x,\grtrgen}$
  and the \emph{quadratic process}
  $\quadrterm{\lossemp}{\x,\grtrgen}$ uniformly on a ``small''
  neighborhood of $\grtrgen$.  For this purpose, let us assume that
  the events of Theorem~\ref{thm:app:abstractrecovery:empproc} with
  $\ssetalt \coloneqq (\sset - \grtrgen) \intersec t \S^{d-1}$ have indeed
  occurred. For every
  $\x \in \ssetalt + \grtrgen = \sset \intersec (t \S^{d-1} + \grtrgen)$, the bound of
  \eqref{eq:app:abstractrecovery:empproc:quadratic} yields
  \begin{align}
    \sqrt{2} \cdot \quadrterm{\lossemp}{\x,\grtrgen}^{1/2} &= \Big( \tfrac{1}{m} \sum_{i = 1}^m \abs{\sp{\a_i}{\x - \grtrgen}}^2 \Big)^{1/2} \geq t - C_1 \cdot \subgparam^2 \cdot \frac{\meanwidth{\ssetalt} + u \cdot t}{\sqrt{m}} \\
                                            &= t \cdot \Big( 1 - C_1 \cdot \underbrace{\frac{\subgparam^2 \cdot \tfrac{1}{t}\meanwidth[t]{\sset- \grtrgen}}{\sqrt{m}}}_{\stackrel{\eqref{eq:app:abstractrecovery:localmw:meas}}{\leq} 1 / \sqrt{C'}} {}-{} C_1 \cdot \frac{\subgparam^2 \cdot u}{\sqrt{m}} \Big) \\
                                            &\stackrel{\mathllap{u\coloneqq \sqrt{C \subgparam^{-4} m}}}{\geq} t \cdot \underbrace{\Big( 1 - C_1 / \sqrt{C'} - C_1 \sqrt{C} \Big)}_{\eqqcolon \sqrt{2C_0}}.\label{eq:app:abstractrecovery:localmw:quadraticbound}
  \end{align}
  Adjusting the (numerical) constants $C, C'$, we can always achieve
  that $C_0 > 0$, and therefore, we have
  \begin{equation}
    \quadrterm{\lossemp}{\x,\grtrgen} \geq C_0 \cdot t^2
  \end{equation}
  for all $\x \in \sset \intersec (t \S^{d-1} + \grtrgen)$ with
  probability at least $1 - \exp(-C \cdot \subgparam^{-4} \cdot m)$.
  
  The multiplier process can be handled in a very similar way: For
  every $\x \in \sset \intersec (t \S^{d-1} + \grtrgen)$, the bound of
  \eqref{eq:app:abstractrecovery:empproc:multiplier} yields
  \begin{align}
    \multiplterm{\lossemp}{\x,\grtrgen} &= \tfrac{1}{m} \sum_{i = 1}^m z_i(\grtrgen) \sp{\a_i}{\x - \grtrgen} \\
                                        &\geq \tfrac{1}{m} \sum_{i = 1}^m \mean{}_{(\a_i,y_i)}[z_i(\grtrgen) \sp{\a_i}{\x - \grtrgen}] - C_2 \cdot \subgparam \cdot \modeldev(\grtrgen) \cdot  \frac{\meanwidth{\ssetalt} + u \cdot t}{\sqrt{m}} \\
                                        &= \mean{}_{(\a,y)}[(\sp{\agen}{\grtrgen} - y) \sp{\a}{\x - \grtrgen}] - C_2 \cdot \subgparam \cdot \modeldev(\grtrgen) \cdot  \frac{\meanwidth[t]{\sset- \grtrgen} + u \cdot t}{\sqrt{m}} \\
                                        &= t \cdot \Big( \mean{}_{(\a,y)}[(\sp{\agen}{\grtrgen} - y) \sp{\a}{\tfrac{\x - \grtrgen}{t}}] - C_2 \cdot \subgparam \cdot \modeldev(\grtrgen) \cdot  \frac{\tfrac{1}{t}\meanwidth[t]{\sset- \grtrgen} + u}{\sqrt{m}} \Big) \\
                                        &\stackrel{(\ast)}{\geq} - t \cdot \underbrace{\Big( \modelcovar(\grtrgen; \agen, y) + C_2 \cdot \subgparam \cdot \modeldev(\grtrgen) \cdot  \frac{\tfrac{1}{t}\meanwidth[t]{\sset- \grtrgen} + u}{\sqrt{m}} \Big)}_{\eqqcolon t_0},
  \end{align}
  where $(\ast)$ follows from the Cauchy-Schwarz inequality:
  \begin{align}
    \mean{}_{(\a,y)}[(\sp{\agen}{\grtrgen} - y) \sp{\a}{\tfrac{\x - \grtrgen}{t}}] &= \sp{\mean{}_{(\a,y)}[(\sp{\agen}{\grtrgen} - y) \a]}{\tfrac{\x - \grtrgen}{t}} \\
                                                                                   &\geq - \lnorm{\mean{}_{(\a,y)}[(\sp{\agen}{\grtrgen} - y) \a]} \cdot \underbrace{\lnorm{\tfrac{\x - \grtrgen}{t}}}_{= 1} = - \modelcovar(\grtrgen; \agen, y).
  \end{align}
  
  Now, using the condition
  \eqref{eq:app:abstractrecovery:localmw:implbound} for an appropriate
  $C'' > 0$, we end up with
  \begin{equation}
    \score(\x) \coloneqq \lossemp{}(\x) - \lossemp{}(\grtrgen) = \quadrterm{\lossemp}{\x,\grtrgen} + \multiplterm{\lossemp}{\x,\grtrgen}  \geq C_0 t^2 - t_0 t = t \underbrace{(C_0 t - t_0)}_{\stackrel{\eqref{eq:app:abstractrecovery:localmw:implbound}}{>} 0} > 0
  \end{equation}
  for all $\x \in \sset \intersec (t \S^{d-1} + \grtrgen)$ with
  probability at least
  $1 - 2 \exp(-C \cdot u^2) - 2 \exp(-C \cdot m) - \exp(-C \cdot
  \subgparam^{-4} \cdot
  m)$. 
  On the other hand, any minimizer $\solu \in \sset$ of
  \eqref{eq:proofs:klasso} clearly satisfies $\score(\solu) \leq 0$. Hence, if
  we would have $\lnorm{\solu - \grtrgen} > t$, there would exist (by
  convexity of $\sset$) an
  $\x \in \sset \intersec (t \S^{d-1} + \grtrgen)$ such that
  $\solu \in \{ \grtrgen + \lambda (\x - \grtrgen) \suchthat \lambda > 1 \}$. But
  this already contradicts the fact that
  $\lambda \mapsto \score(\grtrgen + \lambda(\x - \grtrgen))$ is a convex
  function, since it holds that $\score(\solu) \leq 0$,
  $\score(\x) > 0$, and $\score(\grtrgen) = 0$. Consequently, we have
  $\lnorm{\solu - \grtrgen} \leq t$, which proves the claim.
\end{proof}
\begin{remark}\label{rmk:app:abstractrecovery:techniques}
	The above proof strategy is closely related to the statistical learning framework of Mendelson \cite{mendelson2014learning,mendelson2014learninggeneral}.
	It is based on the simple idea of decomposing the excess loss into a linear (multiplier) and a quadratic part, such as we did in \eqref{eq:app:abstractrecovery:localmw:decomp}. Using sophisticated tools from statistical learning and concentration of measure, one may now show that the quadratic term dominates the linear term except for a ``small'' neighborhood of $\grtrgen$. Hence, the excess loss becomes positive outside of this region, excluding all those vectors as potential minimizers.
	Such a localization argument is in fact widely used in estimation theory and dates back to classical works in geometric functional analysis and statistical learning (e.g., see \cite{milman1986banach,pajor1986gelfand,mendelson2002improving,bartlett2005local,mendelson2007subgaussian}).
	
	Apart from that, let us emphasize that the above analysis substantially improves the related approaches of \cite{plan2015lasso,genzel2016estimation}. These works do only focus on \emph{Gaussian} single-index models and handle the multiplier term by a more naive argument based on Markov's inequality (e.g., see \cite[Proof~of~Thm.~1.4]{plan2015lasso}), which eventually leads to a very pessimistic probability of success. Applying the sophisticated chaining-based result of Theorem~\ref{thm:app:abstractrecovery:empproc}\ref{thm:app:abstractrecovery:empproc:multiplier}, we are able to tackle these issues, while allowing for arbitrary observation rules and \emph{sub-Gaussian} measurements.
\end{remark}

The statement of Theorem~\ref{thm:proofs:abstractrecovery} is now a direct consequence of Theorem~\ref{thm:app:abstractrecovery:localmw}:
\begin{proof}[Proof of Theorem~\ref{thm:proofs:abstractrecovery}]
  Let us apply Theorem~\ref{thm:app:abstractrecovery:localmw} with
  \begin{equation}\label{eq:proofs:abstractrecovery:desiredaccuracy}
    t \coloneqq 2 C'' \cdot \Big( \subgparam \cdot \modeldev(\grtrgen) \cdot \frac{\meanwidth[1]{\cone{\sset}{\grtrgen}} + u}{\sqrt{m}} + \modelcovar(\grtrgen)\Big).
  \end{equation}
  Thus, we need to show that the conditions \eqref{eq:app:abstractrecovery:localmw:meas} and \eqref{eq:app:abstractrecovery:localmw:implbound} in Theorem~\ref{thm:app:abstractrecovery:localmw} are indeed satisfied for this specific choice of $t$.
  To see this, we first assume that $t > 0$ and observe that
  \begin{align}
    \tfrac{1}{t}\meanwidth[t]{\sset- \grtrgen} &= \tfrac{1}{t}\meanwidth[t]{(\sset- \grtrgen) \intersec t \S^{d-1}} \\
                                               &= \meanwidth{\tfrac{1}{t}(\sset- \grtrgen) \intersec \S^{d-1}} \\
                                               &\leq \meanwidth{\cone{\sset}{\grtrgen} \intersec \S^{d-1}} = \meanwidth[1]{\cone{\sset}{\grtrgen}}.
  \end{align}	
  Hence
  \begin{align}
    t &\geq 2 C'' \cdot \Big( \subgparam \cdot \modeldev(\grtrgen) \cdot \frac{\tfrac{1}{t}\meanwidth[t]{\sset - \grtrgen} + u}{\sqrt{m}} + \modelcovar(\grtrgen)\Big) \\
    &> C'' \cdot \Big( \subgparam \cdot \modeldev(\grtrgen) \cdot \frac{\tfrac{1}{t}\meanwidth[t]{\sset - \grtrgen} + u}{\sqrt{m}} + \modelcovar(\grtrgen)\Big),
  \end{align}
  and together with the assumption of \eqref{eq:proofs:abstractrecovery:meas} and
  $\delta \leq 1$, we conclude that \eqref{eq:app:abstractrecovery:localmw:meas} and \eqref{eq:app:abstractrecovery:localmw:implbound} are both fulfilled.
  Consequently, we obtain the error bound
  \begin{equation}
    \lnorm{\solu - \grtrgen} \leq t = 2 C'' \cdot \Big( \subgparam \cdot \modeldev(\grtrgen) \cdot \frac{\meanwidth[1]{\cone{\sset}{\grtrgen}} + u}{\sqrt{m}} + \modelcovar(\grtrgen)\Big)
  \end{equation}
  with probability at least $1 - 2 \exp(-C \cdot u^2) - 2 \exp(-C \cdot m) - \exp(-C \cdot \subgparam^{-4} \cdot m)$.
  
  In order to derive the actual statement of Theorem~\ref{thm:proofs:abstractrecovery}, we just set $u \coloneqq \subgparam^{-2} \cdot \delta \cdot \sqrt{m}$ and observe that
  \begin{equation}
    \subgparam \cdot \frac{\meanwidth[1]{\cone{\sset}{\grtrgen}}}{\sqrt{m}} \stackrel{\eqref{eq:proofs:abstractrecovery:meas}}{\leq} \tfrac{1}{\sqrt{C'}} \cdot \subgparam^{-1} \cdot \delta.
  \end{equation}
  The probability of success results from adjusting the constant $C$ and the basic fact that 
  \begin{align}
    \subgparam \geq \normsubg{\a} = \sup_{\x \in \S^{n-1}} \normsubg{\sp{\a}{\x}} &\stackrel{\eqref{eq:intro:notation:normsubg}}{\geq} \sup_{\x \in \S^{n-1}} 2^{-1/2} \mean[\abs{\sp{\a}{\x}}^2]^{1/2} \\ &\stackrel{\eqref{eq:intro:notation:intropic}}{=} 2^{-1/2} \sup_{\x \in \S^{n-1}} \lnorm{\x} = 2^{-1/2}.\label{eq:proofs:abstractrecovery:subgparamlower}
  \end{align}

	It remains to analyze the case of $\modeldev(\grtrgen) = \modelcovar(\grtrgen) = 0$, which is equivalent to $t = 0$ in \eqref{eq:proofs:abstractrecovery:desiredaccuracy}.
	Theorem~\ref{thm:app:abstractrecovery:localmw} is not applicable in this situation and we have to argue in a slightly different way.
	Indeed, if $\modeldev(\grtrgen) = \normsubg{\sp{\agen}{\grtrgen} - y} = 0$, one has $\sp{\agen}{\grtrgen} = y$ and $\multiplterm{\lossemp}{\x,\grtrgen} = 0$ for every $\x \in \R^d$. Repeating the argument of
  \eqref{eq:app:abstractrecovery:localmw:quadraticbound} for $t = 1$
  and $\ssetalt = \cone{\sset}{\grtrgen} \intersec \S^{d-1}$, we
  obtain
  \begin{equation}\label{eq:app:abstractrecovery:quadraticexact}
    \lossemp{}(\x) - \lossemp{}(\grtrgen) = \quadrterm{\lossemp}{\x,\grtrgen} = \tfrac{1}{2m} \sum_{i = 1}^m \abs{\sp{\a_i}{\x - \grtrgen}}^2 > 0
  \end{equation}
  for all
  $\x \in (\cone{\sset}{\grtrgen} \intersec \S^{d-1}) + \grtrgen$ with
  probability at least $1 - \exp(-C \cdot \subgparam^{-4} \cdot m)$.
  At the same time (event), if $\x \in \sset \setminus \{ \grtrgen \}$, we have
  $\tfrac{\x - \grtrgen}{\lnorm{\x - \grtrgen}} \in \cone{\sset}{\grtrgen}
  \intersec \S^{d-1}$ by the convexity of $\sset$.  Hence,
  \begin{equation}
    \lossemp{}(\x) - \lossemp{}(\grtrgen) = \quadrterm{\lossemp}{\x,\grtrgen} = \lnorm{\x - \grtrgen}^2 \cdot \tfrac{1}{2m} \sum_{i = 1}^m \abs{\sp{\a_i}{\tfrac{\x - \grtrgen}{\lnorm{\x - \grtrgen}}}}^2 \stackrel{\eqref{eq:app:abstractrecovery:quadraticexact}}{>} 0.
  \end{equation}
  This particularly implies that $\x$ cannot be minimizer of \eqref{eq:proofs:klasso}, and we can conclude that $\solu = \grtrgen$.
\end{proof}

\subsection{Proof of Proposition~\ref{prop:proofs:meanwidth} (Bounds on the Conic Mean Width)}
\label{subsec:app:conicmeanwidth}

We need the following upper bound on the conic mean width, which is a consequence of standard duality arguments (see also \cite[Sec.~4]{amelunxen2014edge} for more details):
\begin{lemma}[\protect{\cite[Prop.~3.6]{chandrasekaran2012geometry}}]\label{lem:app:conicmeanwidth:duality}
	Let $\norm{\cdot}$ be a norm on $\R^d$. For $\grtrgen \in \R^d \setminus \{ \vnull \}$, define the convex (descent) set $\sset \coloneqq \{\x \in \R^d \suchthat \norm{\x} \leq \norm{\grtrgen}\}$. Then, we have\footnote{Here, $\dist{\gaussian}{L} \coloneqq \inf_{\w \in L} \lnorm{\gaussian - \w}$ for a subset $L \subset \R^d$.}
	\begin{equation}
		\meanwidth[1]{\cone{\sset}{\grtrgen}}^2 \leq \mean[\inf_{\tau>0} \dist{\gaussian}{\tau\cdot\partial \norm{\grtrgen}}^2], \quad \gaussian \distributed \Normdistr{\vnull}{\I{d}},
	\end{equation}
	where $\partial \norm{\grtrgen} \subset \R^d$ denotes the \emph{subdifferential} of $\norm{\cdot}$ at $\grtrgen$:
	\begin{equation}\label{eq:app:conicmeanwidth:duality:subd}
		\partial\norm{\grtrgen} \coloneqq \{\w \in \R^d \suchthat \norm{\x} \geq \norm{\grtrgen}+ \sp{\w}{\x - \grtrgen} \text{ for all } \x \in \R^d \}.
	\end{equation}
\end{lemma}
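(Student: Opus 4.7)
My plan is to prove the lemma by combining three standard ingredients from convex analysis and Gaussian geometry, which I will assemble in the order (a) identify the polar of the descent cone, (b) bound the conic mean width by the distance of a Gaussian to that polar, (c) rewrite this distance as the stated infimum over $\tau>0$.

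First, set $\mathcal{D}\coloneqq\cone{\sset}{\grtrgen}$. Since $\grtrgen\neq\vnull$, the point $\grtrgen$ lies on the boundary of the sublevel set $\sset=\{\x\in\R^d\suchthat \norm{\x}\leq \norm{\grtrgen}\}$ and is not a minimizer of $\norm{\cdot}$. The key identification step is that the polar of the (closed) descent cone equals the conic hull of the subdifferential, i.e.
\begin{equation}
  \clos{\mathcal{D}}^{\,\circ} \;=\; \bigunion_{\tau\geq 0}\tau\cdot\partial\norm{\grtrgen}.
\end{equation}
The inclusion ``$\supseteq$'' is immediate from the subgradient inequality \eqref{eq:app:conicmeanwidth:duality:subd}: for any $\w\in\partial\norm{\grtrgen}$ and any $\h\in\mathcal{D}$, write $\h=\lambda(\x-\grtrgen)$ with $\x\in\sset$, $\lambda\geq 0$, and use $\norm{\x}\leq\norm{\grtrgen}$ to get $\sp{\w}{\h}\leq 0$. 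The reverse inclusion ``$\subseteq$'' follows from a standard separation/sublinearity argument exploiting that $\norm{\cdot}$ is positively homogeneous and $\grtrgen$ is a nonzero boundary point; this is where I expect the only real care to be needed.

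Second, for any nontrivial closed convex cone $\mathcal{C}\subset\R^d$ with polar $\mathcal{C}^{\circ}$, Moreau's decomposition gives $\lnorm{\proj_{\mathcal{C}}(\gaussian)}=\dist{\gaussian}{\mathcal{C}^\circ}$. Combined with the variational description
\begin{equation}
  \lnorm{\proj_{\mathcal{C}}(\gaussian)} \;=\; \sup_{\substack{\h\in\mathcal{C}\\ \lnorm{\h}\leq 1}}\sp{\gaussian}{\h} \;\geq\; \sup_{\h\in\mathcal{C}\cap\S^{d-1}}\sp{\gaussian}{\h},
\end{equation}
and Jensen's inequality applied to $(\cdot)^2$, this yields the Gordon-type bound
\begin{equation}
  \meanwidth[1]{\mathcal{C}}^{2} \;=\; \bigl(\mean[\sup_{\h\in\mathcal{C}\cap\S^{d-1}}\sp{\gaussian}{\h}]\bigr)^{2} \;\leq\; \mean[\dist{\gaussian}{\mathcal{C}^\circ}^{2}].
\end{equation}
Since $\cone{\sset}{\grtrgen}$ and its closure have the same polar and the same mean width (the Gaussian mean width is invariant under closure), I may apply this with $\mathcal{C}=\clos{\mathcal{D}}$.

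Third, using the identification from step one,
\begin{equation}
  \dist{\gaussian}{\clos{\mathcal{D}}^{\,\circ}} \;=\; \dist[\Big]{\gaussian}{\bigunion_{\tau\geq 0}\tau\cdot\partial\norm{\grtrgen}} \;=\; \inf_{\tau\geq 0}\dist{\gaussian}{\tau\cdot\partial\norm{\grtrgen}} \;=\; \inf_{\tau>0}\dist{\gaussian}{\tau\cdot\partial\norm{\grtrgen}},
\end{equation}
where the last equality uses that $\partial\norm{\grtrgen}$ is bounded (contained in the dual-norm unit ball), so the map $\tau\mapsto \dist{\gaussian}{\tau\cdot\partial\norm{\grtrgen}}$ is continuous at $0$ with limit $\lnorm{\gaussian}=\dist{\gaussian}{\{\vnull\}}$. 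Squaring, taking expectations, and chaining the two displayed inequalities above yields the asserted bound.

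The main (and only) delicate point is the polar-cone identification in the first step; the remaining arguments are essentially direct consequences of Moreau's theorem and a standard exchange of infimum with distance to a union. I would give a self-contained short proof of the polar identity and then cite or reproduce the Gordon/Jensen bound, making the overall proof a few lines long.
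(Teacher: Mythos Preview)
The paper does not actually prove this lemma; it simply cites it as \cite[Prop.~4.5]{tropp2014convex} and refers to \cite[Sec.~4]{amelunxen2014edge} for background, then immediately uses it in the proof of Proposition~\ref{prop:proofs:meanwidth}. Your proposal reconstructs exactly the standard argument underlying that citation: the polar--cone identity $\clos{\cone{\sset}{\grtrgen}}^{\circ}=\bigcup_{\tau\ge 0}\tau\,\partial\norm{\grtrgen}$, Moreau's decomposition to pass from the conic width to $\mean[\dist{\gaussian}{\mathcal{C}^\circ}^2]$, and the rewriting of the distance as an infimum over $\tau>0$. This is correct and is precisely the route taken in the cited reference.

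Two very minor remarks. First, the ``delicate'' reverse inclusion is in fact short: if $\sp{v}{\x-\grtrgen}\le 0$ for all $\x$ with $\norm{\x}\le\norm{\grtrgen}$, maximizing over such $\x$ forces $\sp{v}{\grtrgen}=\norm{v}^{\circ}\norm{\grtrgen}$, whence $v/\norm{v}^{\circ}\in\partial\norm{\grtrgen}$ whenever $v\neq\vnull$. Second, in your step~(b) it is cleanest to first take expectations of the pointwise inequality $\sup_{\h\in\mathcal{C}\cap\S^{d-1}}\sp{\gaussian}{\h}\le\lnorm{\proj_{\mathcal{C}}(\gaussian)}$ and only then apply Jensen to the (nonnegative) right-hand side; applying Jensen directly to the possibly negative supremum and then squaring can fail pointwise when $\gaussian\in\mathcal{C}^{\circ}$.
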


\begin{proof}[Proof of Proposition~\ref{prop:proofs:meanwidth}]
	It suffices to prove part \ref{prop:proofs:meanwidth:lifting}, since the first one follows from the special case of $M = 1$.
	We would like to apply Lemma~\ref{lem:app:conicmeanwidth:duality} for $\grtrgen = \grtr \scalvec^\T \in \R^{n \times M}$ and $\norm{\cdot} = \lnorm{\cdot}[1,2]$, where the matrix space $\R^{n \times M}$ is canonically identified with $\R^{n \cdot M}$ ($d = n \cdot M$).
	Note that we can also assume without loss of generality that $\scalvec \neq \vnull$ because the claim would be trivial otherwise.
	In order to determine $\partial\norm{\grtrgen}$, we make use of the following identity that holds true for every norm (cf. \cite[Ex.~VI.3.1]{hiriart1993convex}):
	\begin{equation}
		\partial\norm{\grtrgen} = \{ \w \in \R^d \suchthat \norm{\w}^{\circ} \leq 1 \text{ and } \sp{\w}{\grtrgen} = \norm{\grtrgen} \},
	\end{equation}
	where $\norm{\cdot}^{\circ}$ is the dual norm of $\norm{\cdot}$. For our specific situation, we have $\lnorm{\cdot}[1,2]^{\circ} = \lnorm{\cdot}[\infty,2]$ with
	\begin{equation}
		\lnorm{\W}[\infty,2] \coloneqq \max_{1 \leq k \leq n} \lnorm{\w_k},
	\end{equation}
	as $\w_1, \dots, \w_n \in \R^M$ denote the rows of $\W \in \R^{n \times M}$.
	Now, let $\suppx \subset [n]$ be the support of $\grtr = (x_{0,1},\dots, x_{0,n}) \in \R^n$.
	Then, $\W \in \partial\lnorm{\grtr\scalvec^\T}[1,2]$ if and only if
	\begin{equation}\label{eq:app:meanwidth:dualcondition}
		\lnorm{\grtr\scalvec^\T}[1,2] = \sp{\W}{\grtr\scalvec^\T} = \sp{\W\scalvec}{\grtr} = \sum_{k \in \suppx} \sp{\w_k}{\scalvec} \cdot x_{0,k},
	\end{equation}
	and $\lnorm{\w_k} \leq 1$ for all $k \in [n]$.
	Since $\lnorm{\grtr\scalvec^\T}[1,2] = \lnorm{\grtr}[1] \cdot \lnorm{\scalvec} = \lnorm{\scalvec} \cdot \sum_{k \in \suppx} \abs{x_{0,k}}$ and $\lnorm{\w_k} \leq 1$, the condition \eqref{eq:app:meanwidth:dualcondition} can be only satisfied if
	\begin{equation}
		\w_k = \w_k' \coloneqq \sign(x_{0,k}) \cdot \frac{\scalvec}{\lnorm{\scalvec}} \in \S^{M-1} \quad \text{for all $k \in \suppx$.}
	\end{equation}
	Hence, we conclude that
	\begin{equation}
		\partial\lnorm{\grtr\scalvec^\T}[1,2] = \{ \W \in \R^{n \times M} \suchthat \w_k = \w_k' \text{ for $k \in \suppx$ and } \lnorm{\w_k} \leq 1 \text{ for $k \not\in \suppx$}  \}.
	\end{equation}
	
	With this, we achieve the following bound from Lemma~\ref{lem:app:conicmeanwidth:duality} and Jensen's inequality (here, $\gaussian_1, \dots, \gaussian_n \distributed \Normdistr{\vnull}{\I{M}}$ denote the rows of $\vec{G} \distributed \Normdistr{\vnull}{\I{n \cdot M}}$):
	\begin{align}
		\meanwidth[1]{\cone{\sset}{\grtr\scalvec^\T}}^2 &\leq \mean[\inf_{\tau>0} \dist{\vec{G}}{\tau\cdot\partial\lnorm{\grtr\scalvec^\T}[1,2]}^2] \\
		&\leq \inf_{\tau>0} \mean[ \dist{\vec{G}}{\tau\cdot\partial\lnorm{\grtr\scalvec^\T}[1,2]}^2] \\
		&= \inf_{\tau>0} \mean{} \Big[  \sum_{k \in \suppx} \lnorm{\gaussian_k - \tau \w_k'}^2 + \inf_{\substack{\lnorm{\w_k} \leq 1 \\ k \not\in \suppx}} \sum_{k \not\in \suppx} \lnorm{\gaussian_k - \tau \w_k}^2 \Big] \\
		&= \inf_{\tau>0} \Big(  \sum_{k \in \suppx} ( \underbrace{\mean[\lnorm{\gaussian_k}^2]}_{= M} + \tau^2 \underbrace{\lnorm{\w_k'}^2}_{= 1}) + \mean{} \Big[\sum_{k \not\in \suppx} \pospart{\lnorm{\gaussian_k} - \tau}^2 \Big] \Big) \\
		&= \inf_{\tau>0} \Big( s \cdot (M + \tau^2) + (n - s) \cdot \mean{}\big[ \pospart{\lnorm{\gaussian_1} - \tau}^2 \big] \Big), \label{eq:app:meanwidth:dualitybound}
	\end{align}
	where $\pospart{v} \coloneqq \max\{0, v\}$ for $v \in \R$ and $s = \cardinality{\suppx}$.
	Next, we would like to find an upper bound on the remaining expected value:
	\begin{equation}
		\mean{}\big[ \pospart{\lnorm{\gaussian_1} - \tau}^2 \big] = \int_{\tau}^\infty (t - \tau)^2 \cdot \prob[\lnorm{\gaussian_1} > t] dt.
	\end{equation}
	For this purpose, let us apply the following tail bound for $\chi^2$-distributions from \cite[Lem.~1]{laurent2000chisquare}:
	\begin{equation}
		\prob[\lnorm{\gaussian_1}^2 > M + 2\sqrt{M a} + 2 a] \leq \exp(- a), \quad a \geq 0.
	\end{equation}
	Setting $a = (t - \sqrt{2M})^2 / 2$ for $t \geq \sqrt{2M}$, we obtain
	\begin{align}
		\prob[\lnorm{\gaussian_1} > t] &= \prob[\lnorm{\gaussian_1}^2 > (\sqrt{2a} + \sqrt{2M})^2] \\
		&\leq \prob[\lnorm{\gaussian_1}^2 > M + 2\sqrt{M a} + 2 a] \\
		& \leq \exp(- a) = \exp(- \tfrac{(t - \sqrt{2M})^2}{2}).
	\end{align}
	Hence, as long as $\tau \geq \sqrt{2M}$, one has
	\begin{align}
		\mean{}\big[ \pospart{\lnorm{\gaussian_1} - \tau}^2 \big] &\leq \int_{\tau}^\infty (t - \tau)^2 \cdot \exp(- \tfrac{(t - \sqrt{2M})^2}{2}) dt \\
		&= \int_{0}^\infty t^2 \cdot \exp(- \tfrac{(t + \tau - \sqrt{2M})^2}{2}) dt \\
		&= \exp(- \tfrac{(\tau - \sqrt{2M})^2}{2}) \cdot \int_{0}^\infty t^2 \cdot \underbrace{\exp(- \tfrac{t^2 + 2 t (\tau - \sqrt{2M})}{2})}_{\leq \exp(-t^2 / 2)} dt \\
		&\leq \exp(- \tfrac{(\tau - \sqrt{2M})^2}{2}) \cdot \int_{0}^\infty t^2 \cdot \exp(- \tfrac{t^2}{2}) dt \lesssim \exp(- \tfrac{(\tau - \sqrt{2M})^2}{2}).
	\end{align}
	Finally, we continue in \eqref{eq:app:meanwidth:dualitybound} by fixing $\tau = \tilde{\tau} \coloneqq \sqrt{2M} + \sqrt{2 \log(2n / s)}$:
	\begin{align}
		\meanwidth[1]{\cone{\sset}{\grtr\scalvec^\T}}^2 &\leq s \cdot (M + \tilde{\tau}^2) + (n - s) \cdot \mean{}\big[ \pospart{\lnorm{\gaussian_1} - \tilde{\tau}}^2 \big] \\
		&\lesssim s \cdot (M + \tilde{\tau}^2) + (n - s) \cdot \exp(- \tfrac{(\tilde\tau - \sqrt{2M})^2}{2}) \\
		&= s \cdot M + s \cdot \underbrace{\Big(\sqrt{2M} + \sqrt{2 \log(2n / s)}\Big)^2}_{\lesssim \max\{M, \log(\tfrac{2n}{s})\}} + \underbrace{(n - s) \cdot \frac{s}{2n}}_{\leq s / 2} \\
		&\lesssim s \cdot \max\{M, \log(\tfrac{2n}{s})\},
	\end{align}
	which proves the claim.
\end{proof}

\section*{Acknowledgements}
The authors thank Axel Flinth and Gitta Kutyniok for fruitful discussions.
M.G. is supported by the DEDALE project, contract no. 665044, within the H2020 Framework Program of the European Commission.
P.J. is partially supported by DFG grant JU 2795/3.
\revision{Finally, the authors would like to thank the anonymous referees for their useful comments and suggestions which have helped to improve the original manuscript.}

\renewcommand*{\bibfont}{\small}
\begin{refcontext}[sorting=nyt]
	\printbibliography
\end{refcontext}

\newpage
\listoftodos

\end{document}